\newcommand{\FFF}{\vspace*{\bigskipamount}}
\newcommand{\BBB}{\vspace*{-\bigskipamount}}
\newcommand{\cA}{\mathcal{A}}
\newcommand{\cE}{\mathcal{E}}
\newcommand{\cI}{\mathcal{I}}
\newcommand{\cO}{\mathcal{O}}
\newcommand{\cP}{\mathcal{P}}
\newcommand{\cS}{\mathcal{S}}
\newcommand{\cU}{\mathcal{U}}
\newcommand{\cW}{\mathcal{W}}
\newcommand{\Paragraph}[1]{\BBB\paragraph{#1}}
\newcommand{\remove}[1]{}
\newlength{\pagewidth}
\newlength{\captionwidth}
\newcommand{\qed}{\hfill $\square$ \smallbreak}
\newenvironment{proof}{\noindent{\bf Proof:}}{\qed}
\newtheorem{theorem}{Theorem}
\newtheorem{lemma}{Lemma}
\newtheorem{corollary}{Corollary}
\newtheorem{proposition}{Proposition}
\begin{document}

\baselineskip   	3ex
\parskip        	1ex

\thispagestyle{empty}

\title{		Maximum Throughput of Multiple Access Channels\\
		in Adversarial Environments \footnotemark[1]\FFF\FFF}

\author{	Bogdan S. Chlebus  	\footnotemark[2]   	\and
		Dariusz R. Kowalski 	\footnotemark[3]	\and
		Mariusz A. Rokicki 	\footnotemark[4]}

\footnotetext[1]{
This paper was published as~\cite{ChlebusKR-DC09}.
The results presented here appeared in a preliminary form in~\cite{ChlebusKR-SSS07} and also partly in~\cite{ChlebusKR-PODC06}.
}

\footnotetext[2]{
               	Department of Computer Science and Engineering,
               	University of Colorado Denver,
               	Denver, CO 80217, U.S.A.
		Work supported by the National Science Foundation under Grant No. 0310503.}

\footnotetext[3]{
               	Department of Computer Science,
               	University of Liverpool,
               	Liverpool L69 3BX, U.K.
		Work supported by the Engineering and Physical Sciences Research Council Grant EP/G023018/1.}

\footnotetext[4]{
		Work done while the author was a doctoral student at the University of Colorado Denver, supported by the National Science Foundation under Grant No. 0310503, and next a post-doctoral fellow at the Centre National de la Recherche Scientifique, Universit\' e Paris Sud.}

\date{}

\maketitle

\vfill


\begin{abstract}
We consider deterministic distributed broadcasting on multiple access channels in the framework of adversarial queuing.
Packets are injected dynamically by an adversary that is constrained by the injection rate and the number of packets that may be injected simultaneously; the latter we call burstiness.
An algorithm is stable when the number of packets in queues at the stations stays bounded.
The maximum injection rate that an algorithm can handle in a stable manner is called the throughput of the algorithm.
We consider adversaries of injection rate~$1$, that is, of one packet per round, to address the question if the maximum throughput~$1$ can be achieved, and if so then with what quality of service.
We develop an algorithm that achieves throughput~$1$ for any number of stations against leaky-bucket adversaries.
The algorithm has $\cO(n^2+\text{burstiness})$ packets queued simultaneously at any time, where $n$ is the number of stations; this upper bound is proved to be best possible.
An algorithm is called fair when each packet is eventually broadcast.
We show that no algorithm can be both stable \emph{and} fair for a system of at least two stations against leaky-bucket adversaries.
We study in detail small systems of exactly two and three stations against window adversaries to exhibit differences in quality of broadcast among classes of algorithms.
An algorithm is said to have fair latency if the waiting time of packets is $\cO(\text{burstiness})$.
For two stations, we show that fair latency can be achieved by a full sensing algorithm, while there is no stable acknowledgment based algorithm.
For three stations, we show that fair latency can be achieved by a general algorithm, while no full sensing algorithm can be stable.
Finally, we show that algorithms that either are fair or do not have the queue sizes affect the order of transmissions cannot be stable in systems of at least four stations against window adversaries.

\vfill

\noindent
\textsf{Keywords:}
multiple access channel,
adversarial queuing,
distributed broadcasting,
deterministic algorithm,
stability,
throughput,
packet latency.
\end{abstract}

\vfill

\thispagestyle{empty}

\setcounter{page}{0}


\newpage

\section{Introduction}

\label{sec:introduction}

Multiple access channels model distributed communication environments supporting instantaneous broadcasting.
The properties of a system consisting of a number of stations attached to a transmission medium that make it a multiple access channel are twofold:
1) a packet transmitted by a station reaches all the stations instantaneously; and
2) a packet is successfully received if its transmission does not overlap with any other transmissions.

We restrict attention to synchronous ``slotted'' model in which stations use local clocks ticking at the same rate and indicating the same round numbers.
A station transmits at a round determined by its clock, with a transmission filling the whole round.
This means that if at least two stations transmit at a round, then no messages are received at this round, otherwise the sole transmission reaches every station instantaneously.

We consider deterministic distributed broadcasting on multiple access channels in the framework of adversarial queuing.
Packets are injected dynamically by an adversary that is constrained by two parameters: the injection rate  and the number of packets that can be injected simultaneously, which we call burstiness.
Adversaries have injection rate~$1$, that is, of up to one packet per round.
An algorithm is stable when the number of packets stays bounded at all times.
The maximum injection rate that an algorithm can handle in a stable manner is called the throughput of the algorithm.
We address the question if maximum throughput~$1$ can be achieved, and if so then what can be the quality of service.

Fairness of an algorithm denotes the property that each packet is eventually successfully broadcast.
We say that an algorithm is of fair latency, for injection rate~$1$, when the packet latency is $\cO(\text{burstiness})$.
We study the issue of the quality of service with throughput~$1$ depending on the following parameters: the class of adversaries, the subclass of algorithms, and the number~$n$ of stations attached to the channel.
We consider two standard adversarial models: window adversaries and leaky-bucket adversaries, all adversaries with injection rate~$1$ only.
We study two subclasses of algorithms: acknowledgment based and full sensing.

For leaky-bucket adversaries, achieving throughput~$1$ \emph{and} fairness is impossible, except for the trivial case of a single station.
Stability alone, with respect to leaky-bucket adversaries, is not achievable by full sensing algorithms but is achievable by general algorithms.
A stable algorithm of throughput~$1$ we develop has stations store $\cO(n^2+\text{burstiness})$ packets in queues; this upper bound is shown to be asymptotically best possible.

For window adversaries, the situation is more complex.
If there are only two stations, then fair latency is achievable by full sensing algorithms, but no acknowledgment based algorithm can be stable.
For three stations, fair latency is achievable, but no algorithm that is either full sensing nor that withholds the channel for a sequence of exclusive broadcasts after a successful transmission can be stable against leaky-bucket adversaries.
We show that algorithms that are stable against window adversaries in systems of at least four stations should be sufficiently liberal in their design.
In particular, it is impossible for an algorithm to be stable when it simultaneously is either fair or does not have stations use the queue sizes in an effective way or finally has stations behave greedily by withholding the channel after a successful transmission.


\Paragraph{Related work.}

Most of the previous work on dynamic broadcasting on multiple access channels has concentrated on scenarios when packets are injected subject to statistical constraints.
When a broadcast environment is of such nature, the behavior of the system can be modeled as a Markov chain and stability can be captured by ergodicity,  see~\cite{MitzenmacherUpfal-book17} for a probabilistic background.
The well known algorithms like Aloha~\cite{Abramson-TIT85} and binary exponential backoff~\cite{MetcalfeB76} have been investigated with respect to their ability to handle broadcast with stochastic injection rates; Gallager~\cite{Gallager85} gives an overview of the early research in this direction.
For recent work, see the papers by Goldberg et al.~\cite{GoldbergJKP04,GoldbergMPS-JACM00}, H\aa stad et al.~\cite{HastadLR-SICOMP96}, and Raghavan and Upfal~\cite{RaghavanU-SICOMP98}.

Adversarial queuing  was proposed by Borodin et al.~\cite{BorodinKRSW-JACM01} as an approach to study stability of contention-resolution algorithms in store-and-forward routing.
They showed, among other things, that a directed acyclic network is stable with injection rate~1, for any greedy contention-resolution algorithm.
Universal stability of an algorithm denotes stability in any network, and universal stability of a network denotes stability of an arbitrary algorithm in the network, both under injections at a constant rate less than~$1$.
These notions were introduced by Andrews et al.~\cite{AndrewsAFLLK-JACM01}; they were later studied by Gamarnick~\cite{Gamarnik-SICOMP03} and Alvarez et al.~\cite{AlvarezBS-SICOMP04}.
Bhattacharjee et al.~\cite{BhattacharjeeGL-SICOMP04} showed that the natural FIFO algorithm can be unstable at arbitrarily low injection rates.
Lotker et al.~\cite{LotkerPR04} showed that every work-preserving contention-resolution algorithm is stable if injection rate is less than $1/(D+1)$, where $D$ is an upper bound on the length of any path that a packet needs to traverse.
Koukopoulos et al.~\cite{KoukopoulosMNS-TCSy05}  addressed the question of how structural properties of networks affect stability of contention-resolution algorithms.
Adaptive algorithms have packets carry only their destination addresses, rather than complete routing paths; the stability of such algorithms was considered by Aiello et al.~\cite{AielloKOR-JCSS00}.

Bender et al.~\cite{BenderFHKL-SPAA05} studied stability of randomized backoff on multiple access channels in an adversarial setting, where stability meant that throughput was as large as injection rate.
They showed, among other things, that the exponential backoff is unstable for rates $\rho\ge c \lg\lg n/\lg n$, for a sufficiently large constant~$c$.
Stability of deterministic broadcast algorithms for multiple access channels in the framework of adversarial queueing was first considered by Chlebus et al.~\cite{ChlebusKR-TALG12}.
They defined fair latency to hold when the packet latency was $\cO(\textrm{burstiness}/\text{rate})$.
Fair latency implies strong stability, which holds when the number of queued packets is of the order of burstiness.
They showed that no algorithm can be strongly stable for injection rates that are $\omega(\frac{1}{\log n})$ and gave a full sensing algorithm for a channel with collision detection that is  both universally stable and of fair latency for injection rates at most $\frac{1}{2(\lceil\lg n\rceil+1)}$.
For a channel without collision detection, they developed a full sensing algorithm that is both universally stable and of fair latency for injection rates at most $\frac{1}{c \lg^2 n}$,  for some $c>0$.
They showed the existence of an acknowledgment based algorithm that has fair latency for injection rates at most $\frac{1}{c n\lg^2 n}$, for some $c>0$, and developed an explicit acknowledgment based algorithm that has fair latency for injection rates at most $\frac{1}{27 n^2\ln n}$.
Finally, they showed that no acknowledgement-based algorithm is stable for injection rates larger than $\frac{3}{1+\lg n}$.

Algorithmic problems not regarding pure communication issues in distributed environments relying on multiple access channels have been also studied in the literature.
Such work included broadcasting spanning forests when the edges of an input graph are stored in the stations~\cite{ChlebusGK-TCS03} and performing a set of independent unit-cost tasks~\cite{ChlebusKL-DC06}.


\newcommand{\RB}{\raisebox{3ex}{~}}
\newcommand{\LB}{\raisebox{-1.5ex}{~}}

\begin{table}
\begin{center}
\begin{tabular}{|c ||c |c |c |}
\hline
\RB \LB
& $\mathsf{n=2}$ & $\mathsf{n=3}$ & $\mathsf{n \ge 4}$ \\
\hline\hline
\RB \LB
\textsf{ack based} & stable : impossible && \\
\hline
\RB \LB
\textsf{full sensing} &fair latency : possible & stable : impossible&\\
\hline
\RB
\textsf{\raisebox{-1.9ex}{general}} && \textrm{\raisebox{-1.9ex}{fair latency : possible}}
& stable : possible \\
&&& \textrm{\raisebox{0.9ex}{stable and fair : impossible}}\\
\hline
\end{tabular}
\parbox{\pagewidth}{
~
\caption{\label{tab:window-adversaries}
Window adversaries: some possibility and impossibility facts regarding the quality of service of algorithms, depending on a number~$n$ of stations and a subclass of algorithms.
}}
\end{center}
\end{table}


\Paragraph{Summary of the results.}

We have arranged  the main possibility and  impossibility facts in two Tables~\ref{tab:window-adversaries} and~\ref{tab:leaky-bucket-adversaries}, which correspond to the two adversarial models.
The categorization of these facts is with respect to two subclasses of general algorithms: acknowledgment based and full sensing ones, and the number~$n$ of stations in a system.
The entries that are not empty represent theorems in this paper.

Questions regarding existence of algorithms of suitable quality in environments represented by  empty entries in Tables~\ref{tab:window-adversaries} and~\ref{tab:leaky-bucket-adversaries} can be settled by inferring answers from the non-empty  entries of the tables.
The following implications about dependencies among the entries of the two tables hold true a fortiori:
Impossibility for window  adversaries implies the same result for leaky-bucket ones.
An existence of an algorithm of some quality of service for leaky-bucket adversaries implies the  same existence for window ones.
Impossibility for $n$~stations implies the same result for more than $n$~stations.
An existence of an algorithm of some quality of service for $n$~stations implies the same existence for less than $n$~stations.
An existence of an acknowledgement-based algorithm with some properties implies the existence of a full sensing one with the same properties, and existence of a full sensing  algorithm with some properties implies existence of a general one with the same properties.
An impossibility result for general algorithms implies the corresponding result for full sensing ones, and impossibility result for full sensing algorithms implies the corresponding result for acknowledgement-based ones.

Our main positive contribution is discovery of a stable algorithm that maintains $\cO(n^2+\text{burstiness})$ packets in queues at any round against leaky-bucket adversaries, for any number~$n$ of stations.
We next show that any broadcast algorithm for a system of $n$ stations can be forced by a leaky-bucket adversary of burstiness~$2$ to eventually  have $\Omega(n^2)$ packets in queues, thus proving the optimality of our algorithm with respect to the asymptotic number of queued packets.
Except for the impossibility facts given in Tables~\ref{tab:window-adversaries} and~\ref{tab:leaky-bucket-adversaries}, we show other ones.
We prove that no algorithm that withholds the channel can be stable for even three stations against leaky-bucket adversaries.
The other impossibility facts hold for algorithms we call ``retaining'' and ``queue-size oblivious,'' they are defined in Section~\ref{sec:technical}.
We show that no retaining algorithm can be stable in a system of at least four stations against the window adversary of burstiness~$2$, which implies that no queue-size oblivious algorithm is stable in such environments.
By way of preparing a background for the main results of this paper, we show two preliminary facts: one is that centralized algorithms can achieve fair latency in systems of arbitrary size, the other is that fairness alone can be achieved by an acknowledgment based algorithm.

The possibility-type contributions of this paper hold for channels without collision detection while the impossibility results hold for channels with collision detection, unless indicated otherwise.


\begin{table}
\begin{center}
\begin{tabular}{|c ||c |c |}
\hline
\RB \LB
& $\mathsf{n=1}$ & $\mathsf{n\ge 2}$  \\
\hline\hline
\RB \LB
\textsf{ack based} & fair latency : possible & \\
\hline
\RB \LB
\textsf{full sensing} && stable : impossible\\
\hline
\RB
\textsf{\raisebox{-1.9ex}{general}} && stable : possible\\
&& \textrm{\raisebox{0.9ex}{stable and fair : impossible}}\\
\hline
\end{tabular}
\parbox{\pagewidth}{
~
\caption{\label{tab:leaky-bucket-adversaries}
Leaky-bucket adversaries: some possibility and impossibility facts regarding the quality of service of algorithms,  depending on a number $n$ of stations and a subclass of algorithms.
}}
\end{center}
\end{table}


\Paragraph{Structure of the document.}

This paper is organized as follows.
We review the methodology and technical notions in Section~\ref{sec:technical}.
The remaining three sections correspond to sizes of systems.
The smallest system of two stations is considered in Section~\ref{sec:two}.
The system of three stations is discussed in Section~\ref{sec:three}.
Finally, arbitrarily large systems are considered in Section~\ref{sec:four}.
We conclude in Section~\ref{sec:conclusion}.

\section{Technical Preliminaries}

\label{sec:technical}

A multiple access channel is a broadcast system with specific properties that we discuss in this section.
We also define adversarial models, broadcast algorithms and their subclasses, and measures of quality of service.
We use the letter~$n$ to denote the number of stations attached to a communication medium.
Each of the $n$ stations has a unique name assigned to it.
We assume that each station's name is an integer in the range $[1,n]$.
We also assume that every station knows~$n$, in the sense that $n$ can be a part of code of  a broadcast algorithm.
We use the letters $p$, $q$, $r$, and~$s$ for stations.
More precisely, when there are at least two stations in a system, then some two of them are denoted by~$p$ and~$q$; a third station, if available, is referred to as~$r$; and a fourth one, if available, if denoted by~$s$.


\Paragraph{Multiple access channel.}

What makes a broadcast system \emph{multiple access channel} is the property that a transmission by a station is instantaneously and successfully received by all the stations if and only if the transmission does not overlap with transmissions by other stations.
A message successfully broadcast is said to be \emph{heard} on the channel.

We consider a synchronous channel in which executions of algorithms are structured as sequences of events occurring at consecutive rounds so that overlapping transmission occur at the same round.
Each station is equipped with a clock.
The clocks are synchronized so that clock cycles begin simultaneously and are of the same duration.
A \emph{round} is defined to be the minimum number of clock cycles needed to transmit a message, with the local computation at a station considered to be of negligible duration.
The length of a round is the same across all the stations.
This allows to consider time as ``slotted'' into rounds so that every station  performs a transmission to fit in a round.

Multiple transmissions at the same round result in conflict for access to the channel, which is called a \emph{collision}.
When no stations transmit at a round, then the feedback that the stations receive from the channel is called \emph{silence}; we may also say about such a round that the channel or the round \emph{is silent}.
A channel may be equipped with a \emph{collision detection} mechanism, which makes the stations able to distinguish between silence and collision at a round.
If no collision detection mechanism is available, then stations perceive collisions as silences.


\Paragraph{Adversaries.}

We consider the worst-case performance of algorithms that handle traffic determined by an adversarial setting.
The \emph{rate of packet injection} of an adversary means an upper bound on the average number of packets injected into all the stations.
The maximum number of packets that an adversary may inject into all the stations at a round is called the  \emph{burstiness}.
An adversary is defined by these two parameters: injection rate and burstiness.
An injection rate can be defined in various ways depending on the class of time segments  over which we average.
We consider two kinds of adversaries: window adversaries and leaky-bucket ones.
In the context of adversarial queuing, window adversaries were first used by Borodin, et al.~\cite{BorodinKRSW-JACM01} and leaky-bucket adversaries by Andrews, et al.~\cite{AndrewsAFLLK-JACM01}.

We will use the letter $\rho$ to denote injection rate; we require the inequalities $0<\rho\le 1$ to hold.
Let $\rho$ be an injection rate and $w$ a positive integer:
the \emph{window adversary of type $(\rho,w)$} may inject at most $\rho w$ packets in each contiguous segment of $w$ rounds into any set of stations; the number~$w$ is called the \emph{window size} in such a context.
Let $\rho$ be an injection rate and $b$ a non-negative integer:
the \emph{leaky-bucket adversary of type $(\rho,b)$} may inject at most $\rho t+b$ packets in every contiguous segment of $t>0$ rounds into any set of stations.
An adversary is said to be \emph{of injection rate $\rho$} when it is either of window type $(\rho,w)$ or of leaky-bucket-type $(\rho,b)$, for some $w$ and~$b$, respectively.
The window adversary of type $(\rho,w)$ has burstiness $\lfloor \rho w\rfloor$.
The leaky-bucket adversary of type $(\rho,b)$ has burstiness $\lfloor \rho+b\rfloor$.

Observe that injection rate larger than~$1$ would allow the adversary to make the number of packets queued at stations grow unbounded, as at most one packet can be heard  per round.
In this paper, we consider only adversaries of injection rates exactly~$1$.
Such adversaries differ among themselves by their burstiness, which is either the window size~$w$ for a window adversary, or the number $b+1$ for a leaky-bucket adversary.
The models of window adversaries and leaky-bucket ones are equivalent for injection rates strictly less than~$1$, while the leaky-bucket adversary of injection rate~$1$ can generate sequences of injections not captured by any window adversary of rate~$1$, as was showed by Ros\' en~\cite{Rosen-IPL02}.
It follows that a possibility-type of a result for leaky-bucket adversaries holds automatically for window adversaries, while an impossibility result showed for window adversaries holds automatically for leaky-bucket adversaries.


\Paragraph{Distributed deterministic broadcasting.}

We consider broadcasting algorithms for a system of stations attached to a multiple access channel.
The algorithms are restricted to be both distributed and deterministic.
Further requirements for broadcast algorithms may stem from additional assumptions about channels.
For instance, when a channel is equipped with a collision detection mechanism then algorithms need to be able to react differently to collisions as opposed to silent rounds.
The ``slotted'' model of full synchrony does not impose any restrictions on specifications of algorithms: stations simply act driven by their local clocks.
On the other hand, the quality of broadcasting, for instance as measured by the maximum throughput achievable for a given number of stations, may be affected by the level of synchrony among the clocks at stations.

Any definition of algorithms needs to reflect general expectations of what algorithms are to accomplish.
We simply want algorithms to facilitate hearing packets on a channel, in a distributed and deterministic manner, while packets are injected into stations.
There are numerous tacit assumptions in such expectations, let us next discuss the key ones.

We want a station to have private memory to store incoming packets while waiting for access to the channel, as there may be a contention for access while packets are injected.
Such memory is called a \emph{buffer} or a \emph{queue}.
The following operations on a queue are always available: enqueuing a packet and dequeuing the queue.
An algorithm may use other operations on the queues, for instance to verify if the queue is empty or obtain the number of packets in the queue.
As the name ``queue'' suggests, a queuing discipline is used to prioritize packets in the buffer.
Queuing disciplines do not affect some measures of quality of service, like the maximum number of packets in queue, but may affect others, for instance, the time spent by packets in queues.
The first-in-first-out (FIFO) queuing organization appears to be most natural to optimize for packet latency; we assume that algorithms use FIFO queues to store packets.

The  packet a station is processing to have it transmitted first among all its packets is called \emph{pending}.
Once a station starts processing a packet we want the station to continue working on this very packet until a successful transmission, rather than possibly abandon the packet temporarily and switch to some other packet, as if expecting luck to be associated with packets.
We do not want a pending packet to be discarded without a prior successful transmission and we want each packet to be heard on the channel precisely once rather than multiple times.
After a pending packet has been heard, we do not care what happens to it, but it is natural to expect that such a packet is discarded to economize on space.
When a pending packet is discarded and the queue is non-empty, the next pending packet is obtained by dequeuing the queue.
It is possible for a station to have a pending packet and an empty queue, as the pending packet was obtained by dequeuing the queue so it is stored outside of the queue.
All of the above expectations have been incorporated in the definition of deterministic distributed broadcast algorithms used in this paper.


\Paragraph{Algorithms as automata.}

It is advantageous to formally model broadcast algorithms as automata, in a way that is now standard in representing distributed systems~\cite{Lynch-book96}.
The notion of automaton is especially helpful in formalizing impossibility proofs.
Our presentation of algorithms is above the level of automaton specification: we describe algorithms precisely enough to convince a reader that a full specification as an automaton is a matter of filling in details.
 
A \emph{state} of a station executing an algorithm is determined by the values of its private variables specified in a code of the algorithm and the number of packets that have been injected and still need to be transmitted.
One read-only variable is reserved to store the name of a station, while another one is used to store the number~$n$ of stations in the system.
There is an \emph{initial state} for each station in which the station starts an execution, the queue is empty at such a state.

A \emph{message} sent to the channel consists of a transmitted packet and possibly additional control bits.
The contents of transmitted packets do not affect state transitions, in the sense that every packet is treated as an abstract token, while control bits may affect state transitions.
A round during which no packet is heard as transmitted on the channel is said to be \emph{void}.
It is merely either a silence or a detection of collision or a message consisting of only control bits that is obtained by a station from the channel during a void round.

An \emph{execution} of an algorithm is a sequence of events occurring at consecutive rounds.
An execution starts with all the stations in their initial states.

An \emph{event} at a round is represented by the following sequence of actions at each station:
\begin{enumerate}

\item[(i)]  The station either transmits a message or pauses, accordingly to its state.

\item[(ii)] The station receives a feedback from the channel in the form of either hearing a message or collision signal or silence.

\item[(iii)]  New packets are injected into the station. 

\item[(iv)] 
A state transition occurs at the station.
\end{enumerate}

A state transition referred to at~(iv) above is based on the state at the end of the previous round, the feedback from the channel at this round, and the packets injected at this round.
It includes the following operations.
If new packets have been injected at this round then they are all enqueued.
If the station has just transmitted successfully at this round, then the transmitted packet is discarded and new pending packet is obtained by dequeuing the queue, which produces a packet that has been waiting longest in the queue, unless the queue is empty.
A message for the next round is prepared, if any will be attempted to be transmitted.

Such a general representation of algorithms as automata requires stations to listen to the channel at all rounds.
Some weak algorithms we consider, namely acknowledgment based ones, can be interpreted as having a station ``ignore'' the feedback from the channel when the station has no packets to transmit.
The modeling of algorithms by automata we assume does not exclude such algorithms, because ``ignoring'' the feedback from the channel can be represented by cycling in the same state.
Listening to the channel at all rounds has often been referred in the literature as ``full sensing'' to indicate that stations without packets do not ignore the feedback from the channel and undergo state transitions at every round.


\Paragraph{Properties of algorithms.}

When a general broadcast algorithm is executed, a feedback received by the stations from the channel at a round is of the following five kinds: (1) packet with control bits piggybacked on it, (2) packet without control bits attached to it,  (3) control bits without any packet, (4) silence, or (5) collision.
An algorithm designed to have stations actually send control bits in some messages is called \emph{adaptive}.
This term is to express the property that stations may adapt their behavior  to ``instructions''  encoded by control bits in a message.
For instance, a station may announce that it is reserving the channel for a sequence of consecutive transmissions, which is to make other stations postpone any transmissions and instead merely wait until the reservation is recalled.

We consider two subclasses of general algorithms: full sensing algorithms and acknowledgment based ones.
A broadcasting algorithm is \emph{full sensing} if no control bits are used in messages, neither attached to packets nor transmitted as separate messages.
For such algorithms, a feedback received from the channel is of the following three forms: (1) packet, (2) silence, or (3) collision.
By the definitions of adaptive and full sensing algorithms, an algorithm is adaptive when it is not full sensing, and any algorithm is either adaptive or full sensing.

A broadcasting algorithm is defined to be \emph{acknowledgment based} when a state transition at a station depends only on which consecutive round it is being spent to process a  currently handled packet, where the numbering of rounds starts from the first round when the packet started to be processed.
A feedback received from the channel, that matters for a station running an acknowledgment based algorithm, is of two possible forms: (1) the packet just transmitted by the station or (2) anything else.
Once a packet transmitted by a station is heard on the channel, then this is considered to be ``acknowledgment'' of the broadcast, which results in the station starting to process the next packet.

Acknowledgment based algorithms are full sensing in that they do not use control bits in messages.
A station~$p$ running an acknowledgment-based algorithm is oblivious to the actions of other stations except for the rounds when $p$ transmits.
An acknowledgment based algorithm can be considered as determined completely by an unbounded binary sequence assigned to each station, which we call the \emph{transmission sequence} of the station; a separate sequence assigned for each station.
A transmission sequence is to be interpreted as follows: if the $i$th entry is a~$1$, then the station transmits the currently processed packet at the $i$th round, counting from the first round when the packet was started to be processed, and a~$0$ means that the station pauses at the $i$th round of processing the packet.

We show that the three categories of algorithms just introduced are different in terms of their capability to achieve  desirable properties by algorithms.
This occurs already in systems of small sizes.
In particular, we show that no acknowledgment based algorithm can be stable in a system of two stations, while there is a full sensing algorithm of  fair latency against window adversaries for such systems; see Section~\ref{sec:two}.
Similarly, we show  that no full sensing algorithm can be stable in a system of three stations, while there is an adaptive algorithm of fair latency against window adversaries for such systems; see Section~\ref{sec:three}.

A natural paradigm to organize a broadcast algorithm is for stations to be greedy by withholding the channel: once a station~$p$ transmits successfully at a round, then $p$ keeps transmitting as long as there are  packets waiting in~$p$.
A broadcast algorithm \emph{withholds the channel} when stations behave in this  way in the course of an execution of the algorithm.

Given an algorithm, we say that \emph{the decision if to transmit the currently pending packet at the next round does not depend on the size of the queue at a round} when the following holds for any station running the algorithm: whenever the station has a pending packet and the queue at the station is nonempty and the pending packet is to be transmitted at the next round, then if a packet were removed from the queue at the end of the previous round then the pending packet would be transmitted at the next round nonetheless.
An  algorithm~$\cP$ is \emph{queue-size oblivious} if it has the following two properties:

\begin{enumerate}
\item[(1)] 
the decision if to transmit the current pending packet at the next round does not depend on the size of the queue at a round;
\item[(2)] 
if an execution of~$\cP$ is stable, for a sequence of injections of packets performed by some adversary, then the execution for the same injections is stable if allowed to be additionally disturbed by the adversary by removing a packet already in a queue at a round to inject it at the next round, a packet per round. 
\end{enumerate}

A definition capturing the property that an algorithm is indifferent to the sizes of queues needs to be somewhat technical.
This is because of the model of algorithms we work with: a station enqueues and dequeues packets itself, so the tally of the number of packets in the queue is implicitly known by a station, being the difference between the numbers of enqueued packets and dequeued ones.
The property of being queue-size oblivious is not related to adaptability of algorithms.
Suppose a system is running a queue-size oblivious algorithm: when a station dequeues the queue to obtain a new pending packet and then computes the number~$x>0$ of packets remaining in the queue and finally decides to transmit a message, then the message \emph{may} include control bits representing this number~$x$, but the decision to transmit a message with the pending packet at the next round would be the same if the number of packets in the queue were~$x-1$ and this were achieved by removing a packet from the queue at the end of the previous round.
Any acknowledgment based algorithm is queue-size oblivious.
All the algorithms given in~\cite{ChlebusKR-TALG12} are queue-size oblivious; these algorithms were analyzed  in~\cite{ChlebusKR-TALG12} for injection rates less than~$1$.
We call an algorithm \emph{queue-size sensitive} if it is not queue-size oblivious.
Among the algorithms described in this paper, some are queue-size oblivious, in particular  such is algorithm \textsc{2-Adaptive}, while others are queue-size sensitive,  for instance \textsc{2-Full-Sensing}, \textsc{3-Adaptive} and \textsc{Move-Big-To-Front}.


\Paragraph{The quality of broadcasting.}

The performance measures we consider express the quality of broadcasting.
Fairness and stability are the most general properties we want to achieve.
An algorithm~$\cP$ is said to be \emph{fair} against an adversary~$\cA$ if, for any execution of algorithm~$\cP$ against this adversary, every packet injected by $\cA$ is eventually heard on the channel.
An algorithm~$\cP$ is said to be \emph{stable} against an adversary~$\cA$  if, for any execution of algorithm~$\cP$ in a system of $n$ stations against this adversary, there is a number $s(n)$ such that the number of packets stored in the queues is at most $s(n)$ at any round.

As we show in Section~\ref{sec:two}, achieving both stability \emph{and} fairness against leaky-bucket adversaries is impossible, except for a trivial case of a system of a single station.
Achieving fairness alone against any adversary is straightforward in a full sensing way:
\textsc{Round-Robin} algorithm, which makes station~$i$ transmit at any round~$k$ such that $k \equiv i \pmod n$, does the job.
This is because when a packet is injected into a station, then it is of some rank, say, $j$ in the queue: during the following $j$th cycle of the execution the station hosting the packet transmits the packet successfully.


\begin{proposition}
\label{prop:fairness-ack-based}

Fairness can be achieved by an acknowledgment based algorithm against leaky-bucket adversaries.
\end{proposition}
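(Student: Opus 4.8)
The plan is to first reduce fairness to a statement about individual stations. Since every station keeps its packets in a FIFO queue and always works on its current pending packet until it is heard, an execution fails to be fair if and only if some station is \emph{permanently stuck}: there is a round $T$ and a station~$p$ such that $p$ has a pending packet at every round $\ge T$ and never transmits it successfully. So I would argue by contradiction, assuming some packet is never heard and fixing such a station~$p$. For an acknowledgment based algorithm a permanently stuck station is especially rigid: the round $a_p$ at which $p$ begins processing its doomed packet is frozen, the age of the packet at round $g$ is exactly $g-a_p$, and hence the entire transmission pattern of~$p$ becomes a fixed function of the global round.

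The construction I would use exploits that an acknowledgment based station may act only on the age $t$ of its current packet and on its own name, while \textsc{Round-Robin} (already noted to be fair) needs the global round number, which such a station is not allowed to read. The idea is to \emph{emulate} \textsc{Round-Robin} by guessing the missing phase. While processing a packet, a station partitions its age axis into consecutive epochs of lengths $\ell_1\le\ell_2\le\cdots$ with $\ell_k\to\infty$; in each epoch it commits to one guess $\phi\in\{0,\dots,n-1\}$ and transmits at exactly those ages $t$ with $t\equiv \mathrm{id}(p)-\phi \pmod n$, cycling $\phi$ through all residues across epochs so that every guess recurs infinitely often. For the frozen station~$p$ exactly one guess $\phi^\ast\equiv a_p\pmod n$ is \emph{correct}: during every correct-guess epoch, $p$ transmits precisely at the round-robin slots $g\equiv \mathrm{id}(p)\pmod n$. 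Because distinct stations have distinct names, distinct stations get distinct slots, so if all permanently stuck stations were simultaneously on their correct guesses they would never collide with one another.

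The first step of the analysis is a pairwise separation lemma: two stations with a fixed relative offset collide only finitely often, which for this emulation is immediate once both sit on their correct guesses inside a common epoch, since then their round-robin slots differ. The hard part, and the step I expect to be the real obstacle, is to rule out that the \emph{other} stations keep colliding with~$p$ forever even though each pair collides only sparsely: a station that completes packets resets its age and thereby re-aims its guessed phase, so across successive packets it can in principle keep landing on the slot of~$p$. I would attack this by using that, since $p$ is permanently stuck, its age grows without bound and its epochs become arbitrarily long, whereas a genuinely non-stuck competitor changes its guess only by starting new packets or advancing through its own epochs. Inside one sufficiently long correct-guess epoch of~$p$ there are $\Theta(\ell_k/n)$ slot-$p$ rounds, and I would try to show the competitors cannot re-aim quickly enough to block all of them, charging each blocking transmission either to a distinct competitor epoch or to a genuine successful transmission (a completion occupies no slot of~$p$ yet is itself a sparse event), so that the total number of possible blockers falls below the number of opportunities and $p$ is forced to transmit alone, contradicting that it is stuck. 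Making this counting airtight, uniformly over which subset of stations happens to be permanently stuck and over adversarially timed injections that reset the competitors, is the delicate point of the argument.
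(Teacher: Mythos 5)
Your reduction of fairness to ``no permanently stuck station,'' and your observation that a stuck station's transmission pattern freezes into a fixed function of the global round, are both sound and in the spirit of what is needed. But the construction itself has a genuine gap exactly where you flag it, and the gap is not a technicality: it is the heart of the matter. Your scheme gives every station the \emph{same} period $n$ and tries to separate them by phase, with epoch-based guess cycling as the mechanism for eventually landing on compatible phases. The problem is that nothing prevents two stations from colliding on \emph{every} transmission throughout an entire epoch overlap: if $q$'s current start round $a_q$ and current guess $\phi_q$ satisfy $a_q+\mathrm{id}(q)-\phi_q\equiv \mathrm{id}(p)\pmod n$, then $q$ occupies all of $p$'s slots for the full duration of that epoch, and the adversary can manufacture this alignment at will for non-stuck competitors by choosing the round at which it injects a fresh packet into an empty station (which resets that station's age and hence re-aims its residue class arbitrarily). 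With $n-1$ competitors that can each be re-aimed after a single successful transmission, ruling out a perpetual cover of $p$'s slots requires a density/charging argument over adversarially timed resets and growing epochs that you do not supply; your proposal explicitly defers it as ``the delicate point,'' so as written this is a proof sketch with its central lemma missing.

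The paper avoids this difficulty entirely by making the collision bound \emph{unconditional on relative timing}: station $p$ transmits with period $x_p$, the $p$th prime in $[n\ln n,3n\ln n)$, after an initial quiet prefix of $3n^2\ln n$ rounds. Two simultaneous hits of $p$ and $q$ while each processes a single packet would force a time gap divisible by $x_px_q\ge n^2\ln^2 n$, which exceeds the $3n^2\ln n$-round window in which $p$ makes its $\ge n$ attempts; hence each of the $n-1$ competitors blocks $p$ at most once per window, no matter when its packets started or how the adversary times injections, and one attempt must succeed. That coprimality argument is precisely the timing-independent separation guarantee your same-period, phase-guessing scheme lacks. If you want to pursue your route, you would need either to prove the covering is impossible (a nontrivial density argument robust to resets) or to replace equal periods by pairwise coprime ones --- at which point you have essentially rediscovered the paper's proof.
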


\begin{proof}
We may assume that $n> 20$, since for  $n\le 20$ stations  we can use the construction for $21$~stations in which $21-n$ dummy stations have no packets injected.
A station~$p$ uses the $p$th prime number $x_p$ from the range $[n\ln n,3n\ln n)$.
There are at least $n$ primes in this range, by properties of the distribution of primes in intervals~\cite{Ribenboim-book1996}, therefore numbers $x_p$ are well defined.
A transmission sequence of station~$p$ has an occurrence of~$1$ at every position $3n^2\ln n +i\cdot x_p$, for a non-negative integer~$i$, and the remaining positions are all~$0$'s.

We argue that each station~$p$ with a pending packet at round~$t$ transmits successfully in the interval $[t,t+6n^2\ln n)$ against any leaky-bucket adversary.
Suppose that this is not the case, to arrive at a contradiction.
Consider the interval $\cI=[t+3n^2\ln n,t+6n^2\ln n)$, which consists of $3n^2\ln n$ rounds.
Station~$p$ transmits at least $\frac{3n^2\ln n}{x_p}\ge n$ times  in~$\cI$, exactly once in a  subinterval of $x_p$ rounds.

It is sufficient to show that any other station can transmit simultaneously with $p$
at most once in~$\cI$, since $n-1$ other stations compete with~$p$ for a slot to transmit.
To this end, consider a station~$q$ distinct from $p$ and assume that $q$ transmits simultaneously with~$p$ at least twice during~$\cI$.
Note that these transmissions must happen while station~$q$ processes the same packet.
This is because when a station starts working on a new packet, it does not perform a transmission in the first $3n^2\ln n$ rounds of this operation.
Therefore these two simultaneous transmissions of $p$ and $q$ occurring at some rounds $t_1<t_2$ must have the property that $t_2-t_1$ is divisible by both~$x_p$ and~$x_q$.
This however may happen only if $t_2-t_1$ is divisible by $x_p\cdot x_q$, because the numbers $x_p$ and~$x_q$ are both primes: thus the inequalities $t_2-t_1\ge x_p\cdot x_q \ge n^2\ln^2 n$ hold.
On the other hand, $3 n^2 \ln n \ge t_2-t_1$ and hence we obtain the inequality $3n^2\ln n \ge n^2\ln^2 n$, which is false for $n>20$. 
This means that  the points $t_1$ and $t_2$ cannot be both in~$\cI$, because otherwise we obtain a contradiction.
\end{proof}

An algorithm~$\cP$ is said to have \emph{bounded latency} against an adversary~$\cA$ if, in any execution of the algorithm in a system of $n$ stations against adversary~$\cA$, there is a number $\ell(n) >0$ such that, for each injected packet, the time interval from the injection of the packet until the packet is heard on the channel is of length at most~$\ell(n)$.
If a station stores some $x$ packets in its queue at a round of an execution of an algorithm, then the latency of the algorithm is at least~$x$ in this execution.
It follows that a bounded-latency algorithm is both fair and stable.
An algorithm of bounded latency, for a given adversary of injection rate~$1$, has \emph{fair latency} if the bound $\ell(n)$ satisfies the estimate $\ell(n)=\cO(\text{burstiness})$.
An algorithm of fair latency is also \emph{strongly stable}, in the sense that the number of packets queue at stations is $\cO(\text{burstiness})$.
Any distributed algorithm given in this paper, that is both fair and stable against some adversary, happens also to be of fair latency against this adversary.

There is a fundamental difference in quality of service between deterministic broadcast schedules implemented by distributed algorithms versus centralized ones: distributed algorithms cannot achieve both stability and fairness while centralized ones can achieve fair latency.
By a \emph{centralized} algorithm we mean one controlled by a central processing unit (CPU) that can communicate with the stations instantaneously to both collect information from them about injected packets and give them instructions about which station is to transmit at a given round.


\begin{proposition}
\label{prop:centralized-algorithm}

Fair latency can be achieved by a centralized algorithm against leaky-bucket adversaries in a system of $n$ stations, for any $n>0$.
\end{proposition}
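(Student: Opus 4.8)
The plan is to have the central processing unit run the most natural global schedule, namely first-in-first-out (FIFO). The CPU maintains a single queue holding every packet currently present anywhere in the system, ordered by injection time, with ties among packets injected in the same round broken, say, by the index of the hosting station. At each round the CPU directs the unique station holding the oldest such packet to transmit, while all other stations pause. Exactly one station transmits, so there is no collision and the packet is heard, after which the CPU removes it from the global queue. This schedule is realizable precisely because the algorithm is centralized: the CPU learns of every injection instantaneously and can issue the transmit instruction to the chosen station. The resulting schedule is \emph{work-conserving}: whenever the global queue is nonempty at the start of a round, exactly one packet is heard in that round.

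The heart of the argument is a busy-period analysis. Fix a packet $P$ injected at round $t_0$, and let $s\le t_0$ be the latest round at whose beginning the global queue is empty; such an $s$ exists because the queue is empty in the initial state. Throughout $[s,t_0]$ the system remains in a single busy period, so no transmission occurs at round $s$ itself (the queue is empty before injection) while one packet is heard at each of the subsequent rounds, giving exactly $t-s$ transmissions in the segment $[s,t]$ for every $t$ in the busy period. On the other hand, the leaky-bucket constraint of type $(1,b)$ bounds the number of packets injected in $[s,t]$ by $(t-s+1)+b$. Subtracting, the number of packets present at the end of any round $t$ of the busy period is at most $b+1$. In particular, immediately after round $t_0$ at most $b+1$ packets are pending, and in the FIFO order none of them is older than $P$, so $P$ sits no deeper than the $(b+1)$-st position in line.

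Since one packet then leaves the system in each subsequent round of the busy period, $P$ is heard within $b+1$ rounds of its injection, so the latency bound $\ell(n)\le b+1$ holds uniformly. Because the burstiness of a leaky-bucket adversary of type $(1,b)$ equals $b+1$, this gives $\ell(n)=\cO(\text{burstiness})$, which is exactly fair latency, and the bound is independent of $n$, covering all $n>0$.

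The step I expect to require the most care is not any deep estimate but the bookkeeping around the busy period, in particular the within-round ordering of transmission, feedback, injection, and enqueueing fixed in the definition of an event. That ordering dictates that the initial round $s$ of a busy period carries no transmission, which is exactly where the off-by-one in the queue-size bound $b+1$ is settled; once this is pinned down, the conclusion is a direct application of the leaky-bucket inequality together with work-conservation.
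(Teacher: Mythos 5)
Your proof is correct and takes essentially the same approach as the paper: a centralized work-conserving FIFO schedule analyzed by counting injections versus transmissions back to the last silent round, yielding the same $b+1$ latency bound. The only differences are cosmetic --- you argue the busy period directly rather than by contradiction, and you let the CPU learn of injections in the same round rather than with the paper's one-round reporting delay, neither of which affects the $\cO(\text{burstiness})$ conclusion.
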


\begin{proof}
Take a system of $n$ stations and a leaky bucket adversary of type $(1,b)$.
The adversary injects packets into stations subject to the constraints of its type.
Each station maintains its private FIFO queue.
An injected packet is immediately enqueued into the respective private queue.
When a station is to transmit at a round, it dequeues a packet from the private queue and transmits the packet.
Each station keeps a record of the number of packets injected at a round through the next round.

We specify an algorithm by a sequence of actions that occur at a round.
The CPU controls which station transmits at a round by notifying the designated station at the immediately preceding round.
The CPU maintains a FIFO queue to store tokens.
Each token is labelled with a name of station.

A round begins with a station designated to transmit at this round performing a transmission, if there is such a station.
After the stations received a transmission or detected a lack thereof, each station~$p$ notifies the CPU about the number $k_p$ of packets that were injected into~$p$ at the previous round.
The CPU creates $k_p$ tokens each labelled with $p$ and enqueues them in arbitrary order.
Next the CPU dequeues a token from the queue, unless the queue is empty.
If the queue is nonempty and a token~$g$ has been obtained by dequeuing the queue, let $q$ be the name of the station with which the token~$g$ is labelled.
The CPU notifies the station~$q$ to transmit at the next round.
This concludes the round.
Observe the invariant that when a station~$r$ is notified by the CPU to transmit, then the private queue of packets at~$r$ is nonempty.

Next we estimate the maximum packet delay.
We consider the round of injection and the round of transmission as not contributing to delay.
Observe that when the adversary injects a packet at round~$t$, then the station notifies the CPU at round~$t+1$ about the packet and at the same round the CPU notifies some station to transmit at round~$t+2$: this incurs a delay of at least~$1$ for the packet and the round~$t+2$ will not be silent.

We claim that a packet cannot be delayed by more than $b+1$~rounds.
Suppose that this is not the case, to arrive at a contradiction.
Take the first round~$t_1$ in which a packet~$u$ is transmitted that was delayed by at least $b+2$~rounds.
Let $t_0$ be the last silent round before round~$t_1$.
Round~$t_0$ exists, as the first round in any execution is silent.
The queue at the CPU was empty at round~$t_0-1$, so all the packets injected into stations by round~$t_0-2$ have been heard on the channel.
The adversary could not inject any packet at round~$t_0-2$, as otherwise the queue at the CPU could not be empty at round~$t_0-1$.
Packet~$u$ was delayed by more than~$b+1$ rounds so it was injected by round~$t_1-b-3$.
Therefore packet~$u$ was injected in the time interval~$\cI$ starting from round~$t_0-1$ through round~$t_1-3-b$.

Now observe two facts.
One is that the adversary could inject at most $(t_1-3-b) - (t_0-1) + 1+b=t_1-t_0-1$ packets in this interval~$\cI$.
The other is that the number of packets heard on the channel starting from round~$t_0+1$ through round~$t_1$ is $t_1-(t_0+1)+1=t_1-t_0$.
These two facts contradict one another.
\end{proof}

\section{Just Two Stations}

\label{sec:two}

In this section we consider two stations attached to a multiple access channel.
More precisely, the algorithms presented in this section are developed for a system of \emph{exactly two} stations, while impossibility results hold for any system with \emph{at least two} stations.
We consider both window and leaky-bucket adversaries.
For leaky-bucket adversaries, there is an algorithm that can handle traffic in a stable way, while achieving both stability \emph{and} fairness is impossible.
The situation is different with respect to window adversaries: there exists a full sensing fair-latency algorithm, which is a fortiori stable and fair, while no acknowledgment based algorithm can merely be stable.


\Paragraph{Leaky-bucket adversaries for two stations.}

We begin by showing that stability can be achieved for two stations.
Consider the following queue-size oblivious algorithm that we call \textsc{2-Adaptive}.
The two stations $p$ and~$q$ use a conceptual ``token.'' 
The initial state is such that the queues are empty and the token is with~$p$.
The token gives the privilege to transmit packets: a station with the token keeps transmitting for as long as it has a pending packet.
The token eliminates a possibility of collisions.
When a station with a token has only one packet at the beginning of a round, which means it is a pending packet while the queue is empty, then the message to carry the pending packet contains an ``over'' bit.
The other station takes over after receiving the ``over'' bit in the message and considers itself to be holding the token.
When a station receives the token at round~$i$ and does not have a pending packet,  then the station does not transmit at the next round~$i+1$.
A station, say~$p$, that does not hold a token at a round does not transmit at this round, but when this round is silent, then~$p$ considers itself holding the token at the next round.
This allows to exchange the token without any transmissions when no station has a pending packet.
Next we argue why algorithm  \textsc{2-Adaptive} is stable but not fair in a system of two stations against leaky-bucket adversaries.

Consider an adversary of some burstiness $b+1$ and a segment of contiguous transmissions in an execution.
For this segment, the sum of the sizes of the queues is not more than the number of packets during the first transmission of this segment plus burstiness, because a packet per round is transmitted.
We may assume without loss of generality that the adversary exercises its power to use burstiness as early as possible for the considered segment of contiguous transmissions.
We examine the number of packets in the queues at a round when a packet is transmitted by referring to what happened at the most recent preceding silent round.
Such a silent round exists, as the first round in any execution is silent.

For any round~$t$ at which some station transmits a packet, a transmission of a message will occur at the next round~$t+1$ unless two things occur: the transmitting station has only one packet at the beginning of round~$t$, and the other station does not have any packet at the end of round~$t$.
Therefore a silent round~$t+1$ after a transmission at round~$t$ indicates that at the point in time that is simultaneously the end of round~$t$ and the beginning of round~$t+1$ the number of packets in the stations equals at most the burstiness of the adversary, as these many packets could be injected at round~$t$ into the station that transmitted its solitary packet at round~$t$ while the other station did not have any packets to transmit at round~$t+1$.
Such a scenario leads to up to $b+1$ packets in the queues.
Consider now a scenario in which a silent round~$t$ is not preceded by a round with transmission.
Suppose station~$p$ is scheduled to transmit at round~$t$.
The adversary can inject up to $b+2$ packets in rounds $t$ and~$t+1$.
There is a possibility that these many packets are simultaneously in the queues: this occurs when $b+1$ packets are injected into~$p$ at round~$t$ and next one packet again at round~$t+1$.
These two cases show that there are never more than $b+2$ packets in the queues against this adversary, which means that the algorithm is stable.

To show that algorithm  \textsc{2-Adaptive} is not fair if $b>0$, consider the following scenario.
Let the adversary of burstiness~$2$ keep injecting one packet per round into station~$q$, starting from the first round, while initially no packets are injected into~$p$.
Station~$p$ does not transmit at the first round as it does not have any packet at the beginning of the round.
This results in $q$ obtaining the token.
Station~$q$ prepares a message at the end of the first round to be transmitted at the second round: the message contains the pending packet and the ``over'' bit as the queue at~$q$ is empty.
The message gets transmitted at the second round and station~$p$ obtains the token.
Station~$p$ does not transmit at round three, as it has no packets, so station~$q$ receives  the token back.
At the end of round three, station~$q$ has two packets: those injected at rounds two and three.
Hence the message prepared by $q$ at the end of round three contains a packet but not the ``over'' bit, as the queue still contains one packet.
The message gets transmitted at round four and a new packet is injected into~$q$ at this round.

The transmissions of~$q$ will continue forever, as the queue at~$q$ will always contain at least one packet.
Let the adversary inject one packet into~$p$ at round four.
This is consistent with burstiness~$2$.
This packet will never be transmitted, as station~$p$ will never receive the token, which means that the algorithm is not fair.

Algorithm \textsc{2-Adaptive} withholds the channel and is queue-size oblivious.
We will show that no algorithm that withholds the channel can be stable for a system of three stations, see Theorem~\ref{thm:impossible-3-stations-withhold-stable}, and that no algorithm that is queue-size oblivious can be stable in a system of four stations, see Corollary~\ref{cor:impossible-queue-size-oblivious-stable}.
The following question is natural to ask now: is there an algorithm with a bounded packet latency for a system of two stations?
The answer depends on the kind of adversary.
We answer this question in the negative for leaky-bucket adversaries:


\begin{theorem}
\label{thm:2-impossible-fair-stable}

No algorithm can be both stable and fair for a system of at least two stations against the leaky-bucket adversary with burstiness~$2$.
\end{theorem}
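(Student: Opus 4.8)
The plan is to assume, towards a contradiction, that some algorithm $\cP$ is both stable and fair, with stability bound $s=s(n)$, and to exhibit a single adaptive leaky-bucket adversary of burstiness~$2$ that drives the queue of one station above~$s$. Since the statement allows $n\ge 2$, I fix two stations $p$ and $q$ and never inject into any other station, so those remain empty throughout and the analysis reduces to the interaction of $p$ and $q$. The adversary will inject exactly one packet per round, so its injections respect even burstiness~$1$, hence a fortiori burstiness~$2$.

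The first ingredient is an accounting lemma. If the adversary floods $q$ (one packet per round) while keeping $p$ empty, then only $q$ can ever transmit successfully, so every round is either a successful $q$-transmission or a \emph{void} round in which no packet is heard. Because $q$ receives one packet per round, its queue increases by exactly one on each void round and is otherwise unchanged; stability therefore forces the number of void rounds to be finite. Consequently there is a finite round $T_0$ after which $q$ transmits successfully at every round, and since $\cP$ is deterministic and its code is known, the adversary can compute $T_0$ for any such flooding continuation.

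The core step forces a void round by an indistinguishability argument that works even for adaptive algorithms. Having reached a \emph{settled} stretch, in which the pure-flooding continuation has $q$ transmitting every round, I plant a single packet into $p$ at a round $r>T_0$, spending that round's injection on $p$ instead of $q$, and then resume flooding $q$. Compare this execution with the \emph{no-plant} continuation that is identical except that the packet is not injected: the two agree through round $r$ and remain synchronised, with the same channel feedback and hence the same state of $q$, until the first round $t_1>r$ at which $p$ transmits. At $t_1$ station $q$ performs the same action in both executions; in the no-plant continuation $q$ transmits (we are past its settling time), so in the plant execution $p$ and $q$ transmit simultaneously and $t_1$ is a collision, that is, a void round. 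Fairness guarantees that the planted packet is eventually heard, which requires a later round $t^*>t_1$ at which $p$ transmits alone while $q$ is idle; thus the episode contains at least two rounds in which $q$ receives a packet but makes no progress, against only the single round $r$ where $q$'s injection was skipped. The net effect of one planting is therefore to increase $q$'s backlog by at least one, and this increase is permanent because during settled flooding $q$'s queue is constant.

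Finally I iterate: after each planted packet has been served and $p$ is empty again, stability applied to the flooding continuation guarantees a fresh settled stretch, in which the next packet is planted. Each of $s+1$ episodes adds at least one to $q$'s backlog, so after finitely many episodes $q$ stores more than $s$ packets, contradicting stability; and if at any episode the planted packet is never heard, fairness is violated directly. The argument never uses how the algorithm recovers from the forced collision, only that a collision counts as a void round, so it applies with or without collision detection. The main obstacle is exactly this collision-forcing step for adaptive algorithms: such a protocol might hope to hand the channel to $p$ gracefully via control bits without wasting a round, and it is the synchronisation-until-first-divergence argument, combined with the settled-regime fact that $q$ would otherwise transmit at $t_1$, that rules this out; the delicate point is choosing the planting round past the computable settling time so that $q$ is guaranteed to transmit at $t_1$.
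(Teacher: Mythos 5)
Your overall architecture (flood one station into a settled regime, plant a packet at the other, force a collision via an indistinguishability argument, iterate to contradict stability) is the same as the paper's, but the planting mechanism you chose breaks the key step. You \emph{redirect} round~$r$'s injection from $q$ to $p$, so in both the plant and the no-plant continuations station~$q$ misses an injection at round~$r$. The settled-regime fact ``$q$ transmits at every round after $T_0$'' was established only for the \emph{pure-flooding} continuation, and it does not transfer to the no-plant continuation, which differs from pure flooding precisely at round~$r$. Concretely, $q$ may be running hand-to-mouth during the settled stretch, holding exactly one packet at the start of each round; removing the round-$r$ injection then leaves $q$ empty at round $r+1$, so $q$ cannot transmit there, and an algorithm in which $p$ transmits at the round immediately following an injection into $p$ slips its packet through at $r+1$ with no collision and no void round. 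In that episode $q$'s backlog change is $-1$ (round $r$) $+1$ (round $r+1$) $=0$, the total backlog is unchanged, and the adversary can repeat forever without ever producing a void round. So the claim ``in the no-plant continuation $q$ transmits at $t_1$ because we are past its settling time'' is exactly where the argument fails, and it fails against a realizable deterministic distributed behavior.

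The fix is what the paper does: plant the extra packet using the burstiness, i.e.\ inject into \emph{both} stations at the planting round, so that the flooded station's injection history is identical in the two compared scenarios; then the flooded station really does transmit at every subsequent round until the channel feedback first differs, which is the round the planted packet is first attempted and hence a collision. The leaky-bucket constraint with $b=1$ is restored by injecting nothing at the collision round, which is why the resulting milestone rounds need not be spaced boundedly apart (and why the theorem genuinely requires a leaky-bucket rather than a window adversary --- your ``exactly one packet per round'' adversary, if it worked, would contradict Theorem~\ref{thm:2-full-sensing-bounded-latency}). Your use of stability rather than fairness to locate the settled stretch is a fine variation; the rest of your accounting goes through once the collision is actually forced.
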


\begin{proof}
Consider a system with at least two stations~$p$ and~$q$.
Suppose, to arrive at a contradiction, that some algorithm~$\cP$ for this system is both stable and fair.
We construct an execution~$\cE$ of~$\cP$ that contains infinitely many void rounds while simultaneously  the adversary injects one packet per round on the average.
Once such an execution~$\cE$ is shown to be possible, a contradiction with the stability of~$\cP$ will have been established.

We determine an execution~$\cE$  by specifying two components.
One is a sequence $t_{0}, t_{1},t_{2},\ldots$ of void \emph{milestone rounds}, where $t_{i}<t_{i+1}$ for any $i\ge 0$.
The other is how packets are injected to enforce the occurrence of milestone rounds.
The construction is by induction on round numbers.
Define $t_{0}$ to be the first round of the execution.
Let the adversary inject one packet into~$p$ at the first round.
Round~$t_0$ is void in~$\cE$ because no packets have been injected before the round.
Suppose the execution~$\cE$ and injections have been defined by a milestone round~$t_{i}$, for some $i\ge 0$.
Next we define the milestone round~$t_{i+1}>t_{i}$ and how packets are injected in the interval starting from~$t_{i}+1$ through~$t_{i+1}$ in~$\cE$.

Consider a conceptual scenario~$\cS_1$ in which the adversary keeps injecting one packet per round into~$p$, starting from round~$t_{i}+1$, such injections occurring forever.
If there is a void round~$x>t_i$ under $\cS_1$ then we extend $\cE$ to round~$x$ as follows.
Set $t_{i+1}=x$ and let the adversary inject one packet into~$p$ at any round after $t_i$ and up to round~$t_{i+1}=x$.
This behavior of the adversary in scenario~$\cS_1$ is consistent with type $(1,1)$ of adversary, in that exactly one packet per round is injected in the newly added segment of the execution.
Otherwise, when $x$ does not exist under scenario~$\cS_1$, then this means that a packet is heard at every round after~$t_i$.
Since the algorithm is fair, eventually station~$q$ will have an empty queue; let $y$ be the first such a round after round~$t_{i}$.
Observe that $p$ will transmit a packet at every round after $y$ under scenario~$\cS_1$: there are no packets in~$q$ and a packet is heard at every round.

Consider next a scenario~$\cS_2$ that differs from~$\cS_1$ only in that the adversary injects a packet into~$q$ at round~$y$.
Observe that $p$ will transmit at every round after $y$ under scenario~$\cS_2$, because these two scenarios do not differ with respect to injections into~$p$.
Under scenario~$\cS_2$, station~$q$ has a pending packet starting from round~$y+1$.
Station~$q$ needs to transmit the pending packet at least once, because otherwise the algorithm would not be fair under scenario~$\cS_2$.
Let $q$ transmit this packet for the first time at some round~$z>y$.
Since $p$ also transmits at~$z$, a collision occurs, which makes round~$z$ void.
We extend $\cE$ as follows.
Define $t_{i+1}$ to be this round~$z$.
Let the adversary inject packets according to scenario~$\cS_2$ starting from the next round after~$t_i$ and ending just before round~$t_{i+1}=z$, and no packets at round~$t_{i+1}=z$.

Continuing this construction defines an execution~$\cE$.
The adversary injects either no packet or one packet or two ones at a round of~$\cE$.
When the rounds of one packet injected are removed from the execution, then what remains are alternating rounds with respect to the number of injected packets: two packets injected followed immediately by no packets injected, and no packets followed immediately by two packets.
This pattern of injections in~$\cE$ is consistent with the definition of the leaky-bucket adversary of burstiness~$2$.

All the milestone rounds~$t_i$ are void.
Moreover, the average number of packets injected between the rounds $t_i$ and $t_{i+1}$ is precisely one packet per round, by the specification of the behavior of the adversary.
It follows that the number of packets at the queues of the two stations at round~$t_{i}$ is at least~$i+1$, for $i \ge 0$, and thus grows unbounded.
The existence of execution~$\cE$ contradicts the stability of algorithm~$\cP$.
\end{proof}

The impossibility of Theorem~\ref{thm:2-impossible-fair-stable}, which is about deterministic distributed algorithms, could be compared to the possibility of Proposition~\ref{prop:centralized-algorithm}.
We pause now to revisit the definition of an algorithm we work with to capture the notion of a ``broadcast schedule'' for multiple access channels.
For the proof of Theorem~\ref{thm:2-impossible-fair-stable} to be valid, the following property of such broadcast schedules needs to be satisfied: for any round~$t$, the action of each station~$p$ at this round is uniquely determined by what has been heard on the channel prior to round~$t$ and how the adversary has been injecting packets into~$p$ prior to round~$t$.
In particular, if there are two scenarios $\cS_1$ and~$\cS_2$ such that the behavior of an adversary is the same prior to some round~$t$ in both $\cS_1$ and~$\cS_2$ but the behavior differs at round~$t$ in precisely that (1) the adversary injects the same number of packets into a station~$p$ at~$t$ in both $\cS_1$ and~$\cS_2$ while (2) the numbers of packets injected into some other station~$q$ differ in the scenarios~$\cS_1$ and~$\cS_2$, then the actions taken by the station~$p$ at round~$t+1$ are the same under both scenarios, while only the actions of~$q$ may differ.
The property that the actions of stations at a round are uniquely determined by the history prior to the round follows from the assumption that algorithms are deterministic.
The property that the actions of stations at a round are affected by other stations only through what has been heard on the channel prior to the round holds because algorithms are assumed to be distributed.

The sequence $t_{0}, t_{1},t_{2},\ldots$ of milestone rounds, as considered in the proof of Theorem~\ref{thm:2-impossible-fair-stable}, may have the property that there is no upper bound on the differences $t_{i+1}-t_{i}$, for $i\ge 0$.
If such an upper bound, say $w$, existed, then the behavior of the adversary would be  consistent with the definition of type $(1,w)$ window adversary.
It follows from Theorem~\ref{thm:2-full-sensing-bounded-latency}, given in the part of this section devoted to window adversaries, that one cannot strengthen the proof above to show existence of such a bound, as otherwise impossibility would hold for window adversaries.
Regarding window adversaries, possibility is shown for three stations in Theorem~\ref{thm:3-adaptive-fair-latency} of Section~\ref{sec:three}, while impossibility holds for four stations, as shown in Corollary~\ref{cor:impossible-fair-stable-4-stations-window} of Section~\ref{sec:four}.

When algorithms are restricted to be full sensing, then even stability is not possible to be achieved  in a system of two stations, as we show next.
The proof of Theorem~\ref{thm:impossible-2-full-sensing-stable} relies on the property that messages cannot carry control bits to redirect the course of action of listening stations.
This property can be formally abstracted as follows:
Consider a full sensing algorithm and two scenarios $\cS_1$ and $\cS_2$ for a system of two stations.
If  station~$p$ has the same injections up to round~$t$ under the two scenarios, while station~$q$ transmits at the same rounds up to round~$t$ under the two scenarios, then $p$ cannot detect any difference between the injections into~$q$ under the two scenarios up to round~$t$, so $p$ is at the same state at the beginning of round~$t$ under both $\cS_1$ and~$\cS_2$.


\begin{theorem}
\label{thm:impossible-2-full-sensing-stable}

No full sensing algorithm can be stable for a system of at least two stations
against the leaky-bucket adversary with burstiness~$2$.
\end{theorem}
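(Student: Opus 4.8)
The plan is to follow the architecture of the proof of Theorem~\ref{thm:2-impossible-fair-stable}: assuming a full sensing algorithm~$\cP$ that is stable against the leaky-bucket adversary of burstiness~$2$, I would construct an execution~$\cE$ containing infinitely many void \emph{milestone rounds} $t_0<t_1<t_2<\cdots$ while the average injection rate stays exactly~$1$, so that the number of queued packets at round~$t_i$ grows at least like~$i$ and stability is contradicted. The construction is inductive: given a void round~$t_i$, I run the scenario~$\cS_1$ that injects one packet per round into~$p$ and nothing into~$q$ after~$t_i$. If $\cS_1$ ever produces a void round $x>t_i$, I set $t_{i+1}=x$ and inject according to~$\cS_1$; this is rate~$1$ and that case is done.

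The interesting case is when $\cS_1$ has no void round after~$t_i$. Then exactly one station transmits successfully at every round, so there are no collisions; since $q$ receives no new packets and each of its transmissions must be solitary and hence clears a packet, $q$ can transmit only finitely often. Thus there is a round $y>t_i$ after which, under~$\cS_1$, station~$p$ transmits at every round and~$q$ is silent. Here the proof of Theorem~\ref{thm:2-impossible-fair-stable} invoked fairness twice --- to empty~$q$ and to force~$q$ to transmit a freshly injected packet --- and this is exactly what I must replace, since I may now assume only stability and the full sensing restriction. In place of fairness I would use the indistinguishability property stated just before the theorem: as long as the injections into~$p$ and the transmission rounds of~$q$ are held fixed, $p$ passes through the same sequence of states, because a full sensing station learns nothing about the other station beyond whether it transmitted.

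Concretely, I would compare two scenarios that agree with~$\cS_1$ through round~$y$. In the \emph{starved} scenario~$\cV$ no packets are injected after~$y$: if $q$ is empty at~$y$, then station~$p$, fed nothing and holding only a bounded queue, stops transmitting after finitely many rounds while~$q$ never transmits, so $\cV$ reaches a silent round~$z$. In the rate-$1$ \emph{load-shifted} scenario~$\cU$ one packet per round is injected into~$q$ (and none into~$p$) after~$y$; here $p$ has the same injection history as in~$\cV$, so by the indistinguishability property $p$ behaves exactly as in~$\cV$ for as long as~$q$ remains silent. If $q$ transmits at some round in $[y+1,z-1]$, then, since $p$ is still transmitting there (as in~$\cV$), the result is a collision; if $q$ stays silent through round~$z$, then $p$ is silent at~$z$ exactly as in~$\cV$, and the round is silent. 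Either way I obtain a void round in a scenario of injection rate~$1$, which I take as~$t_{i+1}$; the burstiness-$2$ bookkeeping that keeps the per-phase average at exactly one packet per round is the same as in Theorem~\ref{thm:2-impossible-fair-stable}.

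The main obstacle is the boundary case in which $q$'s first transmission in~$\cU$ coincides exactly with the round~$z$ at which~$p$ first falls silent: then $q$ transmits alone and succeeds, producing no void and breaking the indistinguishability coupling between~$\cU$ and~$\cV$ from that round on. I expect to resolve this by a threshold (hybrid) argument over the round $m\ge y$ at which the injection stream is switched from~$p$ to~$q$, choosing the switch time so that the collision-or-silence dichotomy is forced rather than landing on the perfect-handoff coincidence. A secondary point requiring care is that a collision leaves an un-heard packet in~$q$, so across phases I must track these leftover packets --- and in particular guarantee the proviso that~$q$ is empty at each milestone --- while confirming that the average injection rate can still be held at~$1$ within burstiness~$2$ as the queue provably grows without bound.
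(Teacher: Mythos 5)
Your overall architecture (void milestones, indistinguishability of full sensing stations, shifting the injection stream between $p$ and $q$) matches the paper's, but the proposal has a genuine gap exactly where the theorem's content lies: the ``boundary case'' you flag, in which $q$'s first transmission in~$\cU$ lands on precisely the round~$z$ at which $p$ falls silent, is not resolved. A seamless handoff at that round is a successful transmission, the coupling with~$\cV$ breaks, and the continuation (rate-$1$ injections into~$q$, $q$ transmitting every round, $p$ idle) is locally a perfectly stable, void-free execution; iterating your argument with roles swapped can loop forever without contradiction. The proposed ``threshold (hybrid) argument over the switch round'' is not carried out, and it is not clear it can be: a priori the algorithm could arrange $m_j=z_j$ for every switch time~$j$, and between consecutive hybrids the injections into \emph{both} stations change, so the one-station-at-a-time indistinguishability you rely on does not directly apply. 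The paper avoids this trap with a different mechanism: it interleaves two elementary moves (add an injection into one station, then delete an injection into the other, each move changing only one station's view) to migrate \emph{all} injections onto~$p$ while preserving the rounds $y_1<y_2<\cdots<y_k$ at which $q$ transmits, and then closes with a counting contradiction --- $q$ transmits $k$ times while receiving no packets after~$t_i$, with $k$ chosen larger than $q$'s queue at~$t_i$. That counting step is the missing idea in your sketch.

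A secondary but real problem is your proviso that $q$ be empty at round~$y$. Under stability alone (you may not assume fairness here, and indeed the paper proves in Theorem~\ref{thm:2-impossible-fair-stable} that you cannot have both), a station may retain packets forever, so $q$ need never empty; and if $q$ holds leftover packets at~$y$ it may transmit one of them in the starved scenario~$\cV$, invalidating both your definition of~$z$ and the $\cU$-versus-$\cV$ coupling. The paper's counting argument sidesteps this entirely by comparing the number of $q$'s transmissions to the size of $q$'s queue at~$t_i$, never requiring that queue to be empty.
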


\begin{proof}
We show that the adversary can enforce an execution $\cE$ of $\cP$ with infinitely many void rounds, while maintaining the average injection rate~$1$.
Similarly as in the proof of  Theorem~\ref{thm:2-impossible-fair-stable}, we identify consecutive milestone rounds $t_{0}, t_{1},t_{2},\ldots$, where $t_{i}<t_{i+1}$ for any $i\ge 0$, and specify how packets are injected at every round.
If a void round occurs before the end of the described construction, then this gives a milestone round and we are done.
Thus we will always consider a scenario in which the current round is not a void one, whenever logically possible.

Let the adversary inject one packet into~$q$ at the first round, and define $t_{0}$ to be the first round of execution~$\cE$ of~$\cP$.
Round~$t_0$ is void in~$\cE$ because no packets have been injected before the round.
Suppose the execution~$\cE$ and injections have been defined by a milestone round~$t_{i}$, for some $i\ge 0$, we define the milestone round~$t_{i+1}>t_{i}$ and how packets are injected in the interval starting from~$t_{i}+1$ through~$t_{i+1}$ in~$\cE$.

We consider a sequence of conceptual scenarios.
In the first scenario~$\cS_1$, the adversary continuously injects one packet per round into station~$q$, starting from round~$t_{i}+1$, such injections occurring forever.
If there is a void round~$x>t_i$, then we extend $\cE$ to round~$x$ as follows.
Set $t_{i+1}=x$ and let the adversary inject one packet into~$q$ at any round after $t_i$ and up to round~$t_{i+1}=x$.
Otherwise, when $x$ does not exist under scenario~$\cS_1$, then this means that a packet is heard at every round after~$t_i$.
Since the algorithm is stable, station~$p$ pauses at some round~$y_1 > t_{i}$ while station~$q$ successfully transmits at round~$y_1$.
Let the second scenario~$\cS_2$ differ from~$\cS_1$ in that the adversary injects additionally one packet into~$p$ at round~$t_i+1$.
Station~$q$ again transmits at round~$y_1$ in scenario~$\cS_2$, since there is no difference between the pattern of injections into this station.
Therefore  $p$ pauses at round~$y_1$ in scenario~$\cS_2$ to avoid collision.
The third scenario~$\cS_3$ is a modification of $\cS_2$ in only that the adversary does not inject a packet into~$q$ at round~$t_i+1$.
Again station~$p$ pauses at round~$y_1$ in scenario~$\cS_3$, since there is no difference between the injection pattern into~$p$ in scenarios~$\cS_2$ and $\cS_3$, and because $p$ pauses in~$\cS_2$.
Suppose station~$q$  transmits at round~$y_1$ in scenario~$\cS_3$, similarly as in~$\cS_2$, which is the case unless there are no packets in~$q$ at the round.
The forth scenario~$\cS_4$ is a modification of $\cS_3$ in only that the adversary injects a packet into~$p$ at round~$t_i+2$.
The fifth scenario~$\cS_5$ is a modification of $\cS_4$ in only that the adversary does not inject a packet into~$q$ at round~$t_i+2$.

We continue along these lines to define a sequence of scenarios.
The applied pattern is such that first an injection into~$p$ is added at a round~$t> t_i$ in a scenario, followed by a scenario in which an injection into $q$ at the same round~$t$ is omitted.
Next round~$t+1$ is processed in the same way, and so on.
Since the modification of injections occurs in only one station at a time, there is no difference between the new scenario and the immediately preceding one from the point of view of the other station that is not affected, so the pattern of transmissions is preserved.

Define such a sequence of scenarios that is sufficiently long to redefine injections up to round~$y_1$: in a new pattern packets are injected into $p$ only up to round~$y_1$, while afterwards the adversary injects only into $q$ and a packet is heard at every round.
Since the algorithm is stable, station~$p$ pauses at some round~$y_2> y_1$ while station~$q$ successfully transmits at round~$y_2$.
We continue to redefine injections up to round~$y_2$ in such a way that packets are injected only into $p$ up to round~$y_2$ while afterwards packets are injected only into~$q$.

This pattern is repeated through a sequence of rounds $y_i$, for $1\le i\le k$, in which $q$ transmits, while packets are injected only into $p$ up to round~$y_k$, for arbitrarily large integer~$k$.
Take $k$ larger than the number of packets in $q$ at round~$t_i$: this results in a contradiction since $q$ transmits more packets by round~$y_k$ then there are in its queue while no new packets have been injected into~$q$.
This means that a void round occurs by round~$y_k$.
Define $t_{i+1}$ to be the first such a void round after $t_i$ in the first scenario~$\cS$ in which a void round occurs after~$t_i$.
We extend $\cE$ beyond $t_i$ up to round~$t_{i+1}$ as follows.
The adversary injects packets exactly as in $\cS$ after $t_i$ but before $t_{i+1}$.
If the adversary injected two packets at some round in $\cS$ after $t_i$ but before $t_{i+1}$ then the adversary pauses at round~$t_{i+1}$, otherwise the adversary injects one packet into some station at round~$t_{i+1}$.

All the milestone rounds~$t_i$ are void while the average number of packets injected between~$t_i$ and~$t_{i+1}$ is precisely one packet per round, by the specification of the behavior of the adversary that is of burstiness~$2$.
It follows that the number of packets at the queues of the two stations at round~$t_{i}$ is at least~$i+1$, for $i \ge 0$.
The existence of an execution~$\cE$ in which the number of packets in queues grows unbounded means  instability of algorithm~$\cP$.
\end{proof}


\Paragraph{Window adversaries for two stations.}

Next we show that fair latency can be achieved in a system of two stations by a full sensing algorithm against any window adversary.
We start with a full sensing algorithm that has a positive integer number $i$ in its code  interpreted as a window; the algorithm is called \textsc{2-Full-Sensing($i$)}.

An execution of the algorithm is structured as a sequence of consecutive phases.
A \emph{phase} consists of exactly $i$~rounds.
Packets injected in the course of a phase are considered to be \emph{available} during the next phase.
A phase is used to broadcast precisely the packets injected in the immediately preceding phase and thus available during the phase.

All stations simply wait in the first phase.
Consider the first round of one of the next phases.
If one of the stations realizes that it contains exactly $i$ available packets, then the station  spends the whole phase of $i$ rounds transmitting these packets.
Suppose that none of the stations contains exactly $i$ available packets.
In this case $p$ starts transmitting at the first round of the phase.
Station~$q$ counts the number of  packets transmitted by~$p$ and starts transmitting packets immediately after silence or when $q$ realizes at a round that its number of available packets is equal to the number of the remaining rounds of the phase.
Each station needs to maintain two counters: one is the size of the queue and the other is the consecutive round in a phase.
The count of rounds is updated by incrementing the value by~$1$ modulo $i$, while the count of the queued packets is updated following insertions and successful transmissions.

Next we define an algorithm \textsc{2-Full-Sensing} to handle an adversary of arbitrary window.
If we knew the adversary and began an execution by invoking \textsc{2-Full-Sensing$(i)$} where the window~$i$ implicit in \textsc{2-Full-Sensing$(i)$} were the same as that of the adversary, then no collisions would occur, by the properties of \textsc{2-Full-Sensing$(i)$} discussed above.
As we do not know the adversary, the algorithm works by trying to run algorithms \textsc{2-Full-Sensing$(i)$} to test window sizes~$i$ for consecutive values of~$i$, starting from \textsc{2-Full-Sensing$(1)$} to test $w=1$.
A collision can possibly occur at an event: it can be detected by all the stations, since both of them transmit while none can hear a packet.
If a collision occurs in the course of executing \textsc{2-Full-Sensing$(i)$}, then both stations invoke \textsc{2-Full-Sensing$(i+1)$}.
The packets that are already in the queues when algorithm \textsc{2-Full-Sensing$(i+1)$} is invoked are called \emph{old}.
A transition from  \textsc{2-Full-Sensing$(i)$}  to algorithm \textsc{2-Full-Sensing$(i+1)$} starts by handling such old packets.
Consider the beginning of executing \textsc{2-Full-Sensing$(i+1)$} just after a transition has been made.
The stations~$p$ and~$q$ begin by taking care of their old packets: first station~$p$ transmits all its old packets, and after a silent round station~$q$ does the same.
While old packets are being unloaded, the actions of the adversary performed at consecutive rounds for a station are not immediately acted upon by the station but instead are enqueued in an additional private queue DELAY operating in a FIFO manner.
The queue DELAY stores an entry for each round of which a record of the adversary's actions  needs to be kept. 
An entry is a record of the number of packets injected at the respective round. 
The injected packets themselves are enqueued into the only queue at a station used to store waiting packets.
After the station~$q$ has completed transmitting its old packets, which is indicated by a silent round, both stations start executing instructions of \textsc{2-Full-Sensing$(i+1)$}.
To this end they would dequeue DELAY while simultaneously enqueueing the current actions of the adversary if needed.
Recall that an invocation of \textsc{2-Full-Sensing$(i+1)$} starts with the first phase of $i+1$ rounds during which the stations pause while storing packets available for the next phase.
When multiple invocations of \textsc{2-Full-Sensing$(k)$} are performed back-to-back, for consecutive values of~$k$, we may modify the action performed during a transition from $i$ to $i+1$ to save on void rounds. 
Namely, let the stations begin \textsc{2-Full-Sensing$(i+1)$} by dequeuing either $i+1$ entries from the queue DELAY or all of them, whichever is smaller, and let this contribute towards the beginning of the first phase of \textsc{2-Full-Sensing$(i+1)$}.
Afterwards they dequeue one entry per round, unless DELAY is empty, in which case it is not used.


\begin{theorem}
\label{thm:2-full-sensing-bounded-latency}

Algorithm \textsc{2-Full-Sensing} is full sensing and has fair latency in a system of two stations against any window adversary.
\end{theorem}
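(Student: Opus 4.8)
The plan is to fix an arbitrary window adversary of type $(1,w)$, whose burstiness is exactly $w$, and to prove that every injected packet is heard within $\cO(w)$ rounds of its injection; since this matches the burstiness, it is fair latency. That the algorithm is full sensing requires nothing beyond inspection: stations emit only packets, never control bits, and every decision is driven by the two local counters (the queue size and the position within the current phase) together with channel feedback restricted to the three values packet/silence/collision. In particular a collision is inferred by a transmitting station from the mere absence of a successful hearing of its own packet, so no external collision-detection mechanism is needed, which is consistent with the possibility results holding for channels without collision detection.

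The heart of the argument is the behaviour of the fixed-window subroutine at the \emph{correct} window. First I would show that \textsc{2-Full-Sensing$(w)$}, started from empty queues, never causes a collision and delivers every packet during the phase that immediately follows the phase of its injection. The reason is the window property: each phase is a contiguous block of exactly $w$ rounds, hence receives at most $w$ packets, which are precisely the at most $w$ packets available for the next phase; the schedule then has $p$ transmit its available packets from the front of the phase and $q$ pack its available packets against the end of the phase (via the ``remaining rounds equal available count'' trigger), and since the two counts sum to at most $w$ the two runs of transmissions cannot overlap. This gives latency at most $2w=\cO(w)$ at the correct window and, crucially, shows that window $w$ is collision-free forever.

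Next I would bound the search. The window parameter is incremented only in response to a collision, and by the previous paragraph no collision ever occurs once the parameter reaches $w$; since the parameters are tried in increasing order $1,2,3,\dots$, at most $w-1$ collisions can occur in total, so the algorithm performs at most $w-1$ transitions and thereafter runs \textsc{2-Full-Sensing$(i^\ast)$} forever for some $i^\ast\le w$. At each transition all currently queued packets are declared \emph{old} and explicitly flushed ($p$ first, then $q$ after a silent round), while the queue DELAY records the adversary's injection schedule so that it can be replayed with a bounded time shift once flushing is complete. Consequently no packet can be stranded across a transition, which already establishes fairness; the remaining task is purely quantitative.

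The main obstacle is the uniform latency bound through the pre-convergence phase, and this is where the real work lies. Two quantities must be shown to be $\cO(w)$ simultaneously. The first is the backlog at \emph{every} round, including while the algorithm runs at a window $i<w$ that is still too small: here one combines work-conservation of the $p$-then-$q$ schedule with the rate-$1$, window-$w$ budget (any $w$ consecutive rounds carry at most $w$ injections, so over any interval the injected total exceeds the interval length by less than $w$) to argue that the queues cannot grow beyond $\cO(w)$ before the next collision forces an increment. The second is the cost of the transition machinery: by the backlog bound the number of old packets flushed at each of the at most $w-1$ transitions is $\cO(w)$, and the DELAY replay reintroduces the buffered schedule with an additive offset equal to the flush time, again $\cO(w)$. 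Combining these, any packet is either delivered within its current level in $\cO(w)$ rounds, or flushed as an old packet at the next transition after an $\cO(w)$ wait, or replayed through DELAY with an $\cO(w)$ offset, so its total delay is $\cO(w)$ in every case. The delicate point to get right is the interaction between silent rounds and leftover packets at an undersized window, that is, ruling out a slow, collision-free accumulation of backlog; this is precisely what the rate-$1$ budget, together with the bound of $w-1$ on the number of transitions, is invoked to preclude.
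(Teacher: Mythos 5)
Your proposal follows essentially the same route as the paper's proof: the fixed-window subroutine at a sufficient window is collision-free and delivers all available packets within the following phase, hence at most $w-1$ collisions and transitions occur, and the transition overhead (the old-packet flushes, the DELAY replay, and the initial phases of the successive levels) contributes only $\cO(w)$ void rounds, yielding $\cO(w)$ latency. The only difference is bookkeeping: you tally the delay per packet via a case analysis, whereas the paper bounds the total number of void rounds over the whole execution by roughly $3(w-1)+w$ and adds it to the intrinsic delay at the stabilized level; both are correct, and the paper's own treatment of the final quantitative step is no more detailed than yours.
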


\begin{proof}
The property of being full sensing follows directly by examining the design of the algorithm.
For instance, station~$q$ may take over from $q$ within a phase after hearing silence,
which does not require the power of adaptive algorithms to send control bits.

Consider an adversary of some type  $(1,w)$.
We claim that the set of numbers~$i$ such that \textsc{2-Full-Sensing$(i)$} is invoked in an execution makes a bounded interval.
This follows from two facts.
One is that starting from some invocation of \textsc{2-Full-Sensing$(k)$}, for $k\le w$, the behavior of the adversary during the remaining part of the execution is consistent with the type $(1,k)$ adversary; let $k$ be such an integer.
The other is the property that all the packets available at a phase of \textsc{2-Full-Sensing$(k)$} are scheduled to be transmitted in this phase.
To see this, consider the following cases.
If there are $k$ available packets at a station at the start of a phase, then they are transmitted throughout the phase.
Otherwise either the number of available packets in both stations is $k$ or it is less than~$k$.
In the former case, the stations transmit back to back, starting with $p$, otherwise there is exactly one silent round when $p$ is already done but $q$ has not started transmitting yet.
We can afford the silent round since the number of available packets is less than the length $k$ of the phase.
It follows that once \textsc{2-Full-Sensing$(k)$} has been invoked, no collision ever occurs in the execution and so \textsc{2-Full-Sensing$(k+1)$} is never invoked.

There can be at most $w-1$ collisions because once \textsc{2-Full-Sensing$(w)$} is invoked there will be no collisions.
A collision results in unloading the old packets, which requires two additional silent rounds.
This contributes up to $3(w-1)$ void rounds.
Additional packet delay may occur due to the cumulative effect of the first phases consisting of void rounds of the procedures \textsc{2-Full-Sensing$(k)$} for consecutive values of~$k$.
That effect contributes a delay of at most~$w$.
This is because a transition to \textsc{2-Full-Sensing$(k)$} from \textsc{2-Full-Sensing$(k-1)$}, for $k>1$, begins by dequeuing up to $k$ entries form DELAY: when there are less than $k$ entries in DELAY in such a situation then up to $k$ void rounds are generated, but when there are at least $k$ of them then no extra void round occurs.
Thus the delay of a packet from injection to transmission is at most~$4w$.
\end{proof}

Next we show that an acknowledgment based algorithm for a system of two stations cannot be even stable.


\begin{theorem}
\label{thm:impossible-2ack-based-stable}
No acknowledgment based algorithm is stable in a system of two stations against the window adversary of burstiness~$2$.
\end{theorem}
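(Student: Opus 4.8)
The plan is to reuse the milestone-construction method of the proofs of Theorems~\ref{thm:2-impossible-fair-stable} and~\ref{thm:impossible-2-full-sensing-stable}: I build an execution~$\cE$ containing infinitely many void rounds $t_0<t_1<t_2<\cdots$ while the average injection rate stays~$1$, which contradicts stability. The extra leverage here is the structure of acknowledgment based algorithms. Each station~$p$ is governed by its transmission sequence $f=(f_1,f_2,\ldots)$, and $q$ by $g$: once $p$ starts a packet it attempts a transmission exactly at the positions $j$ with $f_j=1$, counting ages from the first round of processing, and it returns to age~$1$ only after a successful transmission. Because the algorithm is also distributed, the action of~$p$ at a round is unaffected by how the adversary injects into~$q$, except through collisions that $p$ experiences itself; this indistinguishability is what will let me perturb the injections into~$q$ without~$p$ noticing.

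First I would dispose of the degenerate case $f_1=0$ (and symmetrically $g_1=0$): then $p$ needs at least two rounds to clear each of its packets, so injecting one packet per round into~$p$ alone, which is consistent with window type $(1,2)$, forces the queue of~$p$ to grow without bound. Hence I may assume $f_1=g_1=1$, so that a freshly started station transmits at once. A collision requires two stations transmitting, hence two busy stations; and as long as only a single station is ever busy it clears a packet every round and the queue stays bounded. The adversary must therefore use its burstiness of~$2$ to seed two packets simultaneously. I let round~$1$ be the silent milestone~$t_0$ and inject one packet into each of $p$ and~$q$ there; window type $(1,2)$ forces round~$2$ to be injection-free, and at round~$2$ both stations are fresh and collide, so $t_1=2$ and two packets are now queued.

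For the inductive step I assume a void milestone~$t_i$ with queue size $Q_i\ge 2$ and run the scenario~$\cS_1$ that injects one packet per round into~$p$ only. If some void round occurs after~$t_i$, I take the first one as~$t_{i+1}$. Otherwise every round after~$t_i$ is successful, and then $p$ can never leave age~$1$, since doing so would require a failed transmission, that is a collision and hence a void round; consequently $p$ transmits and succeeds every round while $q$ never transmits. If $q$ still holds a packet, that packet is trapped in an all-zero tail of~$g$ and will never be transmitted, so switching to injecting into~$q$ sends its queue to infinity and we are done. If instead $q$ is empty, then all $Q_i\ge 2$ packets reside in~$p$, and I pass to the scenario~$\cS_2$ that agrees with~$\cS_1$ except that at one round~$y$ the injected packet is sent to~$q$ rather than~$p$. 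By the indistinguishability above, $p$ behaves through round~$y$ exactly as in~$\cS_1$: it succeeds at~$y$ and, because $Q_i\ge 2$, remains busy; at round~$y+1$ both stations are fresh and collide, giving $t_{i+1}=y+1$. In each case that continues the construction $Q_{i+1}=Q_i+1$, so the queue at~$t_i$ is at least~$i+1$ and grows without bound.

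The main obstacle is the case analysis inside the inductive step: one has to argue that the only way $\cS_1$ avoids a void round is to pin~$p$ at age~$1$ forever, which then forces~$q$ to be either idle, enabling the $\cS_2$ perturbation, or permanently trapped in a zero tail of its transmission sequence, enabling a flooding attack. The perturbation yields a genuine collision only because~$p$ keeps a spare packet, which is precisely why the queue must be primed to size at least~$2$ by the initial burst before the induction can begin.
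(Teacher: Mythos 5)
Your proof is correct, but the inductive step takes a genuinely different route from the paper's. Both arguments share the same skeleton: dispose of the case where a transmission sequence begins with a~$0$ by flooding that station; otherwise exploit that a freshly started station always transmits, seed a collision at round~$2$ with a burst of two, and then manufacture one void round per milestone while keeping the average injection rate at~$1$. The divergence is in how the next void round is forced. The paper maintains the invariant that \emph{both} stations hold a pending packet at each milestone and injects with the pattern $2,0,2,0,\ldots$ into both stations; the station holding the channel drains at rate~$1$ while refilling at rate~$1/2$, so it either collides or empties, and in the latter case the other station succeeds for one round, after which both stations are fresh and collide. You instead maintain the invariant that the \emph{total} backlog is at least~$2$, flood~$p$ alone, and argue that avoiding void rounds pins~$p$ at age~$1$ and silences~$q$ forever; then either $q$'s packet is trapped in an all-zero tail of its transmission sequence (so flooding~$q$ gives instability outright -- a terminal case the paper's pattern never needs to confront) or $q$ is empty and a single redirected injection produces two fresh stations and a collision one round later. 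Your route is closer in spirit to the perturbation arguments of Theorems~\ref{thm:2-impossible-fair-stable} and~\ref{thm:impossible-2-full-sensing-stable}, and the $Q_i\ge 2$ invariant is exactly what keeps $p$ busy through the perturbed round, so the collision is genuine. One expository imprecision: at round $t_i+1$ station~$p$ need not be at age~$1$ (after the collision at round~$2$ both stations sit at age~$2$), so ``$p$ can never leave age~$1$'' should read ``once $p$ makes its first successful transmission after~$t_i$ it is pinned at age~$1$''; the transient before that first success is covered by your catch-all, since either a void round occurs or $q$ must carry the channel alone with a finite supply of packets, which is impossible. With that rewording the argument is complete.
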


\begin{proof}
Take an acknowledgment based algorithm~$\cP$.
We specify an execution~$\cE$ of~$\cP$ in which an infinite set of void milestone rounds occurs while the adversary of burstiness~$2$ injects a packet per round on the average.

Consider the first action performed when a new packet is started to be processed, determined by the first binary digit in a transmission sequence.
If there is a station that pauses at such a first round of processing a packet, which means that the transmission sequence begins with a~$0$, then the adversary may inject all the packets into this station only, which results in an infinite sequence of silent milestone rounds.
Otherwise each of the stations $p$ and $q$ transmits immediately, which means that the transmission sequences of each station start with an occurrence of~$1$.
From now on we consider this case only.
Observe that if a station transmits successfully and still has a packet, then the station will continue transmitting its packets until either a collision occurs or there is no pending packet, because these are all transmissions that have to occur at the first rounds of processing these packets, respectively.

Let the adversary inject a packet into each station at the first round of the execution, which results in both stations transmitting at the second round and so a collision. 
Define the second round to be the first milestone void round in~$\cE$.
Let the adversary pause at the second round.

Suppose we have defined a prefix of execution~$\cE$ that ends in a void milestone round~$t$, and such that each station has a pending packet at the end of round~$t$.
If there is a collision or silence at round~$t+1$, then let $t+1$ be the next milestone round, and let the adversary inject a packet at round~$t+1$ into an arbitrary station.
Otherwise only one station transmits at round~$t+1$, let it be~$p$.
Let the adversary keep inserting packets into both $p$ and~$q$ starting from round~$t+1$, according to the pattern $2,0,2,0,\ldots$, that is, two packets at round~$t+1$, then none at~$t+2$, then two at~$t+3$, then none at~$t+4$, and so on.
Station~$p$ will still have a packet available after the transmission at round~$t+1$, as a packet is injected at round~$t+1$ and therefore $p$ will immediately transmit again.
Transmissions of~$p$ will continue for as long as station~$p$ has packets available, for a minimum of two consecutive rounds.
Such transmissions of~$p$ cannot continue forever, as a packet gets injected into~$p$ at every other round while $p$ keeps transmitting continuously.
Therefore eventually either a collision occurs at some round~$t_1 > t+1$ or station~$p$ does not have a pending packet after a transmission of the only remaining packet at some round~$t_2>t+1$.

In the former case we can define round~$t_1$ to be the next void milestone round.
We need to suitably adjust the behavior of the adversary: if the adversary injects two packets at round~$t_1-1$, then let no packets be injected at~$t_1$, but if no packets are injected at round~$t_1-1$ then let the adversary inject one packet at round~$t_1$ into an arbitrary station.

Next consider the latter case in which $p$ transmits a packet at round $t_2>t+1$ and there are no more packets available at~$p$.
If $q$ does not transmit at round~$t_2+1$ then the silent round~$t_2+1$ can become the next milestone round, with a suitably adjusted behavior of the adversary.
What remains is a case when $q$ does transmit at round~$t_2+1$.
Observe that $t_2+1$ is a round when the adversary injects two packets, one per station, as otherwise $p$ would transmit at round~$t_2+1$ having a packet available that was injected at round~$t_2$.
Station~$q$ gets one packet injected at round~$t_2+1$, so $q$ has a pending packet at the end of round~$t_2+1$.
Consider round~$t_2+2$.
Station~$q$ transmitted at round~$t_2+1$ and it has a pending packet, so~$q$ transmits at round~$t_2+2$.
Simultaneously $p$ transmits at this round, having just obtained a new packet at the preceding round~$t_2+1$.
This results in a collision at round~$t_2+2$, and this round is declared to be the next milestone round of execution~$\cE$.
Let the adversary do not inject any packets at this round, having injected two packets at round~$t_2+1$.
The behavior of the adversary throughout the execution is consistent with the injection rate~$1$ for the window of size~$w=2$.
\end{proof}

\section{Three Stations}

\label{sec:three}

The algorithms presented in this section are developed for a system of \emph{exactly three} stations, while impossibility results hold for any system with \emph{at least three} stations.
We show that for window adversaries, there is an adaptive algorithm that handles injections of a packet per round with fair latency while no full sensing algorithm can provide even just stability.

The three stations are named $p$, $q$ and~$r$.
The stations are ordered $\langle p, q, r\rangle$ in a cyclic fashion; when we refer to the next station after a given one, we mean this cyclic ordering.
If $g$ is any station, then the station immediately following $g$ in this order is denoted by~$g'$,  and the station immediately following $g'$ is denoted by~$g''$.


\Paragraph{Adaptive  algorithm of fair latency.}

We start with an adaptive algorithm designed for a specific window adversary.
The algorithm is called \textsc{3-Adaptive-Window$(i)$}, where $i$ is interpreted as the window of the adversary.
An execution of the algorithm is structured as a sequence of consecutive phases.
A \emph{phase} consists of $i$ consecutive rounds, except for the first phase which takes $i+1$ rounds.
Packets injected during the $i$ rounds preceding the last round of the previous phase are called \emph{available} during the phase; such packets are determined for any phase after the first one.
A phase, starting from the second one, is used to transmit the available packets.
Stations may attach control information to be transmitted with packets, so the algorithm is adaptive.
In particular, when a station transmits the last available packet, then an ``over'' bit is attached to the message when needed to indicate this fact.
The ``over'' bit is sometimes needed to allow some other station to take over without a delay, as we will see next, so that the three stations can transmit back-to-back without intervening void rounds.

For each phase, there is a station designated to be the \emph{last} one for the phase.
The algorithm is structured such that the last station of the phase transmits at the last round of the phase.
If the last station still has a pending packet to transmit at the last round of the phase, then the packet is transmitted and a control bit is attached to the packet, otherwise only a control bit is sent.
This particular control bit is to indicate whether the last station has packets available for the next phase or not.
In the first phase, all the stations pause  through the first $i$ rounds.
Station~$p$ is designated to be the last one for the first phase so that it transmits at the last round~$i+1$ of the first phase.
Consider an arbitrary phase, called simply \emph{current}, and let $g$ be the last station of this phase.
We consider two cases, depending on whether $g$ has any packets available to be transmitted in the next phase.

If $g$ has packets available for the next phase, then $g$ starts the next phase with a sequence of transmissions of all its available packets.
If $g$ has exactly $i$ such packets, then $g$ is also the last station for the next phase.
Otherwise, when $g$ transmits its last available packet in the next phase, $g$ attaches the ``over'' control bit to the last packet.
After hearing the ``over'' signal, the stations~$g'$ and~$g''$ know how many rounds have remained in the next phase; let $k$ be this number.
If any station among $g'$ or $g''$ has $k$ packets available for the next phase, then this station unloads all its $k$ available packets in the remaining rounds of the next phase, and also gains the status to be the last station for the next phase.
Otherwise station~$g'$ starts unloading its available packets, if any, while $g''$ gains the status to be the last station for the next phase.
If $g'$ does not have any available packets, then $g'$ simply pauses, otherwise $g'$ attaches the ``over'' bit to the last transmitted packet.
When $g''$ hears either silence or the ``over'' signal, then $g''$ starts unloading its available packets, if any.
Finally, $g''$ transmits at the last round of the phase.

Next consider the case when $g$ has not received any packets to be available for the next  phase.
Station~$g$ has informed the remaining stations about this fact by the transmission at the last round of the current phase.
Now the situation is similar as in the previous case after $g$ sent the ``over'' signal, with $k=i$.
If any among $g'$ or $g''$ has received $i$ packets to be available for the next phase, then this station unloads all its $i$ available packets in the remaining rounds of the next phase, and also gains the status to be the last station for the next phase.
Otherwise station~$g'$ starts unloading its available packets, if any, while $g''$ gains the status to be the last station for the next phase.
Station~$g''$ takes over as soon as either silence or the ``over'' signal is heard.


\begin{lemma}
\label{lem:3-adaptive-window-w}

Algorithm \textsc{3-Adaptive-Window$(w)$} provides packet latency at most $2w+1$ against the adversary with window size $w$ in a system of three stations.
\end{lemma}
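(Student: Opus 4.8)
The plan is to establish a latency bound of $2w+1$ by tracking how long a packet waits from its injection until its successful transmission, decomposing this wait into two contributions: the time until the phase in which the packet becomes available begins, and the position of the packet's transmission within that phase. I would first verify the key invariant that the algorithm is collision-free when the window parameter matches the adversary: because each phase has length $w$ and at most $w$ packets can be injected in the $w$ rounds that determine availability for a phase, the available packets always fit exactly into the phase, and the ``over'' bit together with the silence-detection mechanism lets the three stations transmit back-to-back with the designated last station always filling the final round. This is essentially asserted in the algorithm's description, so I would justify it by induction on phases, checking the two cases (the last station $g$ has available packets or not) and confirming that the designated-last-station status is always well-defined and that no two stations ever transmit simultaneously.

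Given collision-freeness, the second and main step is the latency accounting. A packet injected at some round $t$ is declared available during the phase whose availability window contains $t$. Since availability for a phase is defined by the $w$ rounds preceding the last round of the previous phase, and phases have length $w$, I would argue that a packet injected at round $t$ waits at most one full phase (i.e.\ at most $w$ rounds, up to the off-by-one from the first phase having length $w+1$) before the phase in which it is available starts. During that availability phase, every available packet is successfully transmitted within the $w$ rounds of the phase; so the packet is heard at most $w$ rounds after the phase begins. Combining the wait-until-available-phase-begins (at most $w$, or $w+1$ accounting for the elongated first phase) with the within-phase delay (at most $w$) yields the bound $2w+1$. I would set up the precise round bookkeeping carefully, since the bound is tight and the constant matters.

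The main obstacle I anticipate is the off-by-one and boundary bookkeeping around the definition of ``available.'' The availability window is phrased as the $w$ rounds preceding the last round of the previous phase (not simply the previous phase itself), and the first phase is exceptional with length $w+1$; these asymmetries must be reconciled so that every packet is counted in exactly one availability window and no packet is injected in a ``gap'' between consecutive windows. I would therefore state explicitly the round indices at which each phase begins and ends, compute the exact availability interval for a generic phase, and verify that the last round of each phase (when the last station transmits and signals its status for the next phase) is handled consistently with the availability definition. Once this alignment is pinned down, the two-phase latency estimate follows directly, and I would confirm the worst case—a packet injected at the earliest round of an availability window that is then transmitted in the last round of the corresponding phase—realizes delay close to $2w+1$, confirming the bound is correct and not improvable by the argument.
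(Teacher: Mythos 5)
Your proposal matches the paper's proof in both structure and substance: the paper likewise reduces the bound to the invariant, proved by induction on phases, that every packet available for a phase is heard during that phase (via a case analysis on whether the packet sits at the designated last station $g$, at $g'$, or at $g''$, and on how the ``over'' bit and silence hand off the channel), and then obtains $2w+1$ as $w$ rounds of the injection window, plus the last round of the current phase, plus $w$ rounds of the next phase. The only cosmetic difference is that you frame the invariant primarily as collision-freeness while the paper states it directly as ``all available packets are transmitted in the next phase,'' but your sketch asserts and plans to prove exactly that property, so the approaches are essentially identical.
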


\begin{proof}
It is sufficient to show that all the packets available at the end of a phase are transmitted in  the next phase.
This is because there are $w$ rounds during which packets are injected to be available for the next phase, followed by the last round of the current phase, and $w$ rounds of the next phase.

The proof is by induction on the numbers of phases.
No packets are injected before the first phase, therefore the base of induction holds for the first phase by default.
Next we show the inductive step.
Consider some current phase and suppose that when the next phase starts, then there are no packets available for the current phase still in queues.
Consider a packet~$u$ available for the next phase.

If $u$ was injected into a station~$g$ that is the last one for the current phase, then station~$g$ indicates in the transmission at the last round of the current phase that it holds packets  available for the next phase.
This allows $g$ to unload all its packets, including $u$, starting from the first round of the next phase.

Next suppose that packet~$u$ resides at~$g'$.
If $g$ has no packets available for the next phase, then $g'$ knows this after the last round of the current phase  and so $g'$ may start unloading all its available packets, including~$u$, starting from the first round of the next phase.
Otherwise station~$g$ unloads its available packets first, and then station~$g'$ takes over without a delay after station~$g$ is done, and unloads all its available packets including~$u$.

Finally, suppose that $u$ was injected into~$g''$.
If $g''$ has $w$ packets available for the next phase, then $g$ has no such packets and $g''$ begins unloading its packets starting from the first round of the next phase, by the specification of the algorithm.
Suppose that $g''$ has $k$ packets available for the next phase, where $k<w$.
If $g$ has $w-k$ available packets, then after $w-k$ rounds of the next phase, station~$g''$ knows that it needs to transmit in the remaining $k$ rounds of this phase, which $g''$ does.
Otherwise, when $g$ has some $\ell<w-k$ available packets, possibly $\ell=0$, then eventually all these packets have been heard and $g'$ is to take over.
Station~$g'$ has at most $w-k-\ell$ packets available for the next phase, so after the packets of $g$ and $g'$ have been heard and station~$g''$ takes over, there are at least $k$ rounds in the phase, so packet~$u$ is transmitted in one of them.
These cases exhaust all the possibilities, which shows that $u$ is transmitted eventually in the next phase.
This completes the proof of the inductive step.
\end{proof}

We show next that there is a stable and fair algorithm for three stations that does not rely on the knowledge of the window but but resorts to the mechanism of collision detection.
The algorithm is called \textsc{3-Adaptive-Col-Det}.
It is obtained by modifying algorithm \textsc{3-Adaptive-Window($i$)} as follow.
Initially algorithm \textsc{3-Adaptive-Col-Det} runs \textsc{3-Adaptive-Window($1$)} to try the window $w=1$.
When a collision occurs while running \textsc{3-Adaptive-Window($i$)}, then \textsc{3-Adaptive-Window($i+1$)} is invoked.
We apply a similar approach as in Section~\ref{sec:two} by using a queue called DELAY.
Packets stored already in the queues at a round  when \textsc{3-Adaptive-Window($i+1$)} is invoked are called \emph{old}.
First the old packets are transmitted, by the stations $p$, $q$ and $r$, in this order, while records of actions of the adversary are enqueued, with an entry for each round.
A transition to the next station is indicated either by a control bit ``over'' attached to the last old packet of a station or by only this bit transmitted when the station does not have any old packets.
After all the old packets have been heard, the stations start executing \textsc{3-Adaptive-Window($i+1$)}.
To this end, they keep dequeuing DELAY, if it is nonempty, while simultaneously enqueueing the current actions of the adversary.
Details are as follows.
An invocation of \textsc{3-Adaptive-Window$(i+1)$} starts with the first phase of $i+1$ rounds. 
During the $i$ initial rounds the stations pause storing up to $i$~packets to be available for the next phase.
When multiple invocations of \textsc{3-Adaptive-Window($i$)} are performed back-to-back, for consecutive values of $i$, we can modify the action performed during a transition from $i$ to $i+1$ to save on void rounds: let the stations begin  \textsc{3-Adaptive-Window($i+1$)} by dequeuing either $i+1$ entries from the queue DELAY or all of them, whichever is smaller, and let this contribute towards the beginning of the first phase of \textsc{3-Adaptive-Window$(i+1)$}.
Next the stations dequeue one entry per round, unless DELAY is empty, in which case it is not used.


\begin{lemma}
\label{lem:3-adaptive-col-det}

Algorithm \textsc{3-Adaptive-Col-Det} is of fair latency in a system of three stations with a channel with collision detection.
\end{lemma}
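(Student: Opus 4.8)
The plan is to follow the template of the proof of Theorem~\ref{thm:2-full-sensing-bounded-latency}, with \textsc{3-Adaptive-Window($i$)} playing the role of \textsc{2-Full-Sensing($i$)} and Lemma~\ref{lem:3-adaptive-window-w} supplying the key convergence ingredient. Fix a window adversary of type $(1,w)$; its burstiness is $w$, so it suffices to show that every packet is heard within $\cO(w)$ rounds of its injection.

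First I would argue that the window parameter tried by \textsc{3-Adaptive-Col-Det} never exceeds~$w$, so that at most $w-1$ collisions occur over the whole execution. The algorithm escalates from \textsc{3-Adaptive-Window($i$)} to \textsc{3-Adaptive-Window($i+1$)} only in reaction to a collision, and by Lemma~\ref{lem:3-adaptive-window-w} an execution of \textsc{3-Adaptive-Window($w$)} against a $(1,w)$ adversary transmits every available packet within its designated phase, so no two stations ever transmit at the same round and no collision can arise once the parameter reaches~$w$. Hence the sequence of invoked parameters $1,2,\dots$ halts at some $k\le w$, after which \textsc{3-Adaptive-Window($k$)} runs forever and, again by Lemma~\ref{lem:3-adaptive-window-w}, achieves a steady-state latency of at most $2k+1\le 2w+1$.

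It then remains to bound the extra delay incurred during the transient escalation phase, which I would decompose into three sources. First, each of the at most $w-1$ collisions wastes its own round and is followed by an unloading of old packets during which the control-bit-only ``over'' signals handing the channel among $p$, $q$, $r$ create $\cO(1)$ void rounds per transition; since the old packets are genuine packets that must be transmitted anyway, this contributes only $\cO(w)$ void rounds in total. Second, the queue DELAY faithfully records the adversary's per-round injections while old packets are unloaded and is then drained one entry per round, so no packet is lost and the recorded backlog is processed in order; I would verify that DELAY is eventually emptied and does not compound across successive escalations. Third, the first phase of each invocation of \textsc{3-Adaptive-Window($i$)} nominally pauses for $i$ rounds, but the modified transition dequeues up to $i+1$ entries of DELAY at once and credits them to this first phase, so a fresh void round is charged only when DELAY is short; exactly as in Theorem~\ref{thm:2-full-sensing-bounded-latency}, this cumulative effect adds $\cO(w)$ delay rather than $\cO(w)$ per escalation.

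The main obstacle I expect is this last source: showing that the amortized first-phase overhead is $\cO(w)$ in total and not $\Theta(w^2)$. This requires arguing that the DELAY-dequeuing credit genuinely absorbs the nominal pause of each first phase whenever the channel has been kept busy, so that an extra void round is charged only when the recorded backlog is truly small; the three-station hand-off via ``over'' bits must be checked to preserve this accounting, though it mirrors the two-station argument closely. Combining the steady-state bound $2w+1$ with the $\cO(w)$ transient overhead yields an overall latency of $\cO(w)=\cO(\text{burstiness})$, which is fair latency.
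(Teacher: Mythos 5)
Your proposal is correct and follows essentially the same route as the paper's proof: bound the number of collisions by $w-1$ via Lemma~\ref{lem:3-adaptive-window-w}, charge $\cO(1)$ void rounds to each old-packet unloading, use the DELAY-dequeuing credit to amortize the first-phase pauses to $\cO(w)$ in total rather than per escalation, and add the steady-state bound $2w+1$ from Lemma~\ref{lem:3-adaptive-window-w}. The paper's version simply tallies these same four contributions explicitly as $w-1+w-1+w+2w+1<5w$.
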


\begin{proof}
The proof is similar to that of Theorem~\ref{thm:2-full-sensing-bounded-latency}, with some arguments provided by Lemma~\ref{lem:3-adaptive-window-w}.
There are at most $w-1$ collisions before the correct window size is reached.
Each collision triggers unloading old packets.
Each such an instance of transmitting old packets may cause at most one void round because at most one station holds no packets.
The cumulative effect of first phases consisting of void rounds of procedures \textsc{3-Adaptive-Window($i$)}, for consecutive values of~$i$, contributes an additional delay of at most~$w$.
This is because transition to $i$ from $i-1$, for $i>1$, begins by dequeuing up to $i$ entries form DELAY: when there are less than $i$ entries in DELAY, then up to $i$ void rounds are generated, but when there are at least $i$ of them then no extra void round occurs.
By Lemma~\ref{lem:3-adaptive-window-w}, after the size of the window has been set correctly, packet delay is at most $2w+1$.
Thus the total delay of a packet is at most $w-1+w-1+w +2w+1 < 5w$.
\end{proof}

Next we present the ultimate algorithm for three stations for the channel without collision detection, it is called \textsc{3-Adaptive}.
We simulate algorithm \textsc{3-Adaptive-Col-Det} by detecting collisions by silences.
Consider a phase that is at least second after an invocation of \textsc{3-Adaptive-Window($i$)} in \textsc{3-Adaptive-Col-Det}, for $i\ge 1$.
If a collision occurs at a round of such a phase that is not the last one in the phase, then there are only two stations involved, say, $g$ and~$g'$, by the design of \textsc{3-Adaptive-Col-Det}.
In such a case, the station~$g$ stops transmissions in this phase while station~$g'$ continues by repeating the last transmission.
At this point the station~$g''$ may not know about the collision.
To allow $g''$ learn this, we use the property of \textsc{3-Adaptive-Col-Det} that some station transmits at the last round of every phase.
Let every station that knows about a collision in a phase transmit at the last round of the phase, the contents being a dummy message for a station that is not designated to be the last one for the phase.
All the stations hear silence at this round, so all of them learn of collision, which triggers an invocation of \textsc{3-Adaptive-Window($i+1$)}.
If the first collision in a phase occurs at the last round of the phase, then this automatically makes every station learn about the collision by the silence heard, which immediately triggers an invocation of \textsc{3-Adaptive-Window($i+1$)}.


\begin{theorem}
\label{thm:3-adaptive-fair-latency}

Algorithm \textsc{3-Adaptive} is of fair latency for a system of three stations against window adversaries.
\end{theorem}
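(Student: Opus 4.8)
The plan is to establish the theorem by a simulation argument: I will show that, on a channel without collision detection, algorithm \textsc{3-Adaptive} reproduces the behavior of \textsc{3-Adaptive-Col-Det} round for round, apart from a bounded number of extra void rounds per collision, and then invoke the fair latency of \textsc{3-Adaptive-Col-Det} proved in Lemma~\ref{lem:3-adaptive-col-det}. Fix a window adversary of type $(1,w)$. Both algorithms proceed through a monotonically increasing sequence of window sizes $1,2,\ldots,k$ for some $k\le w$; by Lemma~\ref{lem:3-adaptive-window-w}, once \textsc{3-Adaptive-Window$(k)$} is reached for the effective window~$k$, no further collision occurs, so there are at most $w-1$ collisions in the whole execution. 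The entire burden of the proof is to verify that, without collision detection, the three stations still agree on the current window size and phase boundaries at every round, and that the per-packet delay grows by only $\cO(w)$.

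The central invariant I would maintain is that all three stations hold the same current window size~$i$ and the same phase structure, so that every transition to \textsc{3-Adaptive-Window$(i+1)$} is performed by all stations simultaneously. First I would record the design fact, inherited from \textsc{3-Adaptive-Col-Det}, that in collision-free operation some station transmits successfully at the last round of every phase; hence a silent last round is impossible in normal operation and therefore unambiguously signals that a collision occurred somewhere in the phase. I would then treat the two cases built into \textsc{3-Adaptive}. A collision at the last round of a phase is heard as silence by all three stations at once, triggering a common transition. A collision at an earlier round involves exactly two stations $g$ and $g'$, by the sequential unloading discipline of \textsc{3-Adaptive-Window$(i)$}; both detect it at once, since each transmitted yet heard silence, while the third station $g''$ cannot yet distinguish this silence from a legitimate handoff silence. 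The key point is that every station aware of the collision defers its transition to the last round of the phase and transmits there; as a collision involves at least two stations, at least two transmit at the last round, forcing a silence that $g''$ reads as the collision signal. Thus all three detect the collision at the very same round, the last round of the phase, and advance to window $i+1$ in lockstep, preserving the invariant. I would also argue that a spurious take-over by $g''$ after misreading an interior silence can only create further collisions among stations already present at the forced last-round silence, so it neither suppresses a genuine transition nor manufactures a false one: a last-round silence occurs if and only if a real collision occurred in the phase.

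With synchronization secured, the delay accounting is short. Relative to \textsc{3-Adaptive-Col-Det}, the only new source of delay is that collision detection, and hence the start of old-packet unloading, is postponed from the collision round to the last round of its phase, a gap of at most one phase length, i.e.\ at most $w$ rounds. Every packet becomes \emph{old} at the first transition after its injection and is then unloaded over a collision-free sequence of rounds before normal operation resumes, so each packet absorbs this postponement at most once. Hence the latency of every packet under \textsc{3-Adaptive} exceeds its latency under \textsc{3-Adaptive-Col-Det} by at most~$w$; combined with the bound of Lemma~\ref{lem:3-adaptive-col-det} this gives latency $\cO(w)=\cO(\text{burstiness})$, which is fair latency.

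The step I expect to be the main obstacle is the synchronization argument of the second paragraph, specifically ruling out that the station $g''$ not involved in an interior collision desynchronizes the three stations by misinterpreting an intermediate silence, either failing to detect a collision or reacting to a phantom one. The argument rests on the two structural facts that normal last rounds are never silent and that interior collisions involve exactly two stations, so I would want to check both carefully against the precise phase discipline of \textsc{3-Adaptive-Window$(i)$}, including the edge cases in which the designated last station has or has not received packets available for the next phase.
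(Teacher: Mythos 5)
Your proposal is correct and follows essentially the same route as the paper: reduce to Lemma~\ref{lem:3-adaptive-col-det}, observe that completing each collision phase and forcing a silence at its last round lets all three stations detect the collision simultaneously, and charge the resulting extra void rounds as an additional $\cO(w)$ delay. The synchronization argument you flag as the main obstacle is handled in the paper within the description of \textsc{3-Adaptive} itself (using the fact that some station transmits at the last round of every phase), so the paper's proof of the theorem consists only of the delay accounting, which matches yours.
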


\begin{proof}
A proof similar to that of Lemma~\ref{lem:3-adaptive-col-det} applies.
The difference is in the number of void rounds in a phase of \textsc{3-Adaptive-Window($i$)}, for a suitable~$i$.
This is because algorithm \textsc{3-Adaptive} has stations complete each phase during which a collision occurs.
This contributes one extra void round in a phase, namely the one occurring at the last round of the phase.
Thus the delay of a packet is at most $w$ more than the bound of Lemma~\ref{lem:3-adaptive-col-det}.
\end{proof}


\Paragraph{Impossibilities for at least three stations.}

Next we show that there is no stable full sensing algorithm for a system of three stations.


\begin{theorem}
\label{thm:impossible-3-stations-full-sensing-stable}

No full sensing algorithm can be stable for three stations against the window adversary of burstiness~$2$.
\end{theorem}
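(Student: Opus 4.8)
The plan is to adapt the milestone-construction technique from Theorems~\ref{thm:2-impossible-fair-stable} and~\ref{thm:impossible-2-full-sensing-stable} to three stations, exploiting the fact that a full sensing algorithm forbids control bits. The goal is to build an execution $\cE$ of a purported stable full sensing algorithm $\cP$ in which infinitely many void milestone rounds $t_0<t_1<t_2<\cdots$ occur, while the adversary of burstiness~$2$ maintains average injection rate exactly~$1$ between consecutive milestones. Since each milestone is void, the number of queued packets at $t_i$ grows at least linearly in $i$, contradicting stability. As before, I would set $t_0$ to be the first round (void because nothing has been injected yet) and proceed inductively: given a void milestone $t_i$, produce $t_{i+1}>t_i$ together with an injection pattern on $(t_i,t_{i+1}]$ that averages one packet per round and is realizable by a $(1,2)$ leaky-bucket (hence also window) adversary.

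**The core idea** is to use the extra station to run the same ``shadowing'' argument as in Theorem~\ref{thm:impossible-2-full-sensing-stable} but now against \emph{two} silent stations at once. First I would consider the scenario $\cS_1$ in which, starting from $t_i+1$, the adversary injects one packet per round into a single designated station, say $r$. If a void round ever appears, we take it as $t_{i+1}$ and are done. Otherwise a packet is heard every round, and by stability there must be rounds where $r$ transmits successfully while \emph{both} $p$ and $q$ pause; call such a round $y_1$. The key full-sensing property (the abstraction stated just before Theorem~\ref{thm:impossible-2-full-sensing-stable}, extended to three stations) is that because $p$ and $q$ receive no control bits and hear only the channel feedback, their behavior at $y_1$ is unaffected by how many packets were injected into them, as long as the sequence of what is \emph{heard} is preserved. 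I would therefore run a sequence of scenario modifications that secretly load packets into $p$ and $q$ on earlier rounds while simultaneously removing the compensating injections into $r$, one station and one round at a time, so that neither $p$, $q$, nor $r$ can detect the swap from the channel history. Each swap preserves the heard sequence, hence preserves that $p$ and $q$ pause at $y_1$ and $r$ transmits at $y_1$.

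**The hard part** — and the main obstacle — is the bookkeeping to guarantee the injection pattern stays within burstiness~$2$ while redistributing packets among three stations rather than two. With two stations the ``add to $p$, delete from $q$'' swaps kept the per-round total at~$1$; with three stations I must be careful that when I reroute injected packets from $r$ into $p$ and $q$ I never exceed two injections in a single round and I never create a round with more heard-forcing obligations than the algorithm can discharge, since now two stations (not one) must be silenced at the target round. I expect to repeat the silencing/reloading step across a sequence of forcing rounds $y_1,y_2,\ldots,y_k$, each time draining real packets out of the two ``shadow'' stations until one of them is forced to transmit a packet it no longer has, producing a collision or a silence. Taking $k$ larger than the total packet count present at $t_i$ forces a void round, which becomes $t_{i+1}$. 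Finally I would verify, exactly as in the two-station proofs, that the net injection pattern on $(t_i,t_{i+1}]$ consists of alternating $2,0,2,0,\ldots$ bursts so that the average is one packet per round and the $(1,2)$ constraint is respected, completing the contradiction with stability.
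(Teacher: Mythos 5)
There is a genuine gap, and it sits exactly where the difficulty of this theorem lives: the \emph{window} constraint. You write that your injection pattern will be ``realizable by a $(1,2)$ leaky-bucket (hence also window) adversary,'' but that implication is backwards. At injection rate~$1$ the leaky-bucket adversary is strictly stronger: a pattern legal for the leaky-bucket adversary of burstiness~$2$ need not be legal for any window adversary (the paper makes this point explicitly, citing Ros\'en). Concretely, your scheme starts from one packet per round into $r$ and then, one round and one station at a time, adds a packet to $p$ or $q$ before deleting the compensating packet from $r$. Every such intermediate scenario has some round with $2$ injections adjacent to rounds with $1$ injection, i.e.\ $3$ packets in a $2$-round segment, which violates the type-$(1,2)$ window constraint (indeed it violates type $(1,w)$ for every $w$). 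Since the execution you ultimately exhibit is whichever scenario first produces a void round, and that may well be one of these intermediate scenarios, your construction only yields a leaky-bucket execution. That proves nothing new --- the leaky-bucket version for $n\ge 2$ is already Theorem~\ref{thm:impossible-2-full-sensing-stable} --- and it does not prove the stated theorem. Reordering the swap (delete from $r$ first, then add to $p$) keeps every scenario within the window bound but then the segment between milestones has a one-packet deficit, which exactly cancels the one void round and kills the queue growth. There is also a decisive structural reason why no ``adaptation'' of the two-station swapping argument can succeed here: Theorem~\ref{thm:2-full-sensing-bounded-latency} gives a full sensing algorithm of fair latency for two stations against window adversaries, so the two-station technique must break against window adversaries, and any correct proof of the present theorem has to use the third station in an essential way. (Secondary issues: your final accounting is muddled --- you speak of ``draining the two shadow stations'' $p,q$ although your construction loads them and drains $r$, and the claimed net pattern $2,0,2,0,\ldots$ does not arise from your one-per-round base scenario.)

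The paper's proof takes a different route precisely to stay window-legal. Between milestones the adversary uses the pattern $2,0,2,0,\ldots$ (one packet into each station of a chosen \emph{pair} at every other round), which fits inside every window of two rounds, and considers the three scenarios obtained from the three pairs $\{p,q\}$, $\{p,r\}$, $\{q,r\}$. After a round $t'$ beyond which the uninjected station of each scenario has fallen silent, stability forces $q$ to transmit at some $t''$ in the $\{p,q\}$ scenario with $p$ pausing; indistinguishability of the scenarios to the common station of each pair then forces $r$ to transmit at $t''$ in the $\{p,r\}$ scenario, and hence both $q$ and $r$ to transmit at $t''$ in the $\{q,r\}$ scenario --- a collision, hence a void milestone round, with no scenario ever leaving the window-$2$ constraint. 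If you want to salvage your write-up, replace the single-station base pattern and the round-by-round swaps with this three-way pairwise-indistinguishability argument.
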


\begin{proof}
Consider a full sensing algorithm $\cP$ for a system of three stations $p$, $q$ and~$r$.
We argue similarly as in the proof of Theorem~\ref{thm:impossible-2-full-sensing-stable} by  identifying milestone rounds $t_{0}, t_{1},t_{2},\ldots$ in an execution of $\cP$, where $t_{i}<t_{i+1}$ for any $i\ge 0$, while maintaining the average injection rate~$1$.

Define $t_{0}$ to be the first round of execution~$\cE$ of~$\cP$.
Let the adversary inject one packet into~$p$ at the first round. 
Round~$t_0$ is void in~$\cE$ because no packets have been injected before the round.
Suppose the execution~$\cE$ and injections have been defined by a milestone round~$t_{i}$, for some $i\ge 0$: we determine the milestone round~$t_{i+1}>t_{i}$ and specify how packets are injected in the interval starting from~$t_{i}+1$ through~$t_{i+1}$ in~$\cE$.

We consider a sequence of conceptual scenarios beyond round~$t_i$.
The adversary injects packets according to the pattern $2,0,2,0,2,\ldots$ in each of the scenarios.
This means that two packets are injected at round~$t_i+1$, then no packets at all at round~$t_i+2$, then two packets at round~$t_i+3$, and so on.
The adversary always uses the same two stations to inject a packet per station according to this pattern.
We consider three scenarios corresponding to three pairs of stations possible to select out of $p$, $q$ and~$r$.
The adversary injects packets into stations~$p$ and~$q$ only in scenario~$\cS_1$, into stations $p$ and~$r$ only in scenario~$\cS_2$, and into stations $q$ and~$r$ only in scenario~$\cS_3$.
Let $t'\ge t_i$ be the first round such that station~$r$ does not transmit after round~$t'$ under scenario~$\cS_1$, and $q$ does not transmit after~$t'$ under~$\cS_2$, and $p$ does not transmit after $t'$ under~$\cS_3$.
Such a $t'$ exists as we refer to a station that the adversary does not inject into after~$t_i$ under the respective scenario.

We show that a void round is bound to occur in one of the scenarios.
Suppose that it is not the case to arrive at a contradiction.
Consider the first scenario~$\cS_1$ in which the adversary injects packets only into stations~$p$ and~$q$.
Station~$q$ transmits at some round~$t''>t'$, due to the stability of the algorithm, while station~$p$ pauses at round~$t''$.
Consider the second scenario~$\cS_2$ in which the adversary injects packets only into the stations $p$ and~$r$.
Station~$p$ does not transmit at round~$t''$ under scenario~$\cS_2$, since $p$ paused at round~$t''$ in scenario~$\cS_1$ and scenario~$\cS_2$ is perceived by~$p$ as identical with~$\cS_1$ up to round~$t''$ in both the feedback from the channel and the pattern of packet injections, as there are no void rounds.
Therefore station~$r$ transmits at round~$t''$ under scenario~$\cS_2$.
Finally consider scenario~$\cS_3$ in which the adversary injects only into stations~$q$ and~$r$.
Station~$q$ behaves in the same way up to round~$t''$ under this scenario as under scenario~$\cS_1$ as it cannot detect any difference in both the feedback from the channel and the pattern of packet injections, as there are no void rounds.
Similarly, station~$r$ behaves in the same way up to round~$t''$ under this scenario as under scenario~$\cS_2$.
It follows that both stations $q$ and~$r$ transmit at round~$t''$ which results in a collision.
This creates a void round.

Take one scenario~$\cS$, from among the scenarios $\cS_i$, for $1\le i\le 3$, that produces a void round~$t>t_i$ the soonest.
Define $t_{i+1}=t$ and extend the execution $\cE$ starting from the round~$t_i+1$ through $t=t_{i+1}$ as follows.
Let the adversary inject packets according to scenario~$\cS$ up to round~$t - 1$.
If two packets are injected at round~$t-1$ then let the adversary pause and not inject any packet at round~$t$, otherwise when no packets are injected at round~$t-1$ then let the adversary inject one packet into any station at round~$t_{i+1}=t$.

It follows from the construction that all the milestone rounds~$t_i$ are void while the average number of packets injected between $t_i$ and~$t_{i+1}$ is one packet per round.
We obtain that the number of packets at the queues of the three stations at round~$t_{i}$ is at least~$i+1$, for $i \ge 0$.
This yields the existence of an execution~$\cE$ of $\cP$ in which the number of packets in queues grows unbounded, so the algorithm is unstable.
The way packets are injected is consistent with the definition of an adversary of burstiness~$2$.
\end{proof}

We show next that no algorithm that withholds the channel can be stable for a system with at least three stations.


\begin{theorem}
\label{thm:impossible-3-stations-withhold-stable}

No algorithm that withholds the channel can be stable for three stations against the leaky-bucket  adversary of burstiness~$2$.
\end{theorem}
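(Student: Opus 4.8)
The plan is to reuse the \emph{milestone} framework from the proofs of Theorems~\ref{thm:2-impossible-fair-stable}, \ref{thm:impossible-2-full-sensing-stable} and~\ref{thm:impossible-3-stations-full-sensing-stable}: I would build an execution~$\cE$ of a given withholding algorithm~$\cP$ together with an increasing sequence $t_0<t_1<t_2<\cdots$ of \emph{void} milestone rounds, while guaranteeing that exactly one packet per round is injected on the average between consecutive milestones. Since a void round carries no packet across while one packet per round is injected, the number of queued packets at round~$t_i$ is at least $i+1$, so it grows without bound and contradicts stability. As in the earlier proofs, the adversary injects according to a $2,0,2,0,\dots$ pattern, which realizes rate~$1$ with burstiness~$2$ and lets it place two packets at a chosen round (one per station) whenever it needs to sneak a packet into a station without exceeding its budget. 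The essential point is that three stations are needed: for two stations the withholding algorithm \textsc{2-Adaptive} is already stable, so the third station must be what is exploited to manufacture void rounds.

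For the inductive step from~$t_i$ to~$t_{i+1}$ I would first run the conceptual scenario in which the adversary feeds one packet per round into a single station~$p$ forever. If any void round~$x>t_i$ occurs, set $t_{i+1}=x$ and feed $p$ one packet per round up to~$x$; this keeps the rate at~$1$ and we are done. Otherwise a packet is heard at every round after~$t_i$, and since $p$ is the only station receiving packets, $p$ must eventually transmit at every round, and by the withholding property~$p$ monopolizes the channel from then on while $q$ and~$r$ stay silent. This monopoly is exactly the leverage I want: while $p$ alone transmits, the two idle stations emit no control bits, so \emph{no message on the channel can reveal anything about the contents of their queues}. This is the feature that replaces the full-sensing hypothesis used in Theorem~\ref{thm:impossible-3-stations-full-sensing-stable}, where the absence of control bits was granted for free.

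I would then force a void round at a hand-off. Stopping the feed of~$p$ lets its queue drain in finitely many rounds, after which~$p$ is empty and, by withholding, will not transmit again unless re-fed. During the drain the freed rate budget is spent loading the idle stations, and I would compare three scenarios that differ only in which pair among $\{p,q\}$, $\{p,r\}$, $\{q,r\}$ receives the injected packets, exactly as in the proof of Theorem~\ref{thm:impossible-3-stations-full-sensing-stable}. Because each injected-into-but-silent station is invisible on the channel, a station is unable to distinguish the scenario in which only it is fed from the scenario in which its silent partner is also fed, up to the first round at which that partner transmits. Stability forces each fed station to transmit eventually, and splicing the three scenarios together drives two of the stations to transmit at the same round, producing a collision; this collision is the sought void milestone~$t_{i+1}$, and the adversary's injections are trimmed (pausing after a double injection, or adding a single packet after an empty round) so that the $2,0,2,0,\dots$ budget is respected across the cut.

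The main obstacle is precisely this last indistinguishability-and-simultaneity argument for \emph{general} (possibly adaptive) algorithms: unlike the full-sensing setting of Theorem~\ref{thm:impossible-3-stations-full-sensing-stable}, a transmitting station here may piggy-back control bits, so I must keep the compared scenarios differing only in injections into stations that remain silent throughout the relevant window, and I must verify that no earlier void round spoils the indistinguishability (if one does, it simply serves as~$t_{i+1}$ and the step ends early). Making the withholding property do the work of guaranteeing a single transmitter at a time, together with careful leaky-bucket bookkeeping at the hand-off, is the delicate part; the remaining counting that yields the unbounded queue is then routine.
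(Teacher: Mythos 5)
Your overall milestone framework and your Phase~1 (drive one station into a withholding monopoly so that the other two stations are silent and hence mutually invisible) are in the spirit of the paper's strategy. The gap is in how you manufacture the void round at the hand-off. You propose to import the three-scenario collision splice of Theorem~\ref{thm:impossible-3-stations-full-sensing-stable}, comparing the pairs $\{p,q\}$, $\{p,r\}$, $\{q,r\}$. But that splice needs, for instance, station~$q$ to behave identically under ``feed $p,q$'' and ``feed $q,r$'' up to the round of the engineered collision, and those two scenarios differ in the injections into~$p$ --- the very station that is transmitting (monopolizing, then draining) throughout the comparison window. For a general adaptive algorithm, $p$'s messages may carry control bits that depend on its state (its load, an ``over'' bit as in \textsc{2-Adaptive}), so changing $p$'s feed changes what $q$ and~$r$ hear, and the claim that $q$ acts under $\{q,r\}$ as it did under $\{p,q\}$ collapses. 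In the full-sensing setting this is exactly what the full-sensing hypothesis rescues (all $q$ ever learns is ``a packet was heard''), and your own stated safeguard --- vary injections only at stations that stay silent --- is violated by your choice of scenario triple, since $p$ is fed in two of the scenarios and not in the third.

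The paper's proof closes this hole by never touching the transmitting station's injections. After cycling the monopoly through $p$, $q$, $r$, it freezes $r$'s feed, so that $r$'s transmissions --- and hence the entire channel history --- are identical in every scenario up to the round~$t$ at which $r$ empties its queue, and it varies only how the concurrently injected packets are split between the two \emph{silent} stations $p$ and~$q$. A chain of scenarios then shifts the load one packet at a time from $p$ to~$q$: since $p$ cannot detect changes at~$q$ and vice versa, one alternately deduces ``$p$ transmits at $t+1$, hence $q$ pauses (else a collision)'' and ``$q$ still pauses, hence $p$ must transmit (else silence),'' until the final scenario in which $p$ holds nothing while $q$ --- still unable to tell the difference --- pauses, yielding a \emph{silent} round $t+1$ rather than a collision. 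To salvage your route you would need this kind of one-sided perturbation confined to silent stations; as written, the three-way splice fails for adaptive withholding algorithms.
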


\begin{proof}
Consider an algorithm $\cP$ that withholds the channel for a system of three stations $p$, $q$ and~$r$.
We argue by  identifying void milestone rounds $t_{0}, t_{1},t_{2},\ldots$ in an execution of $\cP$, where $t_{i}<t_{i+1}$ for any $i\ge 0$, while maintaining the average injection rate~$1$. 

Define $t_{0}$ to be the first round of execution~$\cE$ of~$\cP$.
Let the adversary inject one packet into any station at the first round; round~$t_0$ is void in~$\cE$.
Suppose the execution~$\cE$ and injections have been defined by a milestone round~$t_{i}$, for some $i\ge 0$: we define the milestone round~$t_{i+1}>t_{i}$ and how packets are injected in the interval of~$\cE$ starting from~$t_{i}+1$ through~$t_{i+1}$.
If a void round occurs in the course of the described construction, then this gives a milestone round and we stop to determine $t_{i+1}$ and injections up to this round.
In what follows we consider cases in which the current round is not a void one, whenever logically possible.

We first make the queues at two stations empty in the following way.
Let the adversary keep injecting packets into~$p$ only, a packet per round, starting from round~$t_i+1$.
Eventually $p$ broadcasts, because algorithm~$\cP$ is stable.
At a round when $p$ transmits for the first time, switch injecting to~$q$ and keep injecting packets into~$q$ only, a packet per round.
Eventually $q$ broadcasts, because algorithm~$\cP$ is stable.
At a round when $q$ transmits for the first time, switch injecting to $r$ and keep injecting packets into~$r$ only, a packet per round.
Eventually $r$ broadcasts, because algorithm~$\cP$ is stable.
At the round when $r$ takes over, let the adversary inject only into either~$p$ or~$q$, a packet per round.
Consider the round~$t$ when $r$ transmits its last packet and its queue becomes empty.
At this point there are a number of packets in the queues of~$p$ and~$q$.
We consider a sequence of conceptual scenarios beyond that round, depending on the partitioning of the packets between $p$ and~$q$.

The first scenario~$\cS_0$ is such that all the recently injected packets are in the queue of~$p$ while the queue of~$q$ is empty.
This means that $p$ needs to take over from $r$ after hearing its last packet at round~$t$, since otherwise there would be a void round.

Consider another scenario~$\cS_1$, with the only difference with respect to $\cS_0$ being in that $q$ has one packet.
This is possible because of burstiness~$2$ of the adversary.
These two scenarios are identical for~$p$ at the round~$t$ when $r$ transmits the last packet, so in such a scenario~$p$ also transmits at round~$t+1$ just after~$r$.

Consider a scenario~$\cS_2$ which differs from~$\cS_1$ only in that the number of packets at~$p$ is one less than in~$\cS_1$ while $q$ has also just one packet.
The adversary does not need to use burstiness to obtain such a scenario as this requires injecting precisely one packet per round while $r$ is transmitting.
Since $q$ cannot see any difference between $\cS_2$ and~$\cS_1$ at the round~$t$ of the last transmission of~$r$, again $q$ pauses while $p$ broadcasts at round~$t+1$ just after station~$r$, or otherwise there would be a void round.

Consider a scenario~$\cS_3$ which differs from~$\cS_2$ only in this that $q$ has two packets rather than only one at the last round~$t$ when $r$ broadcasts.
This is possible to achieve due to the burstiness~$2$ of the adversary.

These two scenarios $\cS_2$ and~$\cS_3$ are identical for~$p$ at the round~$t$, so in such a scenario~$p$ also transmits at round~$t+1$, or otherwise the round would be void.

We continue through a sequence of scenarios to obtain a scenario~$\cS_4$ such that  station~$p$ has just one packet while $q$ has all the remaining packets.
Again station~$p$ transmits for the first time at round~$t+1$ immediately after $r$ while $q$ pauses at this round.

The next scenario~$\cS_5$ has the adversary not inject the one packet into~$p$ but only the packets injected into~$q$ in scenario~$\cS_4$.
Station~$q$ cannot notice a difference between $\cS_5$ and~$\cS_4$ at the last round~$t$ of $r$ transmitting, so it pauses at round~$t+1$ as in~$\cS_4$.
This results in both $p$ and~$q$ pausing at round~$t+1$, which results in round~$t+1$ being silent.
We have thus shown that there is a scenario for the adversary to enforce a void round.

Take one scenario~$\cS$ of injections of the adversary, from among the scenarios discussed above, that produces a void round~$t'$ the soonest.
The remaining part of the argument is similar to the conclusion of the proof of Theorem~\ref{thm:impossible-3-stations-full-sensing-stable}.
\end{proof}

\section{Many Stations}

\label{sec:four}

We develop an adaptive algorithm that is stable against leaky-bucket adversaries for any number~$n$ of stations.
The queues at stations can grow up to $\Omega(n^2+\text{burstiness})$, which we show to be unavoidable.
The algorithm uses the queue sizes at stations in an essential way, in that they affect state transitions.
Next we show that stability cannot be achieved for certain restricted algorithms, against window adversaries in systems with at least four stations, in particular for algorithms that are queue-size oblivious.


\Paragraph{Stable algorithm.}

We call \textsc{Move-Big-To-Front}$(n)$ an adaptive algorithm to be presented next.
The number of stations~$n$ affects the course of action each time a station transmits, we emphasize this by including $n$ as a parameter in the name of the algorithm.
The algorithm schedules exactly one station to transmit at each round, so collisions never occur.
This is implemented by using a conceptual ``token'' giving the right to transmit, which is assigned in such a way that at each round exactly one station holds the token.

Every station maintains a list of all the stations in its private memory.
The list is initialized as sorted in the increasing order by the names of the stations.
The operations performed on the lists are determined uniquely by what has been heard on the channel.
Hence all these lists at stations are manipulated in exactly the same way.
This guarantees that the lists are identical in all stations at all rounds.
Because of this property, we refer to all these lists as copies of \emph{the} list.
Initially the first station in the list holds the token.

The algorithm is executed at a given round as follows.
A station~$p$ with the token broadcasts a packet, if it has any.
If the station with the token does not have a pending packet, then the station does not transmit, which results in a silent round.
A station considers itself \emph{big} at a round when it has at least $n$ packets available.
A big station attaches a control bit to indicate this status to each packet it transmits while big.
After a station announces itself to be big,  it is moved to the front of the list and keeps the token for the next round.
After a station with a token broadcasts while it is not big or when it does not transmit at all, then the token is moved immediately to the next station in the list.
Here being ``next'' is understood in the cyclic ordering of the list of stations, in that the token from the last station in the list is moved directly to the first one.

Algorithm \textsc{Move-Big-To-Front} resembles \textsc{Round-Robin} in that when queues are small then the token traverses the list in a cyclic fashion: when a station~$p$ is followed immediately by $q$ in the list and $p$ holds less than $n$ packets at a round of transmission, then $q$ obtains the token immediately after the transmission by~$p$.
A difference with \textsc{Round-Robin} is in the possibility to have the token hop to the front of the list: when a station~$p$ is big at the time of its transmission, then $p$ still holds the token after being moved to the front of the list, so that when $p$ eventually releases the token, it is the second station in the list, the one directly following the station~$p$ at the front of the list, that transmits after~$p$.

Consider a scenario in which, starting from some round, the adversary injects packets into one station only.
Such a station is eventually detected to be big and then this station keeps transmitting throughout the remaining part of the execution, while the other stations are neglected.
It is a scenario in which some packets are never transmitted, so the algorithm is not fair.

The design of this algorithm is based on the following intuitions how to provide stability.
Define a \emph{pass of the token} to be a traversal of the token starting at the front of the list and ending either at a new big station or again at the front station of the list after traversing the whole list, whichever occurs first.
Define a \emph{life cycle} of a station to be a time period which starts either at the first round of the execution or at a round when the station is discovered to be big, and which ends just before the station is discovered to be big again, when this happens.
Suppose there are at least $n^{2}$ packets in queues of stations at the beginning of a pass of the token.
Some station has at least $n$ packets, by the pigeonhole principle.
It follows that a new big station is discovered during this pass of the token.
Because it takes at least $n-1$ token passes for a former big station to have its queue empty, a former big station either maintains a nonempty queue during its life cycle, or it drifts towards the end of the list so that eventually it will not be visited by the token at all, unless the adversary is lazy and does not inject as many packets as possible.
It follows that if the adversary keeps injecting  at full power of one packet per round, then eventually a round occurs such that afterwards packets are heard at all rounds.
We make these ideas precise in the proof of the next theorem.


\begin{theorem}
\label{thm:move-big-to-front-stable}

If algorithm \textsc{Move-Big-To-Front}$(n)$ is executed against the leaky-bucket adversary of burstiness $b+1$, then the number of packets stored in queues is at most $2(n^2 + b)$ at any round.
\end{theorem}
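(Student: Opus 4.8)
The plan is to reduce the statement to a bound on the queue size at \emph{silent} rounds (rounds in which no packet is heard) and then to control those rounds through the big-to-front structure. Write $V(t)$ for the total number of queued packets at the end of round $t$, so $V(0)=0$ and the goal is $V(t)\le 2(n^2+b)$ for all $t$. Since at most one packet is heard per round, and the adversary of type $(1,b)$ injects at most $m+b$ packets in any $m$ consecutive rounds, the net change of $V$ over an interval of length $m$ is at most $b$ plus the number of silent rounds in that interval. In particular, if $s\le t$ is the last silent round before $t$, then every round of $(s,t]$ carries a successful transmission, so $V(t)\le V(s)+b$; and if no silent round precedes $t$ then $V(t)\le b$ outright. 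Thus it suffices to prove that $V(s)\le 2n^2+b$ at every silent round $s$.

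Next I would use the pigeonhole fact already isolated in the discussion preceding the theorem: if $V\ge n^2$ at the start of a pass then some station holds at least $n$ packets, and since a station transmits only while it holds the token, that station is still big when the token reaches it, so the pass discovers a big station. Now fix a silent round $s$. If $V(s)<n^2$ there is nothing to prove, so assume $V(s)\ge n^2$ and let $[a,s]$ be the maximal interval over which $V\ge n^2$ throughout; then $V(a-1)<n^2$ (or $a$ is the first round). By the accounting above, $V(s)\le V(a-1)+b+\sigma< n^2+b+\sigma$, where $\sigma$ is the number of silent rounds in $[a,s]$. Hence the whole theorem comes down to the single estimate $\sigma\le n^2$: during a stretch in which at least $n^2$ packets are always present, at most $n^2$ rounds are wasted.

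The heart of the argument, and the step I expect to be the main obstacle, is this bound $\sigma\le n^2$, where the big-to-front rule is essential. The mechanism to exploit is that a station discovered big unloads its surplus in consecutive token-holding rounds, none of which is silent, and is then left with fewer than $n$ \emph{residual} packets that it keeps; those residuals are released only one per later token visit. Consequently a former big station stays non-empty through many passes and, as freshly discovered big stations keep jumping ahead of it, it drifts toward the back of the list. Since a silent round occurs exactly when the token meets a currently empty station, I would charge each silent round to the station visited and argue that the reordering forbids the token from meeting the same empty station twice before some intervening big discovery refills the front, so that over the interval the silent rounds are accounted for by the $n$ stations in their at most $n$ possible list positions, giving $\sigma\le n^2$. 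Combining the pieces yields $V(s)<2n^2+b$, hence $V(t)\le V(s)+b<2(n^2+b)$, as required.

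The delicate point in the last step is that an adversary injecting at the full rate of one packet per round could try to keep re-filling and re-emptying a single fixed station so as to manufacture a silent round over and over. The resolution I would pursue uses the rate-one constraint together with big-to-front: sustaining the token on big stations near the front consumes essentially the whole injection budget, so a repeatedly-emptied station is starved and its re-fillings are too few to create more than $\cO(n)$ silent rounds at it; summed over the $n$ stations this produces the bound $\sigma\le n^2$. Making this charging precise --- tying each silent round either to a distinct big discovery or to a distinct residual packet, and checking the constants so that the final bound comes out as exactly $2(n^2+b)$ rather than merely $\cO(n^2+b)$ --- is the part that will demand careful round-by-round bookkeeping built on the notions of pass and life cycle introduced just before the theorem.
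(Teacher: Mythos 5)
Your outer framework is sound and is essentially the paper's: reduce to an interval on which the total queue size $V$ stays at least $n^2$, note by pigeonhole that every pass of the token must then discover a big station, and observe that the accounting identity (injections minus transmissions equals silent rounds plus at most $b$) turns a bound of $n^2$ on the silent rounds of that interval into the stated bound $2(n^2+b)$. The constants also come out right. But the entire weight of the theorem rests on the claim $\sigma\le n^2$, and that is exactly the step you leave as a sketch --- and the sketch you offer does not work. The assertion that ``the reordering forbids the token from meeting the same empty station twice before some intervening big discovery refills the front'' is false: the token can meet the same empty station many times. Nor does the ``$n$ stations times $n$ list positions'' count go through, since a station can be met empty repeatedly while occupying the same position. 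Your final paragraph then reaches for a rate-based starvation argument (``sustaining the token on big stations consumes the injection budget''), which is a different and unsubstantiated idea; the correct bound on $\sigma$ is purely combinatorial and does not use the injection rate inside the interval at all.

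The charging argument you are missing is the following. Let $C$ be the set of stations discovered big at least once during the interval. If $p\notin C$ is met empty by the token in some pass, that pass discovers some big $q\in C$ that was behind $p$; after the discovery $q$ jumps to the front and is never again behind $p$ (since $p$ never jumps), so each ordered pair $(p,q)$ with $p\notin C$, $q\in C$ is charged at most one silent round, giving at most $(n-|C|)\,|C|$. For $p\in C$: once $p$ is first discovered big it retains at least $n-1$ packets and is visited by the token fewer than $|C|\le n$ times before its next discovery, so it is never met empty again; before its first discovery, each empty visit is charged to a distinct member of $C$ behind it that becomes big in that pass and moves permanently ahead of $p$ until $p$ itself becomes big, giving at most $|C|\,(|C|-1)$. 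Summing, $\sigma<n|C|\le n^2$. Without this (or an equivalent) pairing argument, your proof has a genuine hole precisely at its central claim.
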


\begin{proof}
Suppose, to the contrary, that there is a round with at least $2n^2+2b+1$ packets in queues.
There is a time segment~$T$ with the following properties:
\begin{enumerate}
\item[(i)] there are at least $n^2$ and at most $n^2+b$ packets in queues at the beginning of $T$,
\item[(ii)]  there are at least $n^2$ packets in queues at each round of $T$, and
\item[(iii)] there are at least $2n^2+2b+1$ packets in queues at the end of $T$.
\end{enumerate}
In the remaining part of the proof we restrict our attention only to the rounds in~$T$.
The notions of a pass of the token and of a life cycle of a station are relativized to~$T$.

Consider a pass of the token.
A new big station is eventually found during this pass, because at least one station has at least  $n$ packets in its queue, by the pigeonhole principle combined with property~(ii) of~$T$.

Let $C$ denote the set of all the stations that are discovered at least once to be big during a round in~$T$.
If a station~$p$ is not in $C$, then $p$ will eventually drift through the list to be located behind all the stations in~$C$.
When a token passes through $p$ and there are no packets in~$p$, then this results in a silent round.
We assume the worst case when each event of receiving the token by~$p$ results in a silent round.
Let $q$ be the station discovered to be big in this pass of the token.
Station~$q$ is moved to the front of the list and $q$ will never again be behind~$p$, so that $q$ can be associated with exactly one silent round of each such a station~$p$ not in~$C$.
Since there are $|C|$ such stations $q$ and $n-|C|$ such stations~$p$, the total contribution of the stations that are not in~$C$ to the number of silent rounds is at most $(n-|C|)\cdot |C|$.

We claim that once a station~$q$ is discovered to be big, then $q$ transmits a packet each time $q$ holds a token.
To show that this is the case, consider a life cycle of~$q$.
During a pass of the token, either $q$ is discovered to be big again, which starts a new life cycle for~$q$ with at least $n-1$ packets still remaining in the queue, or one station in~$C$ located behind~$q$ is discovered as big.
The latter event results in the number of stations in $C$ behind $q$ in the list decreasing by one.
Since there are at most $|C|-1 < n$ stations in $C$  behind $q$ in the list, station~$q$ is visited at most $|C|-1<n$ times by the token during the life cycle of~$q$.
It follows that after each station in~$C$ has been discovered at least once to be big, no station in $C$ ever has an empty queue.

It remains to estimate the number of silent rounds before a station~$q$ in~$C$ becomes big for the first time in~$T$.
Notice that $q$ could obtain the token with an empty queue at most $|C|-1$ times before this happens.
This is because each time $q$ has the token with an empty queue, there is some station~$p$ from $C$ behind~$q$ on the list such that $p$ is discovered as big in this pass of the token.
The discovery results in moving~$p$ to the front of the list, so that $p$ stays before $q$ until $q$ becomes big for the first time.
There are $|C|$ such stations $q$ and $|C|-1$ such stations $p$.
Therefore the number of silent rounds contributed by all the stations in $C$ is at most $|C|\cdot (|C|-1)$.

To sum up, the total number of silent rounds in $T$ is at most
\begin{equation}
\label{eqn:silent}
(n-|C|)\cdot |C|+(|C|-1)\cdot |C| < n\cdot |C| \le n^2\ .
\end{equation}
The difference between the number of injected packets and the number of transmitted packets equals the number of void rounds plus burstiness, which is at most $n^2+b$ by~\eqref{eqn:silent}.
Combine this fact with property (i) of $T$ to obtain $(n^2+b)+(n^2+b)=2n^2+2b$ as an upper bound on the number of packets in the system at the end of $T$.
This contradicts property (iii) defining~$T$.
\end{proof}

Next we show, for any broadcast algorithm for $n$ stations, that the system may be forced into a configuration with $\Omega(n^2)$ packets in queues.


\begin{theorem}
\label{thm:lower-bound-n-squared}

For any algorithm for $n$ stations, the leaky-bucket adversary of burstiness $2$ can enforce an execution such that eventually there are at least $\left(\frac{n}{2}-1\right)^2$ packets in the queues at the stations.
\end{theorem}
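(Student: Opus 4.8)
The plan is to reduce the claim to forcing many \emph{void} rounds, and then to construct such an execution by the same indistinguishability method used in the proofs of Theorems~\ref{thm:2-impossible-fair-stable} and~\ref{thm:impossible-2-full-sensing-stable}, now carried out simultaneously across roughly $n/2$ stations. First I would set up the accounting. In any execution, the number of undelivered packets present at the stations at a round $L$ equals the number of packets injected through round $L$ minus the number of packets heard through round $L$. Since at most one packet is heard per round and every round at which no packet is heard is void, the number of heard packets is at most $L$ minus the number of void rounds occurring up to round $L$. Consequently, if the adversary injects at the full average rate of one packet per round, the number of packets present at the stations is at least the number of void rounds, up to an additive $\cO(1)$ coming from burstiness. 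It therefore suffices to exhibit an execution, consistent with the leaky-bucket adversary of burstiness $2$, in which about $(n/2-1)^2$ void rounds are forced while injection proceeds at rate one; the spare burstiness is used to shift injections in time so that the rounds forced to be void still carry an injected packet and the accumulated deficit is never repaid.

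The engine is the indistinguishability principle already isolated in this section: because $\cP$ is deterministic and distributed, the action of a station $p$ at a round depends only on the packets injected into $p$ and on the channel feedback received up to that round, so altering the injections into other stations cannot change the behavior of $p$ until a difference is reflected on the channel. Using this, I would run a family of conceptual scenarios, argue that in at least one of them a void round (a silence or a collision) must occur, and then commit to that scenario, exactly as in the earlier impossibility proofs. The new ingredient for $n$ stations is to apply this across $m=\lfloor n/2\rfloor-1$ disjoint pairs of stations at once: for each pair the adversary keeps the execution indistinguishable, to every station outside the pair, between the two ways of placing that pair's hidden packet. The target is a configuration in which $m$ stations each hold about $m$ undelivered packets; to load a given station with $m$ packets invisibly, the adversary injects $m$ packets into it over a window that it simultaneously forces to be void by using the other pairs as an unavoidable source of collisions or silences, and it repeats this for each of the $m$ stations, so that the total number of forced void rounds, and hence the final number of queued packets, is of order $m^{2}=(n/2-1)^{2}$.

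The hard part will be the bookkeeping of this central construction: showing that void rounds can be forced \emph{throughout} the loading of all $m$ stations, without the algorithm ever becoming productive enough to drain the packets already hidden. This is precisely where the factor $n/2$, the number of independently confusable pairs, enters and caps the deficit at the quadratic level. Two technical points must be handled with care. First, since the impossibility is claimed even for channels with collision detection, the conceptual scenarios must be arranged to agree on \emph{which} kind of void round (silence versus collision) occurs, so that no station can exploit the collision/silence distinction to escape the confusion. Second, every commitment must be checked against the leaky-bucket $(1,\,1)$ constraint, namely at most $t+1$ injected packets in any window of $t$ rounds; I would verify this using the same ``two packets at a round followed by none'' pattern employed in Theorems~\ref{thm:2-impossible-fair-stable} and~\ref{thm:impossible-3-stations-full-sensing-stable}, and then conclude that at the round terminating the construction at least $(n/2-1)^{2}$ packets remain in the queues.
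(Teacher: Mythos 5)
Your opening reduction is sound: with injections at average rate one, the number of queued packets at any round is at least the number of void rounds so far, up to an additive constant, so it would indeed suffice to force roughly $\left(\frac{n}{2}-1\right)^2$ void rounds. The genuine gap is that your central construction for producing that many void rounds is asserted rather than argued, and I do not see how to make it work. Two concrete problems. First, a pair-confusion of the kind used in Theorems~\ref{thm:2-impossible-fair-stable} and~\ref{thm:impossible-retaining-stable} buys you \emph{one} void round per setup: once the collision or silence is reflected on the channel, the indistinguishability is destroyed and the confusion must be rebuilt from scratch. You need each of your $m=\lfloor n/2\rfloor-1$ pairs to yield on the order of $m$ void rounds, and nothing in the proposal explains why the confusion can be re-established that many times against an arbitrary (in particular, a stable, queue-size-sensitive) algorithm. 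Second, the rate budget does not support the simultaneity you invoke: injecting $m$ packets into a target station over some window, while also feeding $2m$ pair stations to keep the confusions alive during that same window, exceeds one packet per round; the leaky-bucket slack of burstiness~$2$ only lets you shift a single packet by one round, not parallelize $m$ independent gadgets.

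The paper's proof avoids forcing quadratically many void rounds altogether. It sets up a dichotomy at each point of the construction: either the adversary can force one more void round (a milestone, which adds one more packet to the standing deficit), or it cannot. In the latter case it builds, by injecting only into stations not yet identified, a sequence $\langle v_i\rangle$ of stations that must transmit at consecutive rounds; when this sequence exhausts all $n$ stations a \emph{stage} completes, during which every station transmits at least once while the adversary injected only into a single \emph{pivot} station. Running $\lfloor n/2\rfloor$ stages leaves at least $\frac{n}{2}-1$ stations that are never pivots; each transmits at least once per stage, hence at least $\frac{n}{2}-1$ packets that were never injected into it during the stages and so must have been queued before the stages began. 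That counting argument, not a tally of void rounds, is where the quadratic bound comes from, and it is the piece your proposal is missing.
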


\begin{proof}
Let us take an arbitrary algorithm~$\cP$ for $n$~stations and fix it for the remaining part of the proof.
We determine an execution~$\cE$ of algorithm~$\cP$ by identifying milestone rounds.
Similarly as in such previous constructions, the underlying principle is to maintain a property  that each milestone round is void while the average injection rate of the adversary is~$1$.
The difference with previous impossibility proofs is that the sequence of milestone rounds may be either finite or infinite.
We  denote by $t_i$ the $i$th milestone round, when it exists for this~$i$.
Milestone rounds will have the property that the number of packets in queues at the beginning of round~$t_i$ that exists is at least $i$, for $i\ge 0$.
It follows that if algorithm~$\cP$ is stable, then the sequence of milestone rounds will be finite.

If the algorithm is unstable, then it is sufficient to take an execution in which the queues grow unbounded.
Therefore we assume that the algorithm is stable in the remaining part of the proof.
We will construct an execution~$\cE$ of this algorithm by specifying a behavior of the adversary and a resulting sequence of milestone rounds.
Set the first round of the execution~$\cE$ to be the milestone round~$t_0=1$.
Let the adversary inject one packet into some station at round~$t_0$.
Suppose that a milestone round~$t_i$ has been determined, for some $i\ge 0$, together with injections up to this round.
We assume the invariant that the number of packets injected by round~$t_i$ is at most one packet per round when averaged over all the intervals ending at~$t_i$. 
We show how packets are to be injected so that either the next milestone round~$t_{i+1}$  becomes defined or the whole execution becomes completely defined at a stretch.

The injections of the adversary are to be considered as conceptual only in that we search for possible extensions of execution~$\cE$ after~$t_i$.
As soon as a next milestone round~$t_{i+1}$ has been determined, we extend the execution~$\cE$ until this round~$t_{i+1}$ by having the adversary actually inject packets at rounds from~$t_i+1$ through~$t_{i+1}-1$ according to the pattern that determined round~$t_{i+1}$, and the number of packets injected at the new milestone round~$t_{i+1}$ is to be maximum to provide the invariant.
The simplest case is when round~$t=t_i+1$ is void: then let the adversary inject one packet into any station at round~$t$ which becomes the next milestone round~$t_{i+1}$.
When the above is not the case, meaning a packet is heard on the channel at round~$t=t_i+1$, then we systematically explore other scenarios.

Let us begin with the segment $[t,t+1]$ of two consecutive rounds.
If there is a station~$p$ such that injecting into~$p$ at round~$t$ would result in round~$t+1$ being void, then let the adversary perform such an injection so that $t+1$ becomes the next milestone round~$t_{i+1}$.
Otherwise, if there are two stations $p$ and~$q$ such that injecting a packet into each of them at round~$t$ results in both of them transmitting at round~$t+1$, then let the adversary perform these two injections so that $t+1$ becomes the next milestone round~$t_{i+1}$.
Otherwise, if none of the above cases holds, then we have a situation that a unique station would transmit at round~$t+1$ if the adversary injected a packet into any station at round~$t$.
Suppose this is the case and let $v_1$ be the unique station that would transmit at round~$t+1$ as determined above.
Identifying a station~$v_1$ is conceptual only at this point.

Next consider the segment $[t,t+1,t+2]$ of three consecutive rounds.
For any stations $p$, $q$ and~$r$, different from~$v_1$, including the cases of repetitions among these three names of stations, consider injecting a packet into~$p$ at round~$t$ and a packet per station into both $q$ and~$r$ at round~$t+1$.
If this results in either a void round~$t+1$ or $t+2$, for some such stations $p$, $q$ and $r$, then let the adversary perform these injections up to the first void round so that this round  becomes the next milestone one~$t_{i+1}$.
Otherwise the station into which a packet is injected at round~$t$ uniquely determines the station that transmits at round~$t+2$, as long as we also inject a single packet at round~$t+1$.
Suppose that a packet per round is injected at rounds $t$ and $t+1$ into the smallest station~$p$ different from $v_1$: let $v_2$ be the only station that would transmit at round~$t+2$.
It may be the case that stations $v_1$ and~$v_2$ are actually the same station.
Identifying stations $v_1$ and~$v_2$ is conceptual only at this point.

We consider yet another initial segment $[t,t+1,t+2,t+3]$ of four consecutive rounds, in order to clearly see a pattern emerging in such constructions.
For any stations $p$, $q$, $r$ and~$s$, different from $v_1$ and~$v_2$, including  repetitions among these four names $p$, $q$, $r$ and~$s$ of stations, consider injecting a packet into~$p$ at round~$t$, next a packet into~$q$ at round~$t+1$, and finally a packet per station into both $r$ and $s$ at round~$t+2$.
If this results in a void round up to round~$t+3$, for some such stations $p$, $q$, $r$ and $s$, then let the adversary perform these injections so that the first such a void round becomes the next milestone round~$t_{i+1}$.
Otherwise the stations into which packets are injected at rounds $t$ and~$t+1$ uniquely determine the station that transmits at round~$t+3$, as long as we also inject a single packet at round~$t+2$.
Suppose a packet per round is injected into the smallest station different from $v_1$ and $v_2$ at rounds $t$, $t+1$, and $t+2$, and let $v_3$ be the only station that transmits at round~$t+3$.
The sequence $\langle v_1,v_2,v_3\rangle$ of stations may include repetitions.
Identifying stations $v_1$, $v_2$ and~$v_3$ is conceptual only at this point.

We continue in this way, which results in determining either a milestone round and extending the execution~$\cE$ through this round or a sequence~$\langle v_i\rangle $ of stations for a  segment of indices $i\ge 1$.
Next we consider only the case of sequence~$\langle v_i\rangle $ of stations.
Such a sequence has the property that the stations occurring as entries of the sequence transmit at consecutive rounds in the order in which they occur in the sequence, that is, station~$v_1$ transmits at round~$t+1$, station~$v_2$ transmits at round~$t+2$, station~$v_3$ transmits at round~$t+3$, and so on.
Moreover, this sequence of stations is defined in such a way that as soon as the stations~$v_i$ become determined for all $1\le i\le j$, then we conceptually inject into a station different from any among these $v_i$ for $1\le i\le j$ in order to determine the next station~$v_{j+1}$.

Can the sequence~$\langle v_i\rangle $ of stations be extended indefinitely with a void round never occurring?
This is impossible for a stable algorithm, by the following argument.
Suppose, to arrive at a contradiction, that $\cU$ and $\cW$ are two disjoint and nonempty sets of stations such that any station~$v_i$ in the sequence belongs to~$\cU$, while the set~$\cW$ contains the remaining stations.
Let $j$ be the largest index of a station~$v_j$ in~$\cU$ such that $v_i\ne v_j$ for $i<j$.
The stations $v_{k}$ in the sequence~$\langle v_i\rangle $ for $k>j$ are determined in such a way that packets are conceptually injected only into a station~$v\in \cW$ of the smallest name among the stations in~$\cW$.
If this could be continued forever, then we could keep injecting only into a station~$v$ that would never transmit after round~$t-1$, as $w$ is not in $\cU$, so the queue would grow unbounded at~$v$.

Therefore eventually either a milestone round manifests itself or a sequence $\langle v_i\rangle_{1\le i\le j}$ cannot be extended, for some positive integer~$j$.
The latter occurs when \emph{every} station already occurs at least once as some $v_i$, for $1\le i\le j$.
As soon as such a sequence $\langle v_i\rangle_{1\le i\le j}$ is determined that includes the names of all stations, then we say that a \emph{stage} has been completed.
Observe that the sequence~$\langle v_i\rangle$, for $1\le i\le j$ is defined in such a way that if the adversary injected $j$ times into~$v_j$ a packet per round starting from round~$t$, then the stations~$v_i$ would transmit in the order of their indices in the sequence, for $1\le i\le j$.
The station~$v_j$ is said to be the \emph{pivot station} of the stage.

After completing the first stage, we proceed in the same manner to complete next stages, starting a new stage with an empty sequence~$\langle v_i\rangle$.
Let us stop after we have completed $\lfloor \frac{n}{2}\rfloor$ stages, unless a void round occurs earlier.
The following properties of a stage are relevant: every station transmits at least once during a stage, while the adversary injects packets only into the pivot station of the stage.
There are at least $\frac{n}{2}-1$ stations that are not pivot for any stage.
Each one among these stations transmits at least once during any stage, which means transmitting for a total of at least $\frac{n}{2}-1$ times during all these stages.
Since the adversary does not inject any packets into non-pivot stations during these stages, the packets the non-pivot stations would transmit must already reside in their queues when the first stage starts.
It follows that the number of packets in queues of the stations  at the round~$t$ is at least $\left(\frac{n}{2}-1\right)^2$.
Once there are these many packets in queues at round~$t$, milestone rounds are not needed at all after round~$t$, as a trailing part of the execution can be determined at a stretch.
Namely, the adversary may inject a packet per round at arbitrary stations starting from round~$t$ and the number of packets in queues will remain at least $\left(\frac{n}{2}-1\right)^2$ at all rounds after~$t$.
\end{proof}


\Paragraph{Retaining algorithms.}

We know that no algorithm can be both stable and fair against leaky-bucket adversaries in systems of at least two stations, but fair latency is achievable in systems of up to three stations against window adversaries.
A question if achieving fair latency against window adversaries in systems of more than three stations is not settled by these facts.
We answer this question in the negative next.
Our approach is to show that achieving stability  against window adversaries in systems of more than three stations is impossible for a class of algorithms that have a property that  generalizes fairness and withholding the channel.

An algorithm is called \emph{retaining} if at any round when a station, say, $p$ transmits a packet successfully and when starting from this round the adversary injects packets only at other stations, then eventually station~$p$ does not have a pending packet.
Observe that an algorithm that is either fair or that withholds the channel is retaining.


\begin{theorem}
\label{thm:impossible-retaining-stable}

No retaining algorithm is stable in a system of at least four stations against the window adversary of burstiness~$2$.
\end{theorem}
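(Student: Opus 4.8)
The plan is to follow the milestone-round template used in Theorems~\ref{thm:2-impossible-fair-stable}, \ref{thm:impossible-3-stations-full-sensing-stable}, and~\ref{thm:impossible-3-stations-withhold-stable}: I would assume a stable retaining algorithm~$\cP$ on four stations $p,q,r,s$ and construct an execution~$\cE$ together with an infinite sequence of void rounds $t_0<t_1<t_2<\cdots$ such that the average injection rate between consecutive milestones is exactly~$1$ and the burstiness never exceeds~$2$. Maintaining the invariant that every $t_i$ is void while rate~$1$ is sustained forces at least $i+1$ packets into the queues at round~$t_i$, so the backlog grows without bound, contradicting stability. As in the earlier proofs the construction is inductive: given $t_i$ I explore conceptual extensions of the adversary beyond $t_i$, and as soon as any extension produces a void round I commit to it, defining $t_{i+1}$ to be that round and back-filling the injections so that one packet per round is injected on average.

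The engine for producing each new void round is the retaining property, used as withholding is used in Theorem~\ref{thm:impossible-3-stations-withhold-stable}, but now available for adaptive algorithms. First I would drain two of the four stations: injecting at rate~$1$ into a single station forces it, by stability, to transmit, after which redirecting all injections elsewhere forces it, by the retaining property, to empty its queue; iterating this I reach a round~$t$ at which two stations, say $r$ and~$s$, hold no pending packet while $p$ and~$q$ carry the accumulated backlog. Throughout this draining, if silence or a collision ever occurs I take it as the next milestone, so I may assume a packet is heard at every intermediate round.

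With $r$ and $s$ silent at round~$t$, the heart of the matter is a scenario game that forces a void round while respecting burstiness~$2$. The two tools of the small-system proofs are unavailable: fairness is not assumed, only the weaker retaining property, so the collision-by-fairness argument of Theorem~\ref{thm:2-impossible-fair-stable} does not apply; and the algorithm may be adaptive, so the full-sensing indistinguishability of Theorem~\ref{thm:impossible-3-stations-full-sensing-stable} is not at my disposal. Instead I would run a family of scenarios that agree on all injections into the active stations $p$ and~$q$ and differ only in how the single-packet loads permitted by burstiness~$2$ are placed into the empty stations $r$ and~$s$. Because $r$ and~$s$ are silent, these differing injections are invisible on the channel, so by the deterministic-distributed principle recorded after Theorem~\ref{thm:2-impossible-fair-stable} the stations $p$ and~$q$ pass through identical states across the entire family until a loaded probe is itself compelled to transmit. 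Driving a probe to transmit by stability and then releasing it by the retaining property, I would line up a forced transmission of a previously silent station against the predetermined transmission dictated by the common behavior of $p$ and~$q$, yielding a collision; alternatively, suppressing the load exactly as in the silence argument of Theorem~\ref{thm:impossible-3-stations-withhold-stable} yields a silent round.

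The main obstacle is precisely this synchronization, and it is where four stations become essential. The stations are neither symmetric nor interchangeable, so two independently loaded probes need not transmit for the first time at the same round, and the rate~$1$ budget means every packet diverted to a probe must be compensated elsewhere, threatening to perturb the injection counts of the active stations and thereby destroy the indistinguishability on which the argument rests. Having two active stations to carry a fixed, reproducible channel history and two silent probes to absorb the burstiness-$2$ loads invisibly is exactly the slack that four stations provide and that is missing for two or three stations, where retaining algorithms such as \textsc{2-Full-Sensing} and \textsc{3-Adaptive} are in fact stable. Making the alignment of a forced probe transmission with a predetermined active transmission robust against this compensation, without ever letting the adversary exceed burstiness~$2$, is the step I expect to require the most care.
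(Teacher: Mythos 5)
Your skeleton (milestone void rounds, rate-$1$ back-filling, the retaining property as the engine, scenario indistinguishability to force a collision) matches the paper's, but the crux of your construction is left unresolved and, as set up, cannot be completed. You drain down to \emph{two} empty probes $r,s$ with the backlog split between \emph{two} active stations $p,q$, and then hope to collide a probe's forced transmission with a transmission of $p$ or $q$. Two things break. First, a single burstiness-$2$ packet placed in an empty probe is never compelled to be transmitted: stability only bounds queue sizes (one packet is bounded), and the retaining property only applies to a station that has already transmitted successfully, so ``driving a probe to transmit by stability'' requires sustained rate-$\Omega(1)$ injections into that probe over an unbounded period. Second, the collision argument needs \emph{two} probes to transmit simultaneously at a round pinned down identically across scenarios; the natural way to arrange this is three scenarios in which probe $A$ alone transmits at $t'$, probe $B$ alone transmits at $t'$, and then both do. Each single-probe scenario must itself carry the full rate-$1$ injection stream (otherwise the backlog invariant at the milestones fails), so the combined scenario would have rate $2$ --- unless the stream is split across a \emph{pair} of probes in each scenario, which requires three distinct pairs and hence three empty stations. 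Your two-active/two-probe split leaves only one pair, and any compensation by rerouting injections into $p$ or $q$ destroys exactly the indistinguishability you rely on, as you yourself observe.

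The paper's proof resolves this by iterating the draining one step further (its Case~1, using the retaining property repeatedly) until only \emph{one} station $p$ holds a pending packet and $q,r,s$ are all empty. Then $p$ alone carries the channel history, transmitting at every round until, by the retaining property, it runs out at a scenario-independent round $t'$ --- the first round after $t$ at which $p$ does not transmit, which is the anchor for the collision. The entire rate-$1$ stream is routed into pairs of empty stations with the pattern $2,0,2,0,\ldots$, giving three legal rate-$1$ scenarios on the pairs $(q,r)$, $(q,s)$, $(r,s)$: the first forces (say) $r$ to transmit at $t'$, the second forces $s$ to transmit at $t'$ (since $q$ cannot distinguish it from the first), and the third makes $r$ and $s$ collide at $t'$ since each is blind to the substitution. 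These are the devices your proposal is missing, and they are precisely what makes the synchronization you flagged as ``requiring the most care'' go through.
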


\begin{proof}
Consider a retaining broadcast algorithm~$\cP$.
Choose some four stations $p$, $q$, $r$, and~$s$.
The adversary will inject packets only into these four stations.
The other stations cannot transmit packets so their transmissions may be ignored.
We define an execution in which the adversary injects one packet per round on the average
while a certain sequence $t_0, t_1, t_2,\ldots$ of void milestone rounds is unbounded.

Let the adversary inject a packet into some station at the first round. 
Define $t_0$ to be the first round, which is void.
Suppose we have defined the execution and the injections up to a void milestone round~$t_i$.
We need to specify what happens starting from the next round~$t=t_i+1$.
If round~$t$ is void then define $t_{i+1}=t$ and let the adversary inject a packet at some station at this round.
Otherwise exactly one station, say $p$, is to transmit at round~$t$.
We consider the consecutive rounds starting from~$t$ one by one, determining the injections.
There are the following two cases that help to structure the argument.
The meaning of $t$ will sometimes be of a round number larger than $t_i + 1$ when we need to repeat the construction of a segment of an execution.

\noindent
\textsf{Case 1:} At least two stations have pending packets at the beginning of round~$t$.

The adversary chooses a station~$q$ different from $p$ but also with a pending packet.
The adversary keeps injecting a packet per round into~$q$ starting from round~$t$.
This continues until a round~$t'>t$ occurs that is either void or at which station~$p$ does not have a pending packet.
One of these cases has to occur because the algorithm is retaining.
In the former case we define the round~$t'$ to be milestone round~$t_{i+1}$ and the adversary injects a packet into any station at that round.
In the latter case we may need to go through the same case, but with fewer nonempty queues; it is here that $t$ may acquire the meaning of a certain round number larger than~$t_i+1$. 
Eventually we will proceed as in the next case.

\noindent
\textsf{Case 2:} Only station~$p$ has a pending packet at the beginning of round~$t$.

We consider a number of conceptual scenarios to identify a void round~$t'>t$.
In the first one the adversary keeps injecting one packet into station~$q$ and another into station~$r$ at every other round.
This means the pattern $2,0,2,0,\ldots$, in terms of the numbers of packets injected, starting from round~$t$.
Eventually station~$p$ does not have a pending packet, as the algorithm is retaining and $p$ transmitted at round~$t$.
If a void round~$t'>t$ does not occur under this scenario, then either station~$q$ or station~$r$ makes the first successful transmission at some round~$t'>t$.
This round~$t'>t$ is the first one after~$t$ at which $p$ does not transmit.
Suppose it is station~$r$ that transmits while station~$q$ pauses.
None among the stations~$p$ and~$s$ transmits at round~$t'$ as the transmission of~$r$ is heard. 

Consider the second conceptual scenario in which station~$r$ is replaced by~$s$, that is, we consider the pair of $q$ and~$s$ rather than $q$ and~$r$.
Eventually station~$p$ does not have a pending packet, as the algorithm is retaining.
If a void round~$t'>t$ does not occur under this scenario then either station~$q$ or station~$s$ makes the first successful transmission at some round~$t'>t$.
This round~$t'>t$ is the first one after~$t$ at which $p$ does not transmit so neither station~$p$ nor~$q$ can distinguish between the two scenarios up to round~$t'$.
Therefore both $p$ and $q$ pause at round~$t'$ in both scenarios while station~$s$ transmits successfully under the second scenario.

The third scenario has the stations~$r$ and~$s$ play the role of the stations~$q$ and~$r$ in the first scenario.
Eventually station~$p$ does not have a pending packet, as the algorithm is retaining.
Let $t'$ be the first round after~$t$ at which $p$ does not transmit.
This is the same round~$t'$ with this property as under the first and second scenarios, since $p$ cannot distinguish between these three scenarios up to round~$t'$, being the only transmitting station from round~$t$ through round~$t'-1$.
Similarly, station~$r$ cannot distinguish between the first and the third scenario, while station~$s$ cannot distinguish between the second and the third scenario up to round~$t'$.
Therefore the behavior of $r$ at round~$t'$ is the same as under the first scenario while the behavior of~$s$ is the same as under the second scenario: each of the stations~$r$ and~$s$ transmits at round~$t'$ under the third scenario.
This creates a collision at round~$t'$ making it void.

We have exhausted all the possible cases and showed that a void round~$t'$ after round~$t$ has to occur under some scenario.
The behavior of the adversary is as follows.
Let the injections from round~$t$ up to round~$t'-1$ be such that they create a void round~$t'$ after~$t$ the soonest.
Define $t_{i+1}$ to be this void round~$t'$.
If no packets are injected at round~$t'-1$ then let the adversary inject one packet into some station at round~$t'$, otherwise, with two packets injected at round~$t'-1$, let the adversary does not inject any packet at round~$t'$.
This pattern of injections is consistent with the definition of an adversary of burstiness~$2$.
\end{proof}

Theorem~\ref{thm:impossible-retaining-stable} is immediately applicable to algorithms that are fair or that withhold the channel, as they are retaining:


\begin{corollary}
\label{cor:impossible-fair-stable-4-stations-window}

No algorithm that is fair can also be stable in a system of at least four stations against the window adversary of burstiness~$2$.
\end{corollary}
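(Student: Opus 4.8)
The plan is to obtain this as an immediate consequence of Theorem~\ref{thm:impossible-retaining-stable}. That theorem already rules out stability for the broad class of \emph{retaining} algorithms against the window adversary of burstiness~$2$ in systems of at least four stations, and the text preceding the theorem asserts that a fair algorithm is retaining. So the only work is to verify carefully that fairness implies the retaining property; once this is done, the corollary follows by direct appeal to the theorem. I would therefore structure the proof as a one-line reduction preceded by a short lemma-style observation about fairness.

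The substance is the implication \emph{fair $\Rightarrow$ retaining}. Recall that an algorithm is retaining if, whenever a station~$p$ transmits successfully at some round and the adversary injects packets only into stations other than~$p$ from that round on, then eventually $p$ has no pending packet. To verify this for a fair algorithm, I would fix such a round~$t$ and such an adversary behaviour. After round~$t$ no new packets are injected into~$p$, so the set of packets that ever reside at~$p$ from the end of round~$t$ onward is exactly the finite collection currently waiting in~$p$'s queue. The key structural feature of the model is that each packet resides at a unique station, namely the one into which it was injected; hence a packet held by~$p$ can be heard on the channel only when~$p$ itself transmits it. Since the algorithm is fair, each of these finitely many packets must eventually be heard, and therefore each must eventually be transmitted by~$p$ and discarded. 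Once the last of them has been transmitted, the queue at~$p$ is empty and~$p$ has no pending packet, which is precisely the retaining condition.

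With this observation in hand, the conclusion is immediate: a fair algorithm for a system of at least four stations is retaining, so by Theorem~\ref{thm:impossible-retaining-stable} it cannot be stable against the window adversary of burstiness~$2$. I do not expect a genuine obstacle here, as the corollary is a specialization of an already-proved impossibility; the only point requiring care is the justification that fairness forces \emph{each} of $p$'s residual packets to be broadcast \emph{by $p$}, which rests on the fact that packets are not duplicated or migrated between stations in this model. That is a routine appeal to the definitions of Section~\ref{sec:technical} rather than a new difficulty.
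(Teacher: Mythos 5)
Your proposal is correct and matches the paper exactly: the paper derives this corollary as an immediate consequence of Theorem~\ref{thm:impossible-retaining-stable}, relying on the observation (stated just before that theorem) that every fair algorithm is retaining. Your expanded justification of the implication \emph{fair $\Rightarrow$ retaining} --- that $p$'s finitely many residual packets can only be heard via $p$'s own transmissions, so fairness forces $p$'s queue to empty --- is a sound filling-in of the detail the paper leaves implicit.
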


We showed in Theorem~\ref{thm:impossible-3-stations-withhold-stable} that no algorithm that withholds the channel can be stable for three stations against the leaky-bucket  adversary of burstiness~$2$.
Now we can strengthen this to window adversaries, as long as there are at least four stations in the system:


\begin{corollary}
\label{cor:impossible-withholds-channel-4-stations-window}

No algorithm that withholds the channel can be stable in a system of at least four stations against the window adversary of burstiness~$2$.
\end{corollary}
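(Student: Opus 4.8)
The plan is to obtain this corollary as an immediate consequence of Theorem~\ref{thm:impossible-retaining-stable} by verifying that every algorithm that withholds the channel belongs to the class of retaining algorithms. Once this inclusion is established, the statement follows at once: if an algorithm withholds the channel then it is retaining, and Theorem~\ref{thm:impossible-retaining-stable} already shows that no retaining algorithm can be stable in a system of at least four stations against the window adversary of burstiness~$2$. Since the quantifier ``at least four stations'' matches exactly in both statements, no additional bookkeeping is needed.

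To prove the inclusion I would unfold the two relevant definitions and check them against each other. Fix an execution, fix a round~$t_0$ at which some station~$p$ transmits a packet successfully, and suppose that from round~$t_0$ onward the adversary injects packets only into stations other than~$p$. Because the algorithm withholds the channel, after the successful transmission at~$t_0$ the station~$p$ continues to transmit for as long as it holds a pending packet, while the withholding discipline has the remaining stations defer until $p$ relinquishes the channel. Consequently each of $p$'s transmissions in this stretch is heard without collision and removes one packet from~$p$. As no new packets are injected into~$p$ after~$t_0$, the number of packets residing in~$p$ is finite and strictly decreases at each such round, so after finitely many rounds $p$ has no pending packet. This is precisely the retaining property demanded by the definition.

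The only point that calls for care, and hence the closest thing to an obstacle in an otherwise routine argument, is the claim that $p$'s successive transmissions during the withholding period are successful rather than colliding. This is guaranteed by the very meaning of withholding the channel, under which the other stations postpone their own transmissions until $p$ gives up the channel, so $p$ transmits alone at each round of the period and each transmission is heard. With this verification in place the inclusion holds, and an appeal to Theorem~\ref{thm:impossible-retaining-stable} completes the argument; this also strengthens Theorem~\ref{thm:impossible-3-stations-withhold-stable} from leaky-bucket to window adversaries at the cost of requiring one more station.
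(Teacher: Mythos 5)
Your proposal is correct and matches the paper's route exactly: the paper derives this corollary in one line from Theorem~\ref{thm:impossible-retaining-stable} via the observation (stated when retaining algorithms are introduced) that any algorithm that withholds the channel is retaining. Your unfolding of the two definitions, including the remark that withholding grants~$p$ collision-free access so its queue strictly drains, simply makes explicit the verification the paper leaves to the reader.
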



\Paragraph{Queue-size oblivious algorithms.}

We show now that queue-size oblivious algorithms cannot be stable.
The idea of proof is to reduce the question of stability of such algorithms to that of stability of retaining algorithms.
Let us recall that an algorithm is queue-size oblivious if it has two properties: the size of the queue at a node does not affect the decision if to transmit the pending packet at the next round and stability is not affected when an adversary can additionally disturb a stable execution by repeatedly removing a packet from some queue to inject it anywhere at the next round.
Queue-size oblivious algorithms may use the queue size encoded by control bits attached to packets to inform other stations about it, but, unlike algorithm \textsc{Move-Big-To-Front}, cannot have that size affect deciding whether to transmit the currently pending packet at the next round.
In what follows we describe a reduction of stability of queue-size oblivious algorithms to stability for retaining algorithms.

We define a transformation among queue-size oblivious algorithms which, for a given queue-size oblivious algorithm~$\cP$, determines an algorithm~$\cP_h$ that mostly behaves like~$\cP$ and is retaining.
The idea is to simulate~$\cP$ except for some ``reserved'' rounds when $\cP$ comes to rest while packets are transmitted independently of~$\cP$.
A station~$p$ that is heard on the channel at a round~$t$ \emph{reserves the channel for round~$t'>t$} when the message of~$p$ carries control bits that are interpreted by all the stations so that $p$ will be the only station transmitting at round~$t'$.

Round reservations are performed as follows.
When a station~$p$ transmits a packet at a round~$t$ and $p$ has more packets in its queue, then $p$ attaches control bits to the transmitted packet to reserve a round.
The \emph{first available round} for such a round~$t$ is defined to be the first round after~$t$ that is currently not reserved.
The \emph{second available round} is defined similarly as the earliest round after the first available round that is currently not reserved.
The idea is to reserve the second available round.
When a station~$p$ attempts to transmit at a round, then simultaneously $p$ wants to perform one of the following possible actions, which is encoded by control bits attached to the transmitted packet:
\begin{enumerate}
\item[(1)]
if $p$ has at least one packet in its queue then $p$ reserves the second available round;

\item[(2)]
if $p$ has an empty queue and $p$ has a round reserved already, then $p$ cancels the reservation;

\item[(3)]
if $p$ has an empty queue and $p$ does not have a round reserved yet, then $p$ does not reserve nor cancel any round while transmitting its pending packet.
\end{enumerate}
A reservation by a station replaces a previous reservation by the station that is on record.
An action among the three stipulated above is considered to be performed when the transmission is heard, otherwise no reservation is made nor canceled by colliding transmissions.
Observe that each station has at most one round reserved at all times.

To record reservations, each station maintains an array of future rounds reserved by all the stations.
These arrays are identical at all the stations, as updates of entries are performed in the same way by all the stations.
The array is indexed by the names of the stations: an entry indicates how many rounds still remain to have the round reserved by the corresponding station.
At the end of a round the entries are updated in a self-evident manner, depending on the contents of the message heard at the round, if any.

A round that has been reserved  is simply called \emph{reserved} and otherwise it is called  \emph{regular}.
Reservations can be made at any rounds, whether reserved or regular.
The simulation of $\cP$ by~$\cP_{h}$ proceeds at regular rounds, in the sense that decisions inherent to $\cP$ about broadcasting are made at those rounds.
The states of~$\cP_h$ are augmented by additional information related to the mechanism of simulation.
In particular, a station maintains an array for round reservations and an additional buffer space distinct from its queue to temporarily store some of the injected packets.
A station may sometimes use a ``dummy'' packet as pending when no real packet is available at the station.

On the abstract level, the states of $\cP_h$ can be visualized as pairs $(a,b)$, where the \emph{$\cP$-state component~$a$} is the state of~$\cP$ at the current round and the \emph{reservation component~$b$}  is used to represent reservations of rounds and the related additional activity.
The state transitions of~$\cP_h$ on the $\cP$-state components are to implement the functionality of~$\cP$.
Next we describe the lower level of simulation in terms of actions performed by stations.
These actions depend on whether the round is regular or reserved.

\noindent
The case of a regular round~$t$:

If a message is to be transmitted by some station~$p$ running~$\cP$ at round~$t$, then this message is structured according to the specification of $\cP$, possibly including control bits as required by~$\cP$, and additionally may carry control bits for round reservations.
A station hearing the message uses the control bits used by~$\cP$ for the state transition: the new $\cP$-state is obtained by performing a state transition of~$\cP$.
The packets stored at this point in the additional buffer are enqueued in the queue, as if injected by the adversary at round~$t$, followed by enqueueing the packets that were actually injected at round~$t$.

\noindent
The case of a round~$t$ reserved by some station~$p$:

The message transmitted by $p$ at this round consists of the pending packet and control bits for round reservation.
A station hearing the message uses the control bits to update round reservation by~$p$.
Packets injected at this round are stored in the additional buffer.
Station~$p$ discards the transmitted packet and replaces the pending packet by another one obtained as follows.
If the queue is nonempty, then a packet is obtained by dequeueing the queue and made pending, otherwise, if the additional buffer in non-empty, then a packet is removed from it and made pending, otherwise, if there is no packet available at~$p$, then $p$ creates a ``dummy'' packet and it is considered as pending.
Whether the new  pending packet is real or dummy will not affect the state transitions of~$\cP$, as a pending packet itself does not contribute to the state but rather it is the information whether a pending packet exists or not at a round.
A dummy packet is a temporary device, it is to be replaced by a real packet as soon as one becomes injected.
When a dummy packet is transmitted by~$p$ and heard on the channel, it is discarded and not immediately replaced by another dummy packet by~$p$, as no round reservation is made with such a transmission.
Next let us clarify the state transition at a reserved round on the higher level.
The state at a round contains the $\cP$-state component, which we want to stay the same  through the round, except for the queue of the transmitting station~$p$. 
When a new pending packet of~$p$ is obtained by dequeuing the queue, then the queue   contains one element less afterwards.
What is modified in all the stations occurs on the reservation component, which includes reserving rounds, injections of new packets along with the needed manipulations of the additional buffer, and creating a dummy packet if needed.

This completes the specification of the simulating algorithm~$\cP_{h}$.


\begin{lemma}
\label{lem:infinitely-regular-rounds}

There are infinitely many regular rounds in any execution of~$\cP_{h}$, for any input  algorithm~$\cP$.
\end{lemma}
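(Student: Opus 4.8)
The plan is to argue by contradiction. Suppose an execution of $\cP_h$ has only finitely many regular rounds; then there is a round $T_0$ such that every round after $T_0$ is reserved. I will derive a contradiction by tracking the earliest future round that is not reserved and showing that this round can never actually be reached in a reserved state.

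First I would record the bookkeeping bound that keeps everything finite: since each station has at most one round reserved at any time, at the start of any round at most $n$ rounds are reserved, so among the next $n+1$ rounds there is always an unreserved one. For a round $t$, let $f(t)$ denote the earliest round $\ge t$ that is unreserved as recorded at the start of round $t$; by the preceding remark $f(t)$ is well defined and finite, and if $f(t)=t$ then round $t$ is regular. For $t>T_0$ the round $t$ is reserved by assumption, so $f(t)>t$ and $f(t)$ is exactly the \emph{first available round} after $t$.

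The heart of the argument is to show that for $t>T_0$ the value $f(t)$ is frozen. Two facts drive this. First, at a reserved round only the owner $p$ is heard, and $p$'s sole reservation, namely the one for the current round, is consumed; since $p$ has no other reservation, action~(2) cannot fire, so no already-reserved round is released after $T_0$. Second, whenever the owner does reserve via action~(1) it takes the \emph{second} available round, which lies strictly beyond $f(t)$, so round $f(t)$ is never the target of a reservation. Combining these, passing from round $t$ to round $t+1$ leaves every round in $\{t+1,\dots,f(t)-1\}$ reserved and leaves $f(t)$ unreserved, whence $f(t+1)=f(t)$. By induction $f(t)$ equals a fixed value $F:=f(T_0+1)$ for all $T_0<t\le F$. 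But then at round $F$ we have $f(F)=F$, so round $F$ is unreserved when it occurs, i.e.\ regular, while $F>T_0$ --- contradicting the choice of $T_0$. This contradiction shows that regular rounds occur infinitely often.

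The step I expect to require the most care is justifying that $f(t)$ is genuinely frozen, that is, that the first available round is never reserved away. This rests precisely on the design choice of reserving the \emph{second} available round rather than the first: one must check that throughout $(T_0,F)$ the round $F$ remains the first available round for every intervening owner, so that each of them again skips it, and simultaneously that no cancellation (action~(2)) can occur at these reserved rounds to disturb the reserved block $\{t+1,\dots,F-1\}$. Once these two invariants are verified the counting is immediate, so the bulk of the writing will be a short case analysis of the owner's action at a reserved round.
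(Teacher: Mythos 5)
Your proof is correct and rests on exactly the same observation as the paper's (much terser) argument: reservations always target the \emph{second} available round, so the first available round is never reserved away, and cancellations can only release rounds, never block them; hence the first available round is eventually reached as a regular round. Your by-contradiction formalization with the frozen quantity $f(t)$ is just a more careful write-up of the same idea, and your handling of action~(2) is sound since the invariant that each station holds at most one reservation means the owner of a reserved round has no future reservation to cancel.
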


\begin{proof}
A station always reserves the second available round, the first available one left as the next expected regular round.
A change in reservation depends on if it is a new reservation or a cancellation.
A new reservation keeps the first available round intact.
A cancellation of a reservation may release a round before the current first available round, with the effect of making a regular round happen earlier.
\end{proof}


\begin{lemma}
\label{lem:h-transformation}

Let $\cP$ be  a queue-size oblivious algorithm that is stable in a given system of $n$ stations against the window adversary of burstiness~$2$.
Then $\cP_h$ is stable in the same system of $n$ stations against the window adversary of burstiness~$2$.
\end{lemma}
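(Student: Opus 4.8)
The plan is to run $\cP_h$ against an arbitrary window adversary $\cA$ of burstiness~$2$, obtain an execution $\cE_h$, and exhibit $\cE_h$ as an execution of $\cP$ against $\cA$ that is perturbed only by a disturbance of the kind permitted in property~(2) of the definition of queue-size obliviousness; stability then transfers by hypothesis. By Lemma~\ref{lem:infinitely-regular-rounds} there are infinitely many regular rounds $r_0<r_1<r_2<\cdots$, so the state machine of $\cP$ advances infinitely often. First I would record a structural bound on the spacing of regular rounds: since each station keeps at most one reservation on record at any time, at most $n$ future rounds are reserved at any moment, and because a reservation always targets the \emph{second} available round while leaving the first available round free, the next regular round after $r_k$ satisfies $r_{k+1}-r_k\le n+1$. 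In particular every packet placed in the additional buffer at a reserved round is released into the queue at most $n+1$ rounds later.

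The heart of the argument is to verify that $\cP$ makes the same decisions in $\cE_h$ as in an honest run against $\cA$, so that the only departures are a bounded, property-(2)-type disturbance. Property~(1) is exactly what makes this work: at a reserved round the $\cP$-state is frozen except that the transmitting station loses one queued packet, and property~(1) guarantees that this decrement cannot change whether the pending packet is transmitted at the next regular round. Hence the transmission schedule prescribed by $\cP$ is independent of the draining carried out at reserved rounds; since packets are abstract tokens transmitted in FIFO order, the reserved rounds merely transmit ahead of time packets that $\cP$ would otherwise transmit later, while the buffering merely postpones some injections by at most $n+1$ rounds.

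Next I would phrase these two effects as a disturbance covered by property~(2). Postponing an injection by one round is literally the move ``remove a packet from a queue at a round and inject it at the next round,'' and composing such moves realizes the bounded buffering delay; the freedom to inject the removed packet \emph{anywhere} absorbs the mismatch between the station $\cP$ schedules and the reserving station that actually transmits at a reserved round. Because $\cP$ is stable against $\cA$, property~(2) would then yield that the perturbed execution, and therefore the total number of \emph{real} packets held in queues, in buffers, and as pending packets in $\cE_h$, stays bounded. Dummy packets are transient—created only to fill a reserved slot when no real packet is present and discarded as soon as heard—so they contribute nothing to the bound, and we conclude that $\cP_h$ is stable.

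The main obstacle is precisely this last reduction: one must verify that the combined effect of delayed injections and reserved-round draining never exceeds the one-packet-per-round budget of property~(2), rather than amounting to the higher-rate ``compressed'' adversary that injects at each regular round everything $\cA$ injected since the previous regular round—an adversary of rate larger than~$1$ that no algorithm could survive. The reconciliation rests on treating the reserved transmissions not as extra demand but as a reordering of packets that $\cP$ already transmits, so that the single channel still carries one packet per round and the effective injection rate remains~$1$; and on charging each unit of buffering delay against a single disturbance operation, which is plausible only because reservations are spread out (at most $n$ outstanding) and each buffered packet waits at most $n+1$ rounds. Making this charging precise against the one-per-round budget, reconciling the regular-round timeline of $\cP$ inside $\cP_h$ with the full-timeline disturbed run of $\cP$, and checking that the real packets of $\cE_h$ are dominated by those of the disturbed run, is the technical core of the proof.
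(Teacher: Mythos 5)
Your overall plan coincides with the paper's: obtain $\cE_h$, reinterpret it as an execution of $\cP$ perturbed only by the disturbance permitted in property~(2), invoke property~(1) to see that the extra transmission at a reserved round cannot alter $\cP$'s decision about the pending packet, and transfer stability. But you explicitly leave open what you yourself call the technical core --- showing that the deferred injections and the reserved-round draining fit inside the one-packet-per-round disturbance budget together with burstiness~$2$, rather than amounting to a compressed adversary of rate above~$1$ --- and that is a genuine gap, not a detail. Moreover, the route you sketch for closing it (keeping the original timeline and charging each of the up to $n+1$ rounds of buffering delay of each packet to a separate composed remove-and-reinject move) would not work: several buffered packets coexist, so their composed moves would collide on the same rounds and exceed the one-per-round budget.

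The paper closes the gap differently, and the key observation is one your write-up never uses: \emph{every pruned reserved round carries exactly one successful transmission}. The paper deletes the reserved rounds from the timeline outright (so a deferred injection moves to the \emph{next surviving} round in a single step, never via a chain of $n+1$ moves), and then balances the books locally at each deleted round~$t$: the one packet that station~$p$ transmitted there is re-read as the one packet that property~(2) allows the adversary to remove from a queue and reinject at the next round, and it is identified with one of the $x$ packets the adversary injected at round~$t$; the remaining $x-1$ deferred packets are absorbed by burstiness, because the window constraint already bounded the combined injections of rounds $t$ and $t+1$ by~$2$. Thus each deleted round contributes exactly one disturbance packet and leaves the genuine injection rate at~$1$, which is precisely the reconciliation you identify as missing. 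Your quantitative bound $r_{k+1}-r_k\le n+1$ is correct but is not needed once the timeline is collapsed; the paper only uses the qualitative Lemma~\ref{lem:infinitely-regular-rounds} to ensure the pruned execution is infinite.
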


\begin{proof}
Consider an execution~$\cE_{h}$ of~$\cP_h$ with injections determined by the window adversary of burstiness~$2$.
We will prune this execution of the reserved rounds to obtain an execution~$\cE$ of~$\cP$.
While pruning $\cE_{h}$ of reserved rounds, we specify the behavior of the adversary that injects packets only during the regular rounds and this determines execution~$\cE$.
By Lemma~\ref{lem:infinitely-regular-rounds}, there are infinitely many regular rounds in $\cE_{h}$ that allow for~$\cE$ we define next to be unbounded and so legitimate.

We proceed through all the reserved rounds~$t$, one by one in the order of time.
Let some station~$p$ transmit a packet at a reserved round~$t$ of~$\cE_h$.
When the round is removed from the execution, then station~$p$ has one packet less. 
This does not affect the decision of the transmitting station~$p$ to transmit its currently pending packet, if any, at the next round, by the definition of a queue-size oblivious algorithm.
If some $x>0$ packets are injected at this round~$t$ by the adversary,  then let the adversary inject these $x$~packets into the same stations at the round after the just removed one along with the packets to be injected at the round~$t+1$ of execution~$\cE_{h}$.
One of these injected packets, if any are injected, may be interpreted as corresponding to the packet removed from the queue of~$p$ just before the round numbered~$t+1$ in~$\cE_h$ starts in~$\cE$, while the burstiness accounts for the remaining packets.
This does not affect stability by the definition of a queue-size oblivious algorithm.

After removing all the reserved rounds in this way, what is obtained is an execution~$\cE$ of~$\cP$ against a window adversary of the same burstiness.
The execution can be interpreted as resulting from the injections of the adversary that created execution~$\cE_h$ of~$\cP_h$, while the adversary can additionally remove a packet from a queue at some rounds to possibly inject the same packet at the next round.
This means that the sizes of queues depend on successful transmissions in both $\cE$ and~$\cE_h$ in a similar way, and hence that $\cE$ is stable if $\cE_h$ is such.
\end{proof}


\begin{lemma}
\label{lem:retaining}

Algorithm~$\cP_{h}$ is retaining, for any input algorithm $\cP$.
\end{lemma}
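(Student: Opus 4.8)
The plan is to unwind the definition of ``retaining'' for the concrete algorithm $\cP_{h}$ and to argue that its round-reservation mechanism, which runs independently of the simulated algorithm~$\cP$, drains any station that stops receiving packets. Fix a round~$t$ at which some station~$p$ transmits a packet successfully and suppose the adversary injects only into stations other than~$p$ from round~$t$ on. Throughout I read ``$p$ has a pending packet'' as ``$p$ holds a real (adversary-injected) packet'': dummy packets are an internal bookkeeping device of~$\cP_{h}$, and it is only the real packets that matter for the application of this lemma through Lemma~\ref{lem:h-transformation} and Theorem~\ref{thm:impossible-retaining-stable}. The first observation I would record is that after round~$t$ no real packet ever enters~$p$ again, neither into its queue nor into its additional buffer, so the number~$M$ of real packets held by~$p$ is finite, never increases, and decreases by exactly one each time~$p$ transmits a real packet successfully. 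The goal is to show that $M$ reaches~$0$.

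The engine of the proof is the reservation chain. I would establish the invariant that, once $p$ has performed a transmission with a nonempty queue after round~$t$, it holds an outstanding reservation for some future round for as long as its queue stays nonempty. This is because a station makes a reservation (action~(1)) exactly when it transmits with a nonempty queue, cancels (action~(2)) only when its queue is empty, and a reservation stays on record until the reserved round arrives; at that reserved round~$p$ is by construction the \emph{sole} transmitter, so it necessarily broadcasts successfully, discards a real packet, and---if its queue is still nonempty---immediately reserves again. Hence each reserved round of~$p$ strictly decreases~$M$, and since $M$ is finite the queue is emptied after finitely many such rounds, independently of whether~$\cP$ ever schedules~$p$ at a regular round. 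The endgame is the packet that becomes pending on the transmission that empties the queue: that packet was dequeued from a then-nonempty queue, so by the invariant a reservation is already on record for it, and it is transmitted either earlier at a regular round scheduled by~$\cP$ (after which $p$ cancels by action~(2) and its new pending is a dummy) or at that reserved round itself; either way $p$ is left holding only a dummy, i.e.\ no real packet.

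The step I expect to be the main obstacle is the interaction of the additional buffer with the reservation rule, since reservations are keyed on the \emph{queue} while real packets of~$p$ may first sit in the buffer (they arrive there when injected at reserved rounds preceding~$t$). I would close this gap using Lemma~\ref{lem:infinitely-regular-rounds}: regular rounds recur infinitely often, and at each regular round the buffer is flushed into the queue, so after the first regular round following~$t$ the buffer is permanently empty and every real packet of~$p$ resides in its queue. The delicate point, which I would treat carefully, is the synchronization at that flush---making sure that once the buffer contents have been moved into the queue the backlog is genuinely covered by the reservation chain, so that $p$ cannot be stranded with real packets in its queue and no live reservation while~$\cP$ perpetually declines to transmit it. Once this is settled, combining the reservation-chain argument with the finiteness and monotonicity of~$M$ yields that $p$ eventually holds no real packet, which is exactly the retaining property.
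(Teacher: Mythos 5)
Your proposal follows the same route as the paper's proof, which is a four-sentence argument: a transmission with a nonempty queue plants a reservation, the reserved round is collision-free by construction, so the station transmits there and re-reserves for as long as its queue is nonempty, and hence the queue drains. Your elaboration of this chain --- the monotone counter $M$ of real packets, the invariant that a live reservation persists while the queue is nonempty, and the use of Lemma~\ref{lem:infinitely-regular-rounds} to argue that the additional buffer is flushed and then stays empty once injections into $p$ cease --- is sound and considerably more careful than the published proof.

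However, the step you flag and defer (``once this is settled'') is a genuine loose end, not a formality, and as written your proof does not establish the lemma without it. The reservation rule keys on the \emph{queue} alone: a station reserves only when it transmits with a nonempty queue. Consider a round $r\ge t$ at which $p$ transmits its pending packet while its queue is empty but its additional buffer still holds real packets (injected at reserved rounds after the last regular round preceding $t$ and not yet flushed). By rules (2)--(3), $p$ makes no reservation and cancels any it holds; it then acquires a real pending packet from the buffer (at a reserved round) or from the flushed queue (at the next regular round), and is left holding real packets with no reservation on record. From then on $p$ transmits only if the simulated $\cP$ schedules it at a regular round, which an arbitrary $\cP$ need never do, so $p$ could retain a pending packet forever. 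Closing the gap requires either reading the reservation rule as triggered by $p$ holding any further real packet (in queue \emph{or} buffer), or an argument that the configuration ``empty queue, nonempty buffer at a transmission round'' cannot arise; you supply neither. To be fair, the paper's own proof does not address this case either (it says only ``unless the queue is empty'' and moves on), so the omission is shared --- but a self-contained proof must resolve it.
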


\begin{proof}
Suppose that station~$p$ transmits at a round~$t$ and that afterwards the adversary does not inject any packet into~$p$.
Station~$p$ performs a round reservation by way of the message transmitted at round~$t$, unless the queue is empty.
The station keeps reserving the channel in each subsequent transmission at a reserved round, for as long as the queue is nonempty. 
The mechanism of reservation provides that all the packets at the station get transmitted eventually.
\end{proof}


\begin{corollary}
\label{cor:impossible-queue-size-oblivious-stable}

No queue-size oblivious algorithm can be stable in a system of at least four stations against the   window adversary of burstiness~$2$.
\end{corollary}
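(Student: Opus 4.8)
The plan is to obtain the corollary as a direct reduction to the impossibility result for retaining algorithms, Theorem~\ref{thm:impossible-retaining-stable}, using the transformation $\cP \mapsto \cP_h$ and its two key properties established just above. The argument is by contradiction, and all the genuine work has already been absorbed into the supporting lemmas, so the body of the proof is a short chaining of implications.

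First I would suppose, toward a contradiction, that some queue-size oblivious algorithm~$\cP$ is stable in a system of $n \ge 4$ stations against the window adversary of burstiness~$2$. The strategy is then to exhibit a \emph{retaining} algorithm that is stable in the same setting, which is exactly what Theorem~\ref{thm:impossible-retaining-stable} forbids. The natural candidate is the simulating algorithm~$\cP_h$ produced by the $h$-transformation: by Lemma~\ref{lem:retaining}, $\cP_h$ is retaining for any input algorithm, and in particular for this~$\cP$.

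Next I would transfer stability from~$\cP$ to~$\cP_h$. By Lemma~\ref{lem:h-transformation}, since $\cP$ is queue-size oblivious and stable against the window adversary of burstiness~$2$ in this system of $n$ stations, the algorithm~$\cP_h$ is likewise stable against the window adversary of burstiness~$2$ in the same system; note that Lemma~\ref{lem:infinitely-regular-rounds} is what legitimizes the pruning of reserved rounds used in Lemma~\ref{lem:h-transformation}, guaranteeing that the pruned execution is infinite, so the reduction is well defined. At this point $\cP_h$ is simultaneously retaining and stable for $n \ge 4$ stations against the window adversary of burstiness~$2$, contradicting Theorem~\ref{thm:impossible-retaining-stable}. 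Hence no such~$\cP$ can exist, which is the claim.

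There is essentially no obstacle left in the chaining itself; the only point deserving care is that the \emph{burstiness} and the \emph{adversary class} are preserved by the transformation, so that the hypothesis of Theorem~\ref{thm:impossible-retaining-stable} is met verbatim. This is precisely the content of Lemma~\ref{lem:h-transformation}, whose proof uses queue-size obliviousness twice: once so that removing the transmitted packet at each reserved round does not alter the transmitting station's decision at the following round, and once so that re-injecting that removed packet at the next round (charged against burstiness~$2$) does not destabilize the pruned execution. Since those are exactly the two clauses in the definition of queue-size obliviousness, I would simply cite Lemma~\ref{lem:h-transformation} rather than re-derive them, and conclude.
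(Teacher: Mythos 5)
Your proposal is correct and follows exactly the paper's own argument: assume a stable queue-size oblivious algorithm $\cP$, invoke Lemma~\ref{lem:h-transformation} to transfer stability to $\cP_h$, and contradict Theorem~\ref{thm:impossible-retaining-stable} via Lemma~\ref{lem:retaining}. The additional remarks about Lemma~\ref{lem:infinitely-regular-rounds} and the preservation of burstiness are accurate but already subsumed in the cited lemmas.
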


\begin{proof}
Let $\cP$ be a queue-size oblivious algorithm that  is stable in a given system of $n\ge 4$ stations against the window adversary of burstiness~$2$.
Then, by Lemma~\ref{lem:h-transformation}, algorithm~$\cP_h$ is also stable in the same system of $n$ stations against the window adversary of burstiness~$2$.
This contradicts Theorem~\ref{thm:impossible-retaining-stable}, because algorithm~$\cP_h$ is retaining by Lemma~\ref{lem:retaining}.
\end{proof}

Algorithm \textsc{Move-Big-To-Front} makes a station inform other stations if it has at least $n$ packets available, which grants the station the right to transmit again at the next round.
It follows from Corollary~\ref{cor:impossible-queue-size-oblivious-stable} that any algorithm stable in large systems needs to resort to a similar mechanism of deciding on transmissions based on the sizes of queues.

\section{Conclusion}

\label{sec:conclusion}

We studied deterministic broadcasting algorithms for multiple access channels that need to handle the injection rate of one packet per round.
The question we addressed was what quality of service could be obtained along with throughput~$1$.
We attempted to provide a comprehensive picture of the problem.
We emphasized two aspects.
One of them concerns the power of acknowledgment based algorithms and full sensing ones.
These subclasses of algorithms are natural to define and their namesakes play a prominent role among randomized algorithms.
The other aspect concerns a comparison of the environments determined by window adversaries and leaky-bucket ones.
It turns out that a combination of an adversarial model and a subclass of algorithms determines a unique quality of service that can be achieved in the respective broadcasting environment, depending on the number of stations.
Systems of surprisingly small sizes of just two or three stations are sufficient for  differences in the associated quality of service to be manifested.

The  subclasses of acknowledgment based and full sensing deterministic algorithms we consider are motivated by the corresponding randomized classes of algorithms, see~\cite{Gallager85}.
Historically, investigating algorithmic aspects of broadcasting on multiple access channels has concentrated on randomized algorithms in environments determined by  stochastic constraints on injection rates, with stability understood as either ergodicity of a Markov Chain representing the broadcast environment or as having throughput equal to the injection rate.
The goal was to find the maximum rate for which stability or bounded latency was achievable.
This paper considers deterministic algorithms in adversarial settings.
We show that, as far as the maximum throughput~$1$ is concerned, acknowledgment based algorithms and full sensing ones cannot achieve much, except for systems of just a few stations.
On the other hand, we show that a general deterministic algorithm can achieve the maximum throughput~$1$, although no such an algorithm can simultaneously guarantee bounded latency.

How to compare these results to those in the literature about randomized algorithms, including environments with stochastic constraints on injections?
The authors of this paper do not know any prior work that shows that a randomized algorithm under stochastic assumptions on dynamic packet generation has throughput~$1$.
As randomness can be understood as an additional resource to help an algorithm, it is  surprising that a deterministic algorithm can achieve throughput~$1$ in the worst case.
Apparently stability in the stochastic sense versus stability in the adversarial sense capture different phenomena and stem from different intuitions related to a vague idea of quality of service provided by bounded queues.


\bibliographystyle{plain}

\bibliography{bogdan,books,networks}

\end{document}